\newif{\ifarxiv}
\newcommand{\ifnotarxiv}{\ifarxiv\else}
\newtheorem{theorem}{Theorem}[section]
\newaliascnt{lemma}{theorem}
\newtheorem{lemma}[lemma]{Lemma}
\newaliascnt{corollary}{theorem}
\newtheorem{corollary}[corollary]{Corollary}
\theoremstyle{definition}
\newaliascnt{definition}{theorem}
\newtheorem{definition}[definition]{Definition}
\newcommand{\naturals}{\mathbb{N}}
\newcommand{\pom}{\ensuremath{\mathsf{Pom}_\Sigma}}
\newcommand{\pomsp}{\ensuremath{\pom^\mathsf{sp}}}
\newcommand{\width}[1]{\|#1\|}
\newcommand{\depth}[1]{\left(\!\left|#1\right|\!\right)}
\newcommand{\terms}{\ensuremath{{\mathcal{T}_\Sigma}}}
\newcommand{\qterms}{\ensuremath{{\mathcal{Q}_\Sigma}}}
\DeclareFontFamily{U}{matha}{}
\DeclareFontShape{U}{matha}{m}{n}{%
  <-5.5>    matha5
  <5.5-6.5> matha6
  <6.5-7.5> matha7
  <7.5-8.5> matha8
  <8.5-9.5> matha9
  <9.5-11>  matha10
  <11->     matha12
}{}
\DeclareSymbolFont{matha}{U}{matha}{m}{n}
\DeclareFontFamily{U}{mathx}{\hyphenchar\font45}
\DeclareFontShape{U}{mathx}{m}{n}{<-> mathx10}{}
\DeclareSymbolFont{mathx}{U}{mathx}{m}{n}
\DeclareMathDelimiter{\ldbrack}{4}{matha}{"76}{mathx}{"30}
\DeclareMathDelimiter{\rdbrack}{5}{matha}{"77}{mathx}{"38}
\newcommand{\sem}[1]{\left\ldbrack#1\right\rdbrack}
\newcommand{\project}[2]{{#1}_{#2}}
\newcommand{\atrace}[2][]{\mathrel{\raisebox{-3pt}{$\xrightarrow[#1]{#2}$}}}
\newcommand{\artrace}[2][]{\mathrel{\raisebox{-3pt}{$\xrightarrow[#1]{#2}\mkern-14mu\rightarrow$}}}
\newcommand{\tracerel}{\mathrel{\rightarrow}}
\newcommand{\rtracerel}{\mathrel{\rightarrow\mkern-14mu}\rightarrow}
\newcommand{\lmbrace}{\{\mskip-4mu|}
\newcommand{\rmbrace}{|\mskip-4mu\}}
\newcommand{\mset}[1]{\lmbrace#1\rmbrace}
\newcommand{\angl}[1]{\left\langle#1\right\rangle}
\newcommand{\restr}[2]{\ensuremath{#1\!\upharpoonright_{#2}}}
\newcommand{\satrace}[1]{\atrace{#1}_\Sigma}
\newcommand{\sartrace}[1]{\artrace{#1}_\Sigma}
\newcommand{\satracerel}{\tracerel_\Sigma}
\newcommand{\sartracerel}{\rtracerel_\Sigma}
\newcommand{\ssderiv}{\delta_\Sigma}
\newcommand{\psderiv}{\gamma_\Sigma}
\newcommand{\sacc}{F_\Sigma}
\newcommand{\slang}{L_\Sigma}
\newcommand{\qsatrace}[1]{\atrace{#1}_\simeq}
\newcommand{\qsatracerel}{\tracerel_\simeq}
\newcommand{\qssderiv}{\delta_\simeq}
\newcommand{\qpsderiv}{\gamma_\simeq}
\newcommand{\qsacc}{F_\simeq}
\newcommand{\qssupport}{\pi_\simeq}
\newcommand{\qslang}{L_\simeq}
\newcommand{\pipe}{\;|\;}
\pgfmathsetmacro{\scale}{0.7}
\pgfmathsetmacro{\scale}{0.6}
\title{Brzozowski Goes Concurrent {\Large A~Kleene~Theorem~for~Pomset~Languages}}
\author[1]{Tobias Kapp\'e\thanks{\texttt{tkappe@cs.ucl.ac.uk}}}
\date{}
\author[1]{Tobias Kapp\'e}
\author[1]{Paul Brunet}
\author[2]{Bas Luttik}
\author[1]{Alexandra Silva}
\author[1]{Fabio Zanasi}
\affil[1]{University College London, London, United Kingdom}
\affil[1]{University College London, London, United Kingdom \\ \texttt{tkappe@cs.ucl.ac.uk}}
\affil[2]{Eindhoven University of Technology, Eindhoven, The Netherlands}
\titlerunning{Brzozowski Goes Concurrent --- A~Kleene~Theorem~for~Pomset~Languages}
\authorrunning{T. Kapp\'e, P. Brunet, B. Luttik, A. Silva and F. Zanasi}
\subjclass{D.1.3 Parallel Programming, F.1.1 Models of Computation, F.1.2 Modes of Computation, F.4.3 Formal Languages}
\keywords{Kleene theorem, Series-rational expressions, Automata, Brzozowski derivatives, Concurrency, Pomsets}
\newenvironment{arxivproof}{\proof}{\endproof}
\begin{document}

\maketitle

\begin{abstract}
Concurrent Kleene Algebra (CKA) is a mathematical formalism to study programs that exhibit concurrent behaviour.
As with previous extensions of Kleene Algebra, characterizing the free model is crucial in order to develop the foundations of the theory and potential applications.
For CKA, this has been an open question for a few years and this paper makes an important step towards an answer.
We present a new automaton model and a Kleene-like theorem that relates a relaxed version of CKA to series-parallel pomset languages, which are a natural candidate for the free model.
There are two substantial differences with previous work: from expressions to automata, we use Brzozowski derivatives, which enable a direct construction of the automaton; from automata to expressions, we provide a syntactic characterization of the automata that denote valid CKA behaviours.
\end{abstract}

\section{Introduction}

In their CONCUR'09 paper~\cite{hoare-moeller-struth-wehrman-2009}, Hoare, M\"{o}ller, Struth, and Wehrman introduced Concurrent Kleene Algebra (CKA) as a suitable mathematical framework to study concurrent programs, in the hope of achieving the same elegance that Kozen did when using Kleene Algebra (and extensions) to provide a verification platform for sequential programs.

CKA is a seemingly simple extension of Kleene Algebra (KA): it adds a parallel operator that allows to specify concurrent behaviours compositionally.
However, extending the existing KA toolkit --- importantly, completeness and decidability results --- turns out to be challenging.
A fundamental missing ingredient is a characterization of the free model for CKA\@.
This is in striking contrast with KA, where these topics are well understood.
Several authors~\cite{hoare-staden-moeller-struth-zhu-2016,jipsen-moshier-2016} have conjectured the free model to be series-parallel pomset languages --- a generalization of regular languages to sets of partially ordered words.

In KA, Kleene's theorem provided a pillar for developing the toolkit and axiomatization~\cite{kozen-1997}, and, by extension, characterizing the free model.
In this light, we pursue a Kleene Theorem for CKA\@.
Specifically, we study series-rational expressions, with a denotational model in terms of pomset languages.
Our main contribution is a Kleene Theorem for series-rational expressions, based on constructions faithfully translating between the denotational model and a newly defined operational model, which we call \emph{pomset automata}.
In a nutshell, these are finite-state automata in which computations from a certain state $s$ may branch into parallel threads that contribute to the language of $s$ whenever they both reach a final state.

\medskip
We are not the first to attempt such a Kleene theorem.
However, earlier works~\cite{lodaya-weil-2000,jipsen-moshier-2016} fall short of giving a precise correspondence between the denotational and operational models, due to the lack of a suitable automata restriction ensuring that only valid behaviours are accepted.
We overcome this situation by introducing a generalization of Brzozowski derivatives~\cite{brzozowski-1964} in the translation from expressions to automata.
This guides us to a \emph{syntactic} restriction on automata (rather than the \emph{semantic} condition put forward in previous works), which guarantees the existence of a reverse construction, from automata to expressions.
Moreover, following the Brzozowski route allows us to bypass a Thompson-like construction~\cite{thompson-1968}, avoiding the introduction of $\epsilon$-transitions and non-determinism present in the aforementioned works.

Since series-parallel expressions do not include the parallel analogue of the Kleene star (the ``parallel star''), and our denotational model is not sound for the exchange law (which governs the interaction between sequential and parallel composition), our contribution is most accurately described as an operational model for \emph{weak Bi-Kleene Algebra}.
We leave it to future work to extend our construction to work with a denotational model that is sound for the exchange law (thus moving to \emph{weak Concurrent Kleene Algebra}), as well as add the parallel star operator (arriving at Concurrent Kleene Algebra proper).

The remainder of this paper is organized as follows.
In Section~\ref{section:preliminaries}, we introduce the necessary notation.
In Section~\ref{section:pomset-automata}, we introduce our automaton model as well as some notable subclasses of automata.
In Section~\ref{section:expressions-to-automata}, we discuss how to translate a series-rational expression to a semantically equivalent pomset automaton, while in Section~\ref{section:automata-to-expressions} we show how to translate a suitably restricted class of pomset automata to series-rational expressions.
We contrast results with earlier work in Section~\ref{section:discussion}.
Directions for further work in are listed in Section~\ref{section:further-work}.

\ifarxiv%
To preserve the flow of the narrative, proofs of the more routine lemmas appear in Appendix~\ref{appendix:proofs}.
\else%
To save space, proofs of lemmas are omitted from this paper.
For a discussion that includes a proof for every lemma, we refer to the extended version of this paper~\cite{kappe-brunet-luttik-silva-zanasi-2017-arxiv}.
\fi%

\subparagraph*{Acknowledgements}
We thank the anonymous reviewers for their insightful comments.
This work was partially supported by the ERC Starting Grant ProFoundNet (grant code 679127).

\section{Preliminaries}%
\label{section:preliminaries}

Let $S$ be a set; we write $2^S$ for the set of all subsets of $S$, and $\binom{S}{2}$ for the set of \emph{multisets} over $S$ of size two.
An element of $\binom{S}{2}$ containing $s_1, s_2 \in S$ is written $\mset{s_1, s_2}$; note that $\mset{s_1, s_2} = \mset{s_2, s_1}$, and that $s_1$ may be the same as $s_2$.
We use the symbols $\phi$ and $\psi$ to denote multisets.
If $S$ and $I$ are sets, and for every $i \in I$ there exists an $s_i \in S$, we call ${(s_i)}_{i \in I}$ an \emph{$I$-indexed family over $S$}.
We say that a relation $\prec\ \subseteq S \times S$ is a \emph{strict order} on $S$ if it is irreflexive and transitive.
We refer to $\prec$ as \emph{well-founded} if there are no infinite descending $\prec$-chains, i.e., no family ${(s_n)}_{n\in\naturals}$ over $S$ such that $\forall n \in \naturals, s_{n+1}\prec s_n$.
Throughout the paper we fix a finite set $\Sigma$ called the \emph{alphabet}, whose elements are symbols usually denoted with $a$ and $b$.
Lastly, if $\rightarrow\ \subseteq X \times Y \times Z$ is a ternary relation, we write $x \xrightarrow{y} z$ instead of $\angl{x, y, z} \in\ \rightarrow$.

\subsection{Pomsets}

\emph{Partially-ordered multisets}, or \emph{pomsets}~\cite{gischer-1988} for short, generalise words to a setting where events (elements from $\Sigma$) may take place not just sequentially, but also in parallel.

\begin{definition}%
\label{definition:pomset}
A \emph{labelled poset} is a tuple $\angl{U, \leq_U, \lambda_U}$ consisting of a \emph{carrier} set $U$, a partial order $\leq_U$ on $U$ and a \emph{labelling} function $\lambda_U: U \to \Sigma$.
A \emph{labelled poset isomorphism} is a bijection between poset carriers that bijectively preserves the labels and the ordering.
A \emph{pomset} is an isomorphism class of labelled posets; equivalently, it is a labelled poset up-to bijective renaming of elements in $U$.
We write $1$ for the empty pomset, $\pom$ for the set of all pomsets and $\pom^+$ for the set of all the non-empty pomsets.
\end{definition}

For instance, suppose a recipe for caramel-glazed cookies tells us to
\begin{inparaenum}[(i)]
    \item\label{recipe:dough} \emph{prepare} cookie dough
    \item\label{recipe:bake} \emph{bake} cookies in the oven
    \item\label{recipe:caramelize} \emph{caramelize} sugar
    \item\label{recipe:glaze} \emph{glaze} the finished cookies.
\end{inparaenum}
Here, step~\eqref{recipe:dough} precedes steps~\eqref{recipe:bake} and~\eqref{recipe:caramelize}.
Furthermore, step~\eqref{recipe:glaze} succeeds both steps~\eqref{recipe:bake} and~\eqref{recipe:caramelize}.
A pomset representing this process could be $\angl{C, \leq_C, \lambda_C}$, where $C = \{\eqref{recipe:dough},\eqref{recipe:bake},\eqref{recipe:caramelize},\eqref{recipe:glaze}\}$ and $\leq_C$ is such that $\eqref{recipe:dough}\leq_C\eqref{recipe:bake}\leq_C\eqref{recipe:glaze}$ and $\eqref{recipe:dough}\leq_C\eqref{recipe:caramelize}\leq_C\eqref{recipe:glaze}$; $\lambda_C$ is as in the recipe.

\begin{figure}
    \centering
    \begin{subfigure}[b]{0.2\textwidth}
        \centering
        \begin{tikzpicture}[scale=\scale,every node/.style={transform shape}]
            \node (dough) {\textsf{prepare}};
            \node[above right=10mm of dough,anchor=center] (bake) {\textsf{bake}};
            \node[below right=10mm of dough,anchor=center] (caramelize) {\textsf{caramelize}};
            \node[right=20mm of dough,anchor=center] (glaze) {\textsf{glaze}};
            \draw[->] (dough) edge (bake);
            \draw[->] (dough) edge (caramelize);
            \draw[->] (bake) edge (glaze);
            \draw[->] (caramelize) edge (glaze);
        \end{tikzpicture}
        \caption{Diagram for $C$.}\label{figure:hasse-diagram-cookies}
    \end{subfigure}
    \begin{subfigure}[b]{0.37\textwidth}
        \centering
        \begin{tikzpicture}[scale=\scale,every node/.style={transform shape}]
            \node (dough) {\textsf{prepare}};
            \node[above right=10mm of dough,anchor=center] (bake) {\textsf{bake}};
            \node[below right=10mm of dough,anchor=center] (caramelize) {\textsf{caramelize}};
            \node[right=20mm of dough,anchor=center] (glaze) {\textsf{glaze}};
            \draw[->] (dough) edge (bake);
            \draw[->] (dough) edge (caramelize);
            \draw[->] (bake) edge (glaze);
            \draw[->] (caramelize) edge (glaze);
            \node[right=10mm of glaze,anchor=center] (new-dough) {\textsf{prepare}};
            \node[above right=10mm of new-dough,anchor=center] (new-bake) {\textsf{bake}};
            \node[below right=10mm of new-dough,anchor=center] (new-caramelize) {\textsf{caramelize}};
            \node[right=20mm of new-dough,anchor=center] (new-glaze) {\textsf{glaze}};
            \draw[->] (new-dough) edge (new-bake);
            \draw[->] (new-dough) edge (new-caramelize);
            \draw[->] (new-bake) edge (new-glaze);
            \draw[->] (new-caramelize) edge (new-glaze);
            \draw[->] (glaze) edge (new-dough);
        \end{tikzpicture}
        \caption{Diagram for $C \cdot C$.}\label{figure:hasse-diagram-cookies-repeated}
    \end{subfigure}
    \begin{subfigure}[b]{0.37\textwidth}
        \begin{tikzpicture}[scale=\scale,every node/.style={transform shape}]
            \node[state] (q0) {$q_0$};
            \node[state,right=15mm of q0] (q1) {$q_1$};
            \node[state,above right=7mm of q1] (q3) {$q_3$};
            \node[state,below right=7mm of q1] (q4) {$q_4$};
            \node[state,accepting,right=20mm of q3] (q5) {$q_5$};
            \node[state,accepting,right=20mm of q4] (q6) {$q_6$};
            \node[state,right=15mm of q1] (q2) {$q_2$};
            \node[state,accepting,right=15mm of q2] (q7) {$q_7$};

            \draw[->] (q0) edge node[above] {\textsf{prepare}} (q1);
            \draw[-] (q1) edge (q3);
            \draw[-] (q1) edge (q4);
            \draw[dashed,->] (q1) edge (q2);
            \draw[->] (q3) edge node[above] {\textsf{bake}} (q5);
            \draw[->] (q4) edge node[below] {\textsf{caramelize}} (q6);
            \draw (q1) + (-45:7mm) arc (-45:45:7mm);
            \draw[->] (q2) edge node[above] {\textsf{glaze}} (q7);
        \end{tikzpicture}
        \caption{PA accepting $C$.}\label{figure:example-pa}
    \end{subfigure}
    \caption{Hasse diagrams for pomsets and a pomset automaton accepting one.}
\end{figure}
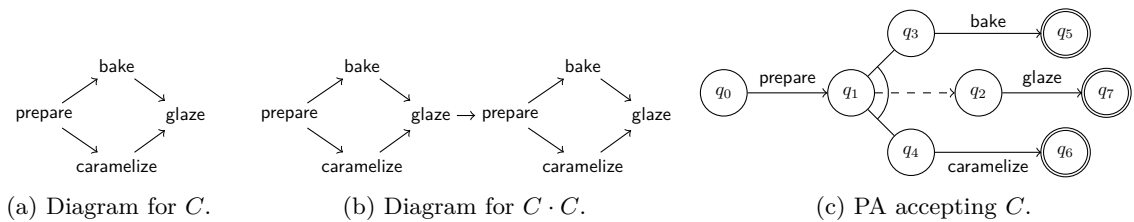
Note that words are just finite pomsets with a total order.
We will sometimes use $a \in \Sigma$ to refer to the pomset with a single point labelled $a$ (and the obvious order); such a pomset is called \emph{primitive}.
A pomset can be represented as a Hasse diagram, where nodes have labels in $\Sigma$.
For instance, the Hasse diagram for the pomset $C$ above is drawn in Figure~\ref{figure:hasse-diagram-cookies}.

To simplify notation, we refer to a pomset by the carrier $U$ of a labelled poset $\angl{U, \leq_U, \lambda_U}$ in its isomorphism class.
We use the symbols $U$, $V$, $W$ and $X$ to denote pomsets.
Pomsets being isomorphism classes, the content of the carrier of the chosen representative is of very little importance; it is the order and labelling that matters.
For this reason, we tacitly assume that whenever we have two pomsets, we pick representatives that have disjoint carrier sets.

\begin{definition}%
\label{definition:pomset-width}
The \emph{width} of a pomset $U$, denoted $\width{U}$, is the size of the largest antichain in $U$ with respect to $\leq_U$, i.e., the maximum $n \in \naturals$ such that there exist $u_1, u_2, \dots, u_n \in U$ that are not related by $\leq_U$.
\end{definition}

The pomsets we work with in this paper have a finite carrier.
As a result, $\width{U}$ is always defined.
For instance, the width of the pomset $C$ above is $2$, because the nodes $\eqref{recipe:bake}$ and $\eqref{recipe:caramelize}$ are an antichain of size $2$, and there is no antichain of size $3$.

\begin{definition}%
\label{definition:pomset-composition}
Let $U$ and $V$ be pomsets.
The \emph{sequential composition} of $U$ and $V$, denoted $U \cdot V$, is the pomset $\angl{U \cup V, \leq_U \cup \leq_V \cup\, (U \times V), \lambda_U \cup \lambda_V}$.
The \emph{parallel composition} of $U$ and $V$, denoted $U \parallel V$, is the pomset $\angl{U \cup V, \leq_U \cup \leq_V, \lambda_U \cup \lambda_V}$.
Here, $\lambda_U \cup \lambda_V$ is the function from $U \cup V$ to $\Sigma$ that agrees with $\lambda_U$ on $U$, and with $\lambda_V$ on $V$.
\end{definition}
Note that $1$ is the unit for both sequential and parallel composition.
Sequential composition forces the events in the left pomset to be ordered before those in the right pomset.
An example, describing the pomset $C \cdot C$, is depicted in Figure~\ref{figure:hasse-diagram-cookies-repeated}.

\begin{definition}%
\label{definition:pomset-series-parallel}
The set of \emph{series-parallel} pomsets, $\pomsp$, is the smallest set that includes the empty and primitive pomsets and is closed under sequential and parallel composition.
\end{definition}

In this paper we will be mostly concerned with series-parallel pomsets.
For inductive reasoning about them, it is useful to record the following lemma.
\begin{lemma}%
\label{lemma:pomset-unique-factorization}
Let $U \in \pomsp$.
If $U$ is non-empty, then \emph{exactly one} of the following is true:
\begin{inparaenum}[(i)]
    \item\label{case:decompose-primitive} $U = a$ for some $a \in \Sigma$, or
    \item\label{case:decompose-sequential} $U = V \cdot W$ for non-empty $V, W \in \pomsp$, strictly smaller than $U$, or
    \item\label{case:decompose-parallel} $U = V \parallel W$ for non-empty $V, W \in \pomsp$, strictly smaller than $U$.
\end{inparaenum}
\end{lemma}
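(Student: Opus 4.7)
My plan is to handle the lemma in three stages: first, establish that at least one of the cases holds (existence); second, verify the non-emptiness and strict-smallness conditions in cases (ii) and (iii); third, rule out the possibility of two cases holding simultaneously.

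For existence, I would exploit that $\pomsp$ is the \emph{smallest} set closed under the stated rules, so every $U \in \pomsp$ admits a finite derivation from these rules. Pick any derivation of $U$ of minimal length. Since $U$ is non-empty, it cannot come from the empty base case. If it comes from a primitive base case, then $U = a$ for some $a \in \Sigma$, giving case (i). Otherwise the last rule applies $\cdot$ or $\parallel$ to pomsets $V, W \in \pomsp$. If either $V$ or $W$ were the empty pomset, then $U$ would equal the other factor, whose sub-derivation would be a strictly shorter derivation of $U$, contradicting minimality. Hence both $V$ and $W$ are non-empty, so case (ii) or case (iii) holds. Non-emptiness of both factors immediately gives that the carriers of $V$ and $W$ are each strictly smaller than that of $U$, so both are strictly smaller under the well-founded order given by carrier size.

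For mutual exclusion, case (i) is distinguished from (ii) and (iii) by cardinality: a primitive pomset has exactly one element, while a non-trivial composition of two non-empty pomsets has at least two. The interesting case is (ii) versus (iii). Suppose for contradiction that $U = V \cdot W = V' \parallel W'$ with all four factors non-empty. Pick $x \in V'$ and $y \in W'$; since $\parallel$ introduces no cross-order, $x$ and $y$ are incomparable in $U$. In $V \cdot W$ every element of $V$ strictly precedes every element of $W$, so $x$ and $y$ cannot lie on opposite sides of the $V/W$ split; say both lie in $V$. Now pick any $w \in W$, so that $x <_U w$ and $y <_U w$. Transferring these to $V' \parallel W'$, the only way $x <_U w$ can hold is if $w \in V'$ (since $x \in V'$ and the two carriers are disjoint); symmetrically, $y <_U w$ forces $w \in W'$; this contradicts $V' \cap W' = \emptyset$. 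The symmetric case where $x, y \in W$ is identical with some $v \in V$ in place of $w$.

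The main obstacle is the mutual exclusion of (ii) and (iii): it requires simultaneously juggling two decompositions and selecting three witness elements in order to derive a contradiction from the cross-order that $\cdot$ induces but $\parallel$ does not. The existence argument and the size conditions follow routinely from the inductive presentation of $\pomsp$ and the definitions of sequential and parallel composition.
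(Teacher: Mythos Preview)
Your proof is correct and follows essentially the same approach as the paper's: structural induction (phrased by you as a minimal-derivation argument) for existence, a cardinality observation to separate (i) from (ii)/(iii), and a three-witness contradiction for (ii) versus (iii). The only cosmetic difference is that in the (ii)-vs-(iii) case you start from incomparable witnesses in the parallel decomposition and push them into the sequential one, while the paper does the dual---starting from comparable witnesses in the sequential decomposition and pushing them into the parallel one.
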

\begin{arxivproof}
We first show that at least one case must hold.
The proof proceeds by induction on the construction of $\pomsp$.
In the base, $U$ is a primitive pomset (since it is non-empty), and thus $U = a$ for some $a \in \Sigma$;~\eqref{case:decompose-primitive} holds.
In the inductive step, there are two cases to consider.
\begin{itemize}
    \item If $U = V \cdot W$ for $V$ and $W$ series-parallel, suppose that either $V$ or $W$ is empty.
    Then $U = W$ or $U = V$ respectively, and thus the claim follows by induction.
    If neither $V$ nor $W$ is empty, they must both be strictly smaller than $U$, and thus~\eqref{case:decompose-sequential} holds.
    \item If $U = V \parallel W$ for $V$ and $W$ series-parallel, suppose that either $V$ or $W$ is empty.
    Then $U = W$ or $U = V$ respectively, and thus the claim follows by induction.
    If neither $V$ nor $W$ is empty, they must both be strictly smaller than $U$, and thus~\eqref{case:decompose-parallel} holds.
\end{itemize}

To see that at most one of~\eqref{case:decompose-primitive},~\eqref{case:decompose-sequential} and~\eqref{case:decompose-parallel} can hold, assume that at least two of them hold.
There are three combinations to consider; we derive a contradiction for each.
\begin{itemize}
    \item If~\eqref{case:decompose-primitive} and~\eqref{case:decompose-sequential} hold, then $a = U = V \cdot W$ for some $a \in \Sigma$ and non-empty $V, W \in \pomsp$, strictly smaller than $U$.
    But then $V$ and $W$ must both be empty (since $U = a$, and thus $U$ must be of size one), and thus it follows that $U$ must be empty as well --- a contradiction.
    \item If~\eqref{case:decompose-primitive} and~\eqref{case:decompose-parallel} hold, then a contradiction is reached by an argument similar to the above.
    \item If~\eqref{case:decompose-sequential} and~\eqref{case:decompose-parallel} hold, then $U = V_1 \cdot W_1$ and $U = V_2 \parallel W_2$ for non-empty $V_1, W_1, V_2, W_2 \in \pomsp$, all strictly smaller than $U$.
    Suppose $v$ is in $V_1$, and $w$ is in $W_1$ (such a $v$ and $w$ exist, since $V_1$ and $W_1$ are non-empty).
    Then $v$ is ordered before $w$ in $V_1 \parallel W_2$, since $V_1 \cdot W_1 = V_2 \parallel W_2$.
    But then $v$ and $w$ are either both in $V_2$ or both in $W_2$ --- if they were not, this would contradict the definition of parallel composition.
    Suppose $v$ and $w$ are both in $V_2$, and let $w'$ be in $W_2$ (such a $w'$ exists, since $W_2$ is non-empty).
    Then $w'$ is ordered neither before nor after $v$ or $w$ in $V_2 \parallel W_2$, and thus the same holds in $V_1 \cdot W_2$.
    But then $w'$ is not in $V_1$ (otherwise, $w'$ would be ordered before $w$), nor in $W_1$ (otherwise, $w'$ would be ordered after $v$).
    This contradicts that $V_1 \cdot W_1 = V_2 \parallel W_2$.
    A similar argument can be made for the case where $v$ and $w$ are both in $W_2$. \qedhere
\end{itemize}
\end{arxivproof}

\subsection{Pomset languages}

If a sequential program can exhibit multiple traces, we can group the words that represent these traces into a set called a \emph{language}.
By analogy, we can group the pomsets that represent the traces that arise from a parallel program into a set, which we refer to as a \emph{pomset language}.
Pomset languages are denoted by the symbols $\mathcal{U}$ and $\mathcal{V}$.

For instance, suppose that the recipe for glazed cookies may has an optional fifth step where chocolate sprinkles are spread over the cookies.
In that case, there are \emph{two} pomsets that describe a trace arising from the recipe, $C^+$ and $C^-$, either with or without the chocolate sprinkles.
The pomset language $\mathcal{C} = \{ C^-, C^+ \}$ describes the new recipe.

\begin{definition}%
\label{defintition:pomset-language-width}
Let $\mathcal{U}$ be a pomset language.
$\mathcal{U}$ has \emph{bounded width} if there is $n \in \naturals$ such that for all $U \in \mathcal{U}$ we have $\width{U} \leq n$.
The minimal such $n$ is the \emph{width} of $\mathcal{U}$, written $\width{\mathcal{U}}$.
\end{definition}
The pomset languages considered in this paper have bounded width, and hence $\width{\mathcal{U}}$ is always defined.
For instance, the width of $\mathcal{C}$ is $2$, because the width of both $C^+$ and $C^-$ is $2$.

The sequential and parallel compositions of pomsets can be lifted to pomset languages.
We also define a Kleene closure operator, similar to the one defined on languages of words.
\begin{definition}%
\label{definition:pomset-language-composition}
Let $\mathcal{U}$ and $\mathcal{V}$ be pomset languages.
We define:
\begin{align*}
\mathcal{U} \cdot \mathcal{V} &= \{ U \cdot V : U \in \mathcal{U}, V \in \mathcal{V} \} &
\mathcal{U} \parallel \mathcal{V} &= \{ U \parallel V : U \in \mathcal{U}, V \in \mathcal{V} \} &
\mathcal{U}^* &= \bigcup_{n \in \naturals} \mathcal{U}^n
\end{align*}
Where $\mathcal{U}^0 = \{ 1 \}$, and $\mathcal{U}^{n+1} = \mathcal{U} \cdot \mathcal{U}^n$ for all $n \in \naturals$.
\end{definition}
Kleene closure models indefinite repetition.
For instance, if our cookie recipe has a final step ``repeat until enough cookies have been made'', the pomset language $\mathcal{C}^*$ represents all possible traces of repetitions of the recipe; e.g., $C^+ \cdot C^+ \cdot C^- \in \mathcal{C}^*$ is the trace where first two batches of sprinkled cookies are made, followed by one without sprinkles.

\subsection{Series-rational expressions}

Just like a rational expression can be used to describe a regular structure of sequential events, a series-rational expression can be used to describe a regular structure of possibly parallel events.
Series-rational expressions are rational expressions with parallel composition.

\begin{definition}%
\label{definition:series-rational-expression}
The \emph{series-rational expressions}, denoted $\terms$, are formed by the grammar
\[e, f ::= 0 \pipe 1 \pipe a \in \Sigma \pipe e + f \pipe e \cdot f \pipe e \parallel f \pipe e^*\]
We use the symbols $d$, $e$, $f$, $g$ and $h$ to denote series-rational expressions.
\end{definition}

The semantics of a series-rational expression is given by a pomset language.

\begin{definition}%
\label{definition:series-rational-expression-semantics}
The function $\sem{-}: \terms \to 2^{\pom}$ is defined inductively, as follows:
\ifarxiv%
\begin{align*}
\sem{0} &= \emptyset \\
\sem{1} &= \{ 1 \} \\
\sem{a} &= \{ a \} \\
\sem{e + f} &= \sem{e} \cup \sem{f} \\
\sem{e \cdot f} &= \sem{e} \cdot \sem{f} \\
\sem{e \parallel f} &= \sem{e} \parallel \sem{f} \\
\sem{e^*} &= \sem{e}^*
\end{align*}
\else%
\begin{mathpar}
\sem{0} = \emptyset    \and
\sem{a} = \{ a \}      \and
\sem{1} = \{ 1 \}      \and
\sem{e^*} = \sem{e}^*  \\
\sem{e + f} = \sem{e} \cup \sem{f} \and
\sem{e \cdot f} = \sem{e} \cdot \sem{f} \and
\sem{e \parallel f} = \sem{e} \parallel \sem{f}
\end{mathpar}
\fi%
If $\mathcal{U} \in 2^{\pom}$ such that $\mathcal{U} = \sem{e}$ for some $e \in \terms$, then $\mathcal{U}$ is a \emph{series-rational language}.
\end{definition}

To illustrate, consider the pomset language $\mathcal{C}^* = {\{ C^+, C^- \}}^*$, which describes the possible traces arising from indefinitely repeating the cookie recipe, optionally adding chocolate sprinkles at every repetition.
We can describe the pomset language $\{ C^- \}$ with the series-rational expression $c^- = \mathsf{prepare} \cdot (\mathsf{bake} \parallel \mathsf{caramelize}) \cdot \mathsf{glaze}$, and $\{ C^+ \}$ by $c^+ = c^- \cdot \mathsf{sprinkle}$, which yields the series-rational expression $c = c^- + c^+$ for $\mathcal{C}$.
By construction, $\sem{c^*} = \mathcal{C}^*$.

\subsection{Additive congruence}

The following congruence on series-rational expressions will be instrumental in analyzing the automaton we introduce in Section~\ref{section:expressions-to-automata}, and for restricting said automaton to be finite in Section~\ref{subsection:bounding}.

\begin{definition}%
\label{definition:congruence}
We define $\simeq$ as the smallest congruence on $\terms$ such that:
\begin{mathpar}
 e_1 + 0 \simeq e_1
\and%
e_1 + e_1 \simeq e_1
\and%
e_1 + e_2 \simeq e_2 + e_1
\and%
e_1 + (e_2 + e_3) \simeq (e_1 + e_2) + e_3
\\
0 \cdot e_1 \simeq 0
\and%
e_1 \cdot 0 \simeq 0
\and%
0 \parallel e \simeq 0
\and%
e \parallel 0 \simeq 0
\end{mathpar}
When $\mset{g, h}, \mset{g', h'} \in \binom{\terms}{2}$ such that $g \simeq g'$ and $h \simeq h'$, we write $\mset{g, h} \simeq \mset{g', h'}$.
\end{definition}
Thus, when we claim that $e \simeq e'$, we say that $e$ is equal to $e'$, modulo associativity, commutativity and idempotence of $+$, as well as its unit $0$, and possibly annihilation of sequential and parallel composition by $0$.
Moreover, this congruence is sound with respect to the semantics, and it identifies all expressions that have an empty denotational semantics.

\begin{restatable}{lemma}{congruencesound}%
\label{lemma:congruence-sound}
Let $e, f \in \terms$.
If $e \simeq f$, then $\sem{e} = \sem{f}$.
Also, $e \simeq 0$ if and only if $\sem{e} = \emptyset$.
\end{restatable}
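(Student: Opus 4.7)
The plan is to prove the two assertions in sequence, leaning on the inductive definitions of both $\simeq$ and $\sem{-}$.

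For the first claim (soundness), I would argue by induction on the derivation of $e \simeq f$. Every generating equation of $\simeq$ corresponds to a basic identity on pomset languages: the four $+$-equations reflect that union is idempotent, commutative, associative, and has $\emptyset$ as unit; and the four annihilation equations reflect that $\emptyset \cdot \mathcal{U} = \mathcal{U} \cdot \emptyset = \emptyset \parallel \mathcal{U} = \mathcal{U} \parallel \emptyset = \emptyset$, which is immediate from Definition~\ref{definition:pomset-language-composition}. For the congruence closure, the inductive definition of $\sem{-}$ ensures that each operator on expressions is interpreted by a function on pomset languages, so semantic equality is preserved when equal subexpressions are substituted into a common context.

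For the second claim, the direction $e \simeq 0 \Rightarrow \sem{e} = \emptyset$ is immediate from soundness, since $\sem{0} = \emptyset$. For the converse, I would induct on the structure of $e$. The cases $e = 1$ and $e = a$ are vacuous since $\sem{1} = \{1\}$ and $\sem{a} = \{a\}$ are both nonempty; the case $e = 0$ is trivial. For $e = e_1 + e_2$, the hypothesis $\sem{e_1} \cup \sem{e_2} = \emptyset$ forces both summands to have empty semantics, so by induction $e_1 \simeq 0 \simeq e_2$, and the congruence rules combined with $0 + 0 \simeq 0$ yield $e \simeq 0$. For $e = e_1 \cdot e_2$ and $e = e_1 \parallel e_2$, emptiness of the product forces at least one factor to have empty semantics; by induction that factor is $\simeq 0$, and one of the annihilation equations closes the case.

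The one case that requires a specific observation is $e = e_1^*$: here $\sem{e^*} \supseteq \sem{e}^0 = \{1\}$ is always nonempty, so the hypothesis $\sem{e} = \emptyset$ cannot hold, making the case vacuous. This is where I expect the only real conceptual step to lie, since it is exactly the reason why the congruence $\simeq$ needs no equation involving the Kleene star, even though one might have naively expected an equation like $0^* \simeq 1$; since $0^*$ is never in the hypothesis of the implication we are proving, we do not need such an equation for this lemma.
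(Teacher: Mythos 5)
Your proposal is correct and follows essentially the same route as the paper: induction on the derivation of $\simeq$ for soundness, and structural induction on $e$ for the converse of the emptiness claim, with the direction $e \simeq 0 \Rightarrow \sem{e} = \emptyset$ obtained as a corollary of soundness. Your explicit treatment of the vacuous cases ($e = 1$, $e = a$, and $e = e_1^*$, the last because $1 \in \sem{e_1^*}$ always) is a small but welcome addition, since the paper's proof lists only $e = 0$ in the base and omits the star case from the inductive step.
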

\begin{arxivproof}
Refer to Appendix~\ref{appendix:proofs}.
\end{arxivproof}
\noindent
There is a simple linear time decision procedure to test whether two expressions are congruent.
This justifies our using this relation to build finite automata later on.
As a by-product, we get that the emptiness problem for series-rational expressions is linear time decidable.
\section{Pomset Automata}%
\label{section:pomset-automata}

We are now ready to describe an automaton model that recognises series-rational languages.

\begin{definition}%
\label{definition:automaton}
A \emph{pomset automaton (PA)} is a tuple $\angl{Q, \delta, \gamma, F}$ where $Q$ is a set of \emph{states}, with $F \subseteq Q$ the \emph{accepting states},
$\delta: Q \times \Sigma \to Q$ is a function called the \emph{sequential transition function},
$\gamma: Q \times \binom{Q}{2} \to Q$ is a function called the \emph{parallel transition function}.
\end{definition}

Note that we do not fix an initial state.
As a result, a PA does not define a single pomset language but rather a mapping from its states to pomset languages.
The language of a state is defined in terms of a trace relation that involves the transitions of both $\delta$ and $\gamma$.
Here, $\delta$ plays the same role as in classic finite automata: given a state and a symbol, it returns the new state after reading that symbol.
The function $\gamma$ warrants a bit more explanation.
Given a state $q$ and a binary multiset of states $\mset{r, s}$, $\gamma$ tells us the state that is reached after reading two input streams in parallel starting at states $r$ and $s$, and having both ``subprocesses'' reach an accepting state.
The precise meaning is given in Definition~\ref{definition:automaton-language} below.

\begin{definition}%
\label{definition:automaton-language}
$\tracerel_A\ \subseteq Q \times \pom^+ \times Q$ is the smallest relation
\ifnotarxiv%
\footnote{The relation $\tracerel_A$ should not be thought of as deterministic; for fixed $q \in Q$ and $U \in \pom^+$, there may be multiple distinct $q' \in Q$ such that $q \atrace{U}_A q'$--- see the extended version~\cite{kappe-brunet-luttik-silva-zanasi-2017-arxiv} for additional information.}
\fi%
satisfying the rules
\begin{mathpar}
\inferrule{~}{%
    q \atrace{a}_A \delta(q, a)
}\and%
\inferrule{%
    q \atrace{U}_A q'' \\
    q'' \atrace{V}_A q'
}{%
    q \atrace{U \cdot V}_A q'
}\and%
\inferrule{%
    r \atrace{U}_A r' \in F \\
    s \atrace{V}_A s' \in F
}{%
    q \atrace{U \parallel V}_A \gamma(q, \mset{r, s})
}
\end{mathpar}
We also define $\rtracerel_A\ \subseteq Q \times \pom \times Q$ by $q \artrace{U}_A q'$ if and only if $q' = q$ and $U = 1$, or $q \atrace{U}_A q'$.
The \emph{language} of $A$ at $q \in Q$, denoted $L_A(q)$, is the set $\{ U : \exists q' \in F.\ q \artrace{U}_A q' \}$.
We say that $A$ \emph{accepts} the language $\mathcal{U}$ if there exists a $q \in Q$ such that $L_A(q) = \mathcal{U}$.
\end{definition}

Intuitively, $\gamma$ ensures that when a process forks at state $q$
into subprocesses starting at $r$ and $s$, if each of those reaches an
accepting state, then the processes can join at
$\gamma(q, \mset{r, s})$.

We purposefully omit the empty pomset $1$ as a label in $\tracerel_A$; doing so would open up the possibility of having traces of the form $q \atrace{1}_A q'$ with $q \neq q'$ (i.e., ``silent transitions'' or ``$\epsilon$-transitions'') for example by defining $\gamma(q, \mset{r, s}) = q'$ for some $r, s \in F$.
Avoiding transitions of this kind allows us to prove claims about $\tracerel_A$ by induction on the pomset size, and leverage Lemma~\ref{lemma:pomset-unique-factorization} in the process to disambiguate between the rules that apply.
By extension, we can prove claims about $\rtracerel_A$ and $L_A$ by treating $U = 1$ as a special case.

For the remainder of this section, we fix a PA $A = \angl{Q, \delta, \gamma, F}$, and a state $q\in Q$.
To simplify matters later on, we assume that $A$ has a state $\bot \in Q - F$ such that, for every $a \in \Sigma$, it holds that $\delta(\bot, a) = \bot$ and, for every $\phi \in \binom{Q}{2}$, it holds that $\gamma(\bot, \phi) = \bot$.
Such a \emph{sink state} is particularly useful when defining $\gamma$: for a fixed $q \in Q$ not all $\mset{r,s} \in \binom{Q}{2}$ may give a value of $\gamma(q, \mset{r,s})$ that contributes to the language accepted by $q$.
In such cases, we can define $\gamma(q, \mset{r,s}) = \bot$.
Alternatively, we could have allowed $\gamma$ to be a partial function; we chose $\gamma$ as a total function so as not to clutter the definition of derivatives in Section~\ref{section:expressions-to-automata}.

We draw a PA in a way similar to finite automata: each state (except $\bot$) is a vertex, and accepting states are marked by a double border.
To represent sequential transitions, we draw labelled edges; for instance, in Figure~\ref{figure:example-pa}, $\delta(q_0, \mathsf{prepare}) = q_1$.
To represent parallel transitions, we draw hyper-edges; for instance, in Figure~\ref{figure:example-pa}, $\gamma(q_1, \mset{q_3, q_4}) = q_2$.
To avoid clutter, we do not draw either of these edges types the target state is $\bot$.
It is not hard to verify that the pomset $C$ of the earlier example is accepted by the PA in Figure~\ref{figure:example-pa}.

In principle, the state space of a PA can be infinite; we use this in Section~\ref{section:expressions-to-automata} to define a PA that has all possible series-rational expressions as states.
It is however also useful to know when we can prune an infinite PA into a finite PA while preserving the languages of the retained states.
In Section~\ref{section:automata-to-expressions}, we use this to translate the PA to a series-rational expression.

Note that it is not sufficient to talk about reachable states, i.e., states that appear in the target of some trace; we must also include states that are ``meaningful'' starting points for subprocesses.
To do this, we first need a handle on these starting points.
Specifically, we are interested in the states where
\begin{inparaenum}[(1)]
    \item the eventual join of the states yields a state that contributes to the behaviour of the PA, and
    \item the states may join again, because they are not the sink state.
\end{inparaenum}
This is captured in the definition below.

\begin{definition}%
\label{definition:support}
The \emph{support} of $q$, written $\pi_A(q)$, is $\{ \mset{r, s} \in \binom{Q}{2} : \gamma(q, \mset{r, s}), r, s \neq \bot \}$.
\end{definition}

We can now talk about subsets of states of an automaton that are closed, in the sense that the relevant part of a transition function has input and output confined to this set.
As a result, we can confine the structure of a given PA to a closed set.

\begin{definition}%
\label{definition:closed}
A set of states $Q'\subseteq Q$ is \emph{closed} when the following rules are satisfied
\begin{mathpar}
\inferrule{~}{%
    \bot \in Q'
}\and%
\inferrule{%
    q \in Q' \\
    a \in \Sigma
}{%
    \delta(q, a) \in Q'
}\and%
\inferrule{%
    q \in Q' \\
    \phi \in \pi_A(q)
}{%
    \gamma(q, \phi) \in Q'
}\and%
\inferrule{%
    q \in Q' \\
    \mset{r, s} \in \pi_A(q)
}{%
    r, s \in Q'
}
\end{mathpar}
If $Q'$ is closed, the \emph{generated sub-PA} of $A$ induced by $Q'$, denoted $\restr{A}{Q'}$, is the tuple $\angl{Q', \restr{\delta}{Q'}, \restr{\gamma}{Q'}, Q' \cap F}$
where $\restr{\delta}{Q'}$ and $\restr{\gamma}{Q'}$ are the restrictions of $\delta$ and $\gamma$ to $Q'$.
\end{definition}

Because the relevant parts of the transition functions are preserved, it is not surprising that the language of a state in a generated sub-PA coincides with the language of that state in the original PA\@.

\begin{lemma}%
\label{lemma:subpa-preserves-language}
Let $Q' \subseteq Q$ be closed.
If $q \in Q'$, then
$L_{\restr{A}{Q'}}(q) = L_A(q)$.
\end{lemma}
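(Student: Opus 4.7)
The plan is to derive the language equality from a two-sided inclusion on the underlying trace relations. The inclusion $L_{\restr{A}{Q'}}(q) \subseteq L_A(q)$ is immediate: since $\restr{\delta}{Q'}$ and $\restr{\gamma}{Q'}$ are literally the restrictions of $\delta$ and $\gamma$ to $Q'$, any derivation of $q \atrace{U}_{\restr{A}{Q'}} q'$ is verbatim a derivation of $q \atrace{U}_A q'$, and $Q' \cap F \subseteq F$.

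For the other inclusion I would strengthen the target to the auxiliary claim: for every $q \in Q'$, every $U \in \pom^+$, and every $q' \in Q$ with $q' \neq \bot$, if $q \atrace{U}_A q'$ then $q' \in Q'$ and $q \atrace{U}_{\restr{A}{Q'}} q'$. The inclusion $L_A(q) \subseteq L_{\restr{A}{Q'}}(q)$ then follows by a short case split on whether $U = 1$ (trivial, since then $q' = q \in Q' \cap F$) and, for $U \in \pom^+$, by noting that accepting targets lie in $F \subseteq Q \setminus \{\bot\}$.

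The auxiliary claim goes by well-founded induction on $U$, with the three cases supplied by Lemma~\ref{lemma:pomset-unique-factorization}. The primitive case uses only closure of $Q'$ under $\delta$. For the sequential case $U = V \cdot W$, the intermediate state $q''$ must be non-$\bot$, for otherwise the trap property of $\bot$ under both $\delta$ and $\gamma$ would propagate to force $q' = \bot$; then both halves fall to the induction hypothesis and recompose. The parallel case $U = V \parallel W$ is the delicate one, and is where the main obstacle lies: the witnessing states $r, s$ from the parallel rule do not appear in the source or target of the composed transition, so there is no a priori reason to place them inside $Q'$, and the rule $\gamma$ is not itself restricted to $\pi_A(q)$.

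The resolution, and the key step, is to observe that $r' \in F$ forces $r' \neq \bot$ (since $\bot \notin F$), which by the trap property contrapositively forces $r \neq \bot$; symmetrically $s \neq \bot$. Combined with $q' = \gamma(q, \mset{r, s}) \neq \bot$, this yields $\mset{r, s} \in \pi_A(q)$. The last two closure clauses of Definition~\ref{definition:closed} then deliver both $r, s \in Q'$ and $q' \in Q'$, so the induction hypothesis applies to the two subtraces and lifts them into $\restr{A}{Q'}$, which recombine via the parallel rule using $\restr{\gamma}{Q'}(q, \mset{r, s}) = \gamma(q, \mset{r, s}) = q'$.
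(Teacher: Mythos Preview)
Your proposal is correct and follows essentially the same approach as the paper: both argue by induction on $U$, use the trap property of $\bot$ to show the intermediate and fork states are non-$\bot$, and then invoke the closure rules via $\pi_A(q)$ to place $r,s,q'$ inside $Q'$. Your presentation is slightly cleaner in that you state the strengthened induction hypothesis (with the explicit assumption $q' \neq \bot$ and the explicit conclusion $q' \in Q'$) up front, whereas the paper leaves these implicit, but the mathematical content is the same.
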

\begin{arxivproof}
First, observe that $1 \in L_{A'}(q)$ if and only if $1 \in L_A(q)$, since $q \in Q' \cap F$ if and only if $q \in F$.
Now, suppose that $q \atrace{U}_{A'} q'$; we show that in this case $q \atrace{U}_A q'$ follows.
The proof proceeds by induction on $U$.
In the base, $U = a \in \Sigma$, and $q' = \restr{\delta}{Q'}(q, a)$.
But then $q' = \delta(q, a)$, and so $q \atrace{U}_A q'$.
For the inductive step, there are two cases to consider:
\begin{itemize}
    \item If $U = V \cdot W$ with $V$ and $W$ smaller than $U$, there exists a $q'' \in Q'$ such that $q \atrace{V}_{A'} q''$ and $q'' \atrace{W}_{A'} q'$.
    By induction, we find $q \atrace{V}_A q''$ and $q'' \atrace{W}_A q'$, and thus $q \atrace{U}_A q'$.
    \item If $U = V \parallel W$ with $V$ and $W$ smaller than $U$, then there exist $r, s \in Q'$ and $r', s' \in F \cap Q'$ such that $r \atrace{V}_{A'} r'$ and $s \atrace{W}_{A'} s'$, and $q' = \restr{\gamma}{Q'}(q, \mset{r, s})$.
    By induction we find that $r \atrace{V}_A r'$ and $s \atrace{W}_A s'$; note that $r', s' \in F$.
    It then follows that $q \atrace{U}_A \gamma(q, \mset{r, s}) = q'$.
\end{itemize}

For the other direction, suppose that $q \atrace{U}_A q'$ and $q \neq \bot$; we show that in this case $q \atrace{U}_{A'} q'$ follows.
The proof proceeds by induction on $U$.
In the base, $U = a \in \Sigma$ and $q' = \delta(q, a)$.
But then $q' = \restr{\delta}{Q'}(q, a)$ and so $q \atrace{U}_{A'} q'$.
For the inductive step, there are two cases to consider:
\begin{itemize}
    \item If $U = V \cdot W$ with $V$ and $W$ smaller than $U$, there exists a $q'' \in Q$ such that $q \atrace{V}_A q''$ and $q'' \atrace{W}_A q'$.
    Since $q' \neq \bot$, we have that $q'' \neq \bot$.
    By induction we then find that $q \atrace{V}_{A'} q''$ (thus $q'' \in Q'$) and so again by induction $q'' \atrace{W}_{A'} q'$.
    In total, we obtain $q \atrace{U}_{A'} q'$.
    \item If $U = V \parallel W$ with $V$ and $W$ smaller than $U$, there exist $r, s \in Q$ and $r', s' \in F$ such that $r \atrace{V}_A r'$ and $s \atrace{W}_A s'$, and $q' = \gamma(q, \mset{r, s})$.
    Since $q' \neq \bot$, it holds that $\mset{r, s} \in \pi_A(q)$, thus $r, s \in Q'$; also, since $r', s' \in F$, we know that $r', s' \neq \bot$.
    By induction we find that $r \atrace{V}_{A'} r'$ and $s \atrace{W}_{A'} s'$.
    Note that, since $r', s' \in F$, also $r, s \neq \bot$, and therefore $\mset{r, s} \in \pi_A(q)$.
    We then conclude that $q \atrace{U}_{A'} \restr{\gamma}{Q'}(q, \mset{r, s}) = \gamma(q, \mset{r, s}) = q'$. \qedhere
\end{itemize}
\end{arxivproof}

We now work out how to find a closed subset of states that contains a particular state.
The first step is to characterize the states reachable from $q$ by means of transitions.

\begin{definition}%
\label{definition:reach}
The \emph{reach} of $q$, written $\rho_A(q)$, is the smallest set
satisfying the rules
\begin{mathpar}
\inferrule{~}{%
    q \in \rho_A(q)
}\and%
\inferrule{%
    q' \in \rho_A(q) \\
    a \in \Sigma
}{%
    \delta(q', a) \in \rho_A(q)
}\and%
\inferrule{%
    q' \in \rho_A(q) \\
    \phi \in \pi_A(q)
}{%
    \gamma(q', \phi) \in \rho_A(q)
}
\end{mathpar}
\end{definition}

The reach of a state is closely connected to the states that can be reached from $q$ through the trace relation of the automaton, in the following way:
\begin{restatable}{lemma}{reachvsreachable}%
\label{lemma:reach-vs-reachable} The set $\rho_A(q) \cup \{ \bot \}$ contains $\{ q' \in Q : \exists U \in \pom^+.\ q \atrace{U}_A q' \} \cup \{ q \}$.
\ifarxiv%
Moreover, if $\bot$ is the only state of $A$ whose language is empty, this containment is an equality.
\fi%
\end{restatable}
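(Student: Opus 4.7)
The $\{q\}$-part of the containment is immediate: the first rule of Definition~\ref{definition:reach} gives $q \in \rho_A(q)$. For the main content, I would show that whenever $q \atrace{U}_A q'$ with $U \in \pom^+$, one has $q' \in \rho_A(q) \cup \{\bot\}$. The natural route is induction on $U$, using Lemma~\ref{lemma:pomset-unique-factorization} to dispatch the cases. However, the sequential case $U = V \cdot W$ produces an intermediate state $q''$ and the obvious induction on the $W$-trace would place $q'$ in $\rho_A(q'')$ rather than $\rho_A(q)$. To sidestep this mismatch I would strengthen the statement to: \emph{for all $q_0, q, q' \in Q$, if $q \in \rho_A(q_0)$ and $q \atrace{U}_A q'$, then $q' \in \rho_A(q_0) \cup \{\bot\}$.} Taking $q_0 = q$ and using the first rule of reach recovers the desired conclusion.

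\textbf{Key steps.} In the base $U = a \in \Sigma$, one has $q' = \delta(q, a)$ and the second rule of reach immediately applies (using $q \in \rho_A(q_0)$). For $U = V \cdot W$, pick the intermediate $q''$ with $q \atrace{V}_A q''$ and $q'' \atrace{W}_A q'$; the induction hypothesis on $V$ puts $q''$ in $\rho_A(q_0) \cup \{\bot\}$. If $q'' = \bot$, the sink-state property forces $q' = \bot$; otherwise $q'' \in \rho_A(q_0)$ and a second application of the hypothesis on $W$ closes the case. For $U = V \parallel W$, Definition~\ref{definition:automaton-language} gives $q' = \gamma(q, \mset{r, s})$ for some $r, s$ with $r \atrace{V}_A r' \in F$ and $s \atrace{W}_A s' \in F$. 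The case $q' = \bot$ is trivial; otherwise I would argue $r, s \neq \bot$ using that $\bot$ is a sink so every trace from $\bot$ stays at $\bot$, contradicting $r', s' \in F \subseteq Q - \{\bot\}$. Together with $\gamma(q, \mset{r, s}) = q' \neq \bot$, this places $\mset{r, s}$ in the support of $q$, so the third rule of reach applies and yields $q' \in \rho_A(q_0)$.

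\textbf{Main obstacle.} The only step that is not quite routine is spotting the right strengthening of the induction hypothesis: parameterising by the root $q_0$ and allowing the trace to start anywhere in $\rho_A(q_0)$. Without this move, the sequential case simply does not close. The parallel case relies on the sink-state assumption to rule out degenerate branches reaching $F$ from $\bot$; this is the subtler structural observation, but once noticed it slots in cleanly via the third rule of Definition~\ref{definition:reach}.
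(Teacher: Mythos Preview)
Your argument for the containment direction is correct and essentially matches the paper's: both strengthen the induction hypothesis to allow the trace to start at an arbitrary element of $\rho_A(q_0)$ (the paper phrases it as ``if $q'' \atrace{U}_A q'$ for some $q'' \in \rho_A(q) \cup \{\bot\}$, then $q' \in \rho_A(q) \cup \{\bot\}$''), and both dispatch the parallel case by observing that the fork endpoints cannot be $\bot$ since $\bot$ is a sink and the subtraces must reach $F$. Your write-up is in fact slightly more explicit than the paper's about why $r, s \neq \bot$.

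However, you have only addressed half of the lemma. The statement also asserts that the containment is an \emph{equality} when $\bot$ is the only state with empty language, and your proposal says nothing about the reverse inclusion $\rho_A(q) \cup \{\bot\} \subseteq R_A(q) \cup \{q, \bot\}$. This direction is not automatic: it proceeds by induction on the construction of $\rho_A(q)$, and the parallel case is where the extra hypothesis is genuinely used. If $q' = \gamma(q'', \phi)$ with $q'' \in \rho_A(q)$ and $\phi = \mset{r,s} \in \pi_A(q'')$, you need to exhibit a pomset $U$ with $q'' \atrace{U}_A q'$; this requires $r$ and $s$ to actually admit accepting traces, which is exactly what ``$r, s \neq \bot$ and $\bot$ is the only state with empty language'' guarantees. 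Without this hypothesis the reverse inclusion can fail, so the argument must invoke it explicitly. You should add this half.
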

\begin{arxivproof}
Refer to Appendix~\ref{appendix:proofs}.
\end{arxivproof}

Note that $\rho_A(q) \cup \{ \bot \}$ is not necessarily closed: we also need the states required by the fourth rule of closure in Definition~\ref{definition:closed}.
Thus, if we want to ``close'' $\rho_A(q) \cup \{ \bot \}$ by adding the support of its contents, we need to find closed sets of states that contain branching points.
In order to do this inductively, we propose the following subclass of PAs.

\begin{definition}%
\label{definition:fork-acyclic}
We say that $A$ is \emph{fork-acyclic} if there exists a \emph{fork
  hierarchy}, which is a strict order $\prec_A\ \subseteq Q \times Q$
such that the following rules are satisfied.
\begin{mathpar}
\inferrule{%
    \mset{r, s} \in \pi_A(q)
}{%
    r, s \prec_A q
}\and%
\inferrule{%
    a \in \Sigma \\
    r \prec_A \delta(q, a)
}{%
    r \prec_A q
}\and%
\inferrule{%
    \phi \in \pi_A(q) \\
    r \prec_A \gamma(q, \phi)
}{%
    r \prec_A q
}
\end{mathpar}
\end{definition}

The fork hierarchy is connected with the reach of a state in the following way.
\begin{lemma}%
\label{lemma:reach-vs-hierarchy}
Let $q', r \in Q$.
If $A$ is fork-acyclic, $q' \in \rho_A(q)$ and $r \prec_A q'$, then $r \prec_A q$.
\end{lemma}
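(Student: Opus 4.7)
The plan is to prove this by induction on the derivation showing $q' \in \rho_A(q)$ as given in Definition~\ref{definition:reach}; this yields three cases corresponding to the three inference rules for $\rho_A$. The base case is immediate: if $q' = q$, then the hypothesis $r \prec_A q'$ is already the conclusion $r \prec_A q$.

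For the sequential closure case, $q' = \delta(q'', a)$ with $q'' \in \rho_A(q)$ and $a \in \Sigma$. From $r \prec_A \delta(q'', a)$ the second rule of Definition~\ref{definition:fork-acyclic} produces $r \prec_A q''$, after which the induction hypothesis delivers $r \prec_A q$. For the parallel closure case, $q' = \gamma(q'', \phi)$ with $q'' \in \rho_A(q)$ and $\phi \in \pi_A(q)$; here the plan is to apply the third rule of Definition~\ref{definition:fork-acyclic} to $r \prec_A \gamma(q'', \phi)$ to extract $r \prec_A q''$, and then invoke the induction hypothesis to conclude $r \prec_A q$.

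The main obstacle I expect is that the third rule of the fork hierarchy requires the side condition $\phi \in \pi_A(q'')$, while the reach derivation only hands us $\phi \in \pi_A(q)$. The key observation that bridges the gap is that membership in $\pi_A(q)$ already forces the two states constituting $\phi$ to be different from $\bot$ by Definition~\ref{definition:support}; consequently, whenever $\gamma(q'', \phi) \neq \bot$ the pair $\phi$ also lies in $\pi_A(q'')$, so the rule fires and the argument closes. The residual edge case $q' = \gamma(q'', \phi) = \bot$ is best handled by choosing the fork hierarchy so that $\bot$ has no predecessors — a choice consistent with all closure rules of Definition~\ref{definition:fork-acyclic}, since $\pi_A(\bot) = \emptyset$ makes rules 1 and 3 vacuous at $\bot$, and $\delta(\bot, a) = \bot$ makes rule 2 self-referential — so that $r \prec_A \bot$ never occurs and the inductive step is vacuous. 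Once this boundary is settled, the induction goes through uniformly.
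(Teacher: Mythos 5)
Your induction and the two main cases are exactly the paper's proof: induct on the derivation of $q' \in \rho_A(q)$, use the second (resp.\ third) closure rule of Definition~\ref{definition:fork-acyclic} to pull $r \prec_A q'$ back to $r \prec_A q''$, and close with the induction hypothesis. The complication you spend most of your effort on is an artifact of what is evidently a typo in Definition~\ref{definition:reach}: the side condition of the third rule should read $\phi \in \pi_A(q')$ (the support of the intermediate state), which is how the paper's own proofs of this lemma and of Lemma~\ref{lemma:reach-vs-reachable} use it. Under that reading the third fork-acyclicity rule fires immediately and no bridging is needed; moreover $\phi \in \pi_A(q'')$ already forces $\gamma(q'', \phi) \neq \bot$, so the $q' = \bot$ subcase cannot arise. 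Your bridging observation for the literal reading is correct as far as it goes, but the way you dispose of the residual $q' = \bot$ case is not legitimate: the lemma asserts a property of the \emph{given} fork hierarchy $\prec_A$ witnessing fork-acyclicity, and the rules of Definition~\ref{definition:fork-acyclic} are only lower bounds on $\prec_A$, so you are not free to ``choose'' a hierarchy in which $\bot$ has no predecessors. Read the reach rule as intended and this step disappears; as written, it is the one genuine flaw in an otherwise identical argument.
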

\begin{arxivproof}
The proof proceeds by induction on the construction of $\rho_A(q)$.
In the base, $q = q'$, in which case the claim holds vacuously.

For the inductive step, there are two cases to consider.
\begin{itemize}
    \item If $q' = \delta(q'', a)$ for some $q'' \in \rho_A(q)$ and $a \in \Sigma$, then we know that $r \prec_A q''$ by definition of $\prec_A$.
    But then $r \prec_A q$ by induction.
    \item If $q' = \gamma(q'', \phi)$ for some $q'' \in \rho_A(q)$ and $\phi \in \pi_A(q'')$, then we know that $r \prec_A q''$ by definition of $\prec_A$.
    But then $r \prec_A q$ by induction. \qedhere
\end{itemize}
\end{arxivproof}

The term \emph{fork-acyclic} has been used in literature for similar automata~\cite{lodaya-weil-2000,jipsen-2014}.
However, in op.\ cit., it is defined in terms of the traces that arise from the transition structure of the automaton.
In contrast, our definition is purely syntactic: it imposes an order on states such that forks cannot be nested.
To show that, as in~\cite{lodaya-weil-2000}, our definition implies that languages of the PA have bounded width, we present the following lemma.
Since the state space of a PA can be infinite, we additionally require that the fork hierarchy is well-founded.

\begin{lemma}
  If $A$ is fork-acyclic and $\prec_A$ is well-founded then $L_A(q)$
  is of finite width.
\end{lemma}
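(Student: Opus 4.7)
The plan is a well-founded induction on $\prec_A$ wrapped around a structural induction on pomsets via Lemma~\ref{lemma:pomset-unique-factorization}. It is cleanest to bound widths uniformly across all traces from states in $\rho_A(q)$ rather than just $L_A(q)$; concretely, I would show that
$B(q) := \sup\{\width{U} : q_0 \atrace{U}_A q_1 \text{ for some } q_0 \in \rho_A(q),\ q_1 \neq \bot\}$
is finite, so that in particular $\width{U} \leq B(q)$ for every $U \in L_A(q)$.

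The key preliminary step is that every parallel-rule invocation encountered along such a trace uses a fork multiset lying in the support of the current state, and moreover one whose members are strictly below $q$ in $\prec_A$. Since $\bot$ is absorbing under both $\delta$ and $\gamma$ and $\bot \notin F$, a trace ending in a non-$\bot$ state never visits $\bot$; thus by Lemma~\ref{lemma:reach-vs-reachable} all of its intermediate states $q''$ lie in $\rho_A(q_0) \subseteq \rho_A(q)$. Likewise, any parallel rule $\gamma(q'', \mset{r, s})$ appearing in such a trace must have $\gamma(q'', \mset{r, s}), r, s \neq \bot$, i.e.\ $\mset{r, s} \in \pi_A(q'')$. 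Fork-acyclicity then gives $r, s \prec_A q''$, and Lemma~\ref{lemma:reach-vs-hierarchy} upgrades this to $r, s \prec_A q$.

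Assuming by well-founded induction that $B(r) < \infty$ for every $r \prec_A q$, I would bound $\width{U}$ by structural induction on $U$ using Lemma~\ref{lemma:pomset-unique-factorization}: primitive (or empty) $U$ gives $\width{U} \leq 1$; the sequential case $U = V \cdot W$ gives $\width{U} = \max(\width{V}, \width{W})$ with both $V$ and $W$ arising as traces from states in $\rho_A(q)$ (so the structural recursion applies to them at the same level $q$); and the parallel case $U = V \parallel W$ is forced by Lemma~\ref{lemma:pomset-unique-factorization} to come from the parallel rule, yielding accepting sub-traces from states $r, s \prec_A q$, whence $\width{U} = \width{V} + \width{W} \leq B(r) + B(s) < \infty$. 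Packaging this into a single recursive definition suggests $B(q) := \max\bigl(1,\ \sup\{B(r) + B(s) : r, s \prec_A q,\ \exists q'' \in \rho_A(q),\ \mset{r, s} \in \pi_A(q'')\}\bigr)$. The most delicate step is arguing that this supremum is finite: each individual $B(r)$ is finite by the outer induction hypothesis, but to conclude one must additionally rule out an infinite family of fork pairs contributing unboundedly to the sum, which is the tightest interaction between well-foundedness of $\prec_A$ and the branching structure of $\gamma$ across $\rho_A(q)$.
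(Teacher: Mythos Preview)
Your overall architecture---well-founded induction on $\prec_A$ wrapped around a structural induction on the pomset---is exactly the paper's, and your strengthening to all starting states in $\rho_A(q)$ is a clean way to make the sequential case close up (the paper achieves the same effect implicitly, via the observation that $q'' \in \rho_A(q)$ forces $n_{q''} \leq n_q$ for the chain-length function $n$ it introduces).

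There is, however, a genuine gap precisely where you flag one. The outer induction hypothesis gives $B(r), B(s) < \infty$ for each \emph{individual} fork pair with $r,s \prec_A q$, but since the lemma does not assume boundedness there may be infinitely many such pairs across $\rho_A(q)$, and nothing in your argument rules out $\sup\{B(r)+B(s)\} = \infty$. Well-founded induction alone does not give you a uniform bound over all predecessors; it only tells you each predecessor satisfies the conclusion.

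The paper resolves this by replacing your abstract supremum with a concrete numerical target: it takes $n_q$ to be the length of the longest descending $\prec_A$-chain below $q$ and proves $\width{U} \leq 2^{n_q}$ for every trace $q \atrace{U}_A q' \neq \bot$. In the parallel case one has $n_r, n_s < n_q$, hence
\[
2^{n_r} + 2^{n_s} \;\leq\; 2 \cdot 2^{n_q - 1} \;=\; 2^{n_q}
\]
\emph{uniformly in $r,s$}, so the bound does not depend on how many fork pairs sit below $q$. This is exactly the ``tightest interaction'' you allude to: well-foundedness is exploited not merely to found the outer induction, but to extract a rank function whose value at $q$ simultaneously majorizes the contribution of \emph{every} $r \prec_A q$. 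Your argument can be completed by proving $B(q) \leq 2^{n_q}$ directly along the same lines.
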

\begin{arxivproof}
If $q \in Q$, let $n_q \in \naturals$ be the lowest upper bound on the length of a descending chain starting at $q$, i.e., if $r_1, r_2, \dots, r_m \in Q$ are such that $r_1 \prec_A r_2 \prec_A \dots \prec_A r_m \prec_A q$, then $m \leq n_q$.
Such an $n_q$ exists uniquely, for $\prec_A$ is well-founded.
We strengthen our claim as follows: if $q \in Q$, then $\width{L_A(q)} \leq 2^{n_q}$.

It suffices to show that if $q' \in Q$ and $U \in \pom^+$ are such that $q \atrace{U}_A q'$ and $q' \neq \bot$, then $\width{U} \leq 2^{n_q}$.
The proof proceeds by nested induction.
The outer induction is on $n_q$; here, we assume the claim holds for all $r \in Q$ with $n_r < n_q$ (note that this implicitly covers the base, where $n_q = 0$).
The inner induction is on $U$.
In the base of the inner induction, $U = a \in \Sigma$ and the claim holds immediately, for $\width{a} = 1 \leq 2^{n_q}$.

For the inductive step of the inner induction, there are two cases to consider.
\begin{itemize}
    \item If $U = V \cdot W$, with $V$ and $W$ smaller than $U$, then there exists a $q'' \in Q$ such that $q \atrace{V}_A q''$ and $q'' \atrace{W}_A q'$.
    Since $q' \neq \bot$, we know that $q'' \neq \bot$ as well.
    Therefore, we know by the (inner and outer) induction hypothesis that $\width{V} \leq 2^{n_q}$.
    Furthermore, since $q'' \in \rho_A(q)$ by Lemma~\ref{lemma:reach-vs-reachable} (the case where $q'' = \bot$ was excluded), we know that $q'' \prec_A q$ by Lemma~\ref{lemma:reach-vs-hierarchy}, therefore $n_{q''} < n_q$.
    Thus, by the (inner) induction hypothesis, it follows that that $\width{W} \leq 2^{n_{q''}}$.
    But then $\width{U} = \max(\width{V}, \width{W}) \leq \max(2^{n_q}, 2^{n_{q''}}) = 2^{n_q}$ as well.
    \item If $U = V \parallel W$, with $V$ and $W$ smaller than $U$, then there exist $r, s \in Q$ and $r', s' \in F$ such that $r \atrace{V}_A r'$ and $s \atrace{W}_A s'$.
    By the outer induction hypothesis, we know that $\width{V} \leq \width{L_A(r)} \leq 2^{n_r}$ and $\width{W} \leq \width{L_A(s)} \leq 2^{n_s}$.
    Furthermore, since $q', r, s \neq \bot$, we know that $\mset{r, s} \in \pi_A(q)$, thus $r, s \prec_A q$ and therefore $n_s, n_r < n_q$.
    We then derive
    \[\width{U} = \width{V} + \width{W} \leq 2^{n_r} + 2^{n_s} \leq 2 \cdot 2^{\max(n_r, n_s)} = 2^{\max(n_r, n_s) + 1} \leq 2^{n_q}\]
    This completes the proof. \qedhere
\end{itemize}
\end{arxivproof}

We introduce the notion of a \emph{bounded PA}, which is sufficient to guarantee the existence of a closed, finite subset containing a given state, even when the PA has infinitely many states.

\begin{definition}%
\label{definition:bounded}
Let $A$ be fork-acyclic.
We say that $A$ is \emph{bounded} if $\prec_A$
is well-founded, and for all $q \in Q$, both $\pi_A(q)$ and
$\rho_A(q)$ are finite.
\end{definition}

\begin{theorem}%
\label{theorem:bounded-finite-subpa}
If $A$ is bounded, then for every state $q$ of $A$ there exists a finite set of states
$\project{Q}{q} \subseteq Q$ that is closed and contains $q$.
\end{theorem}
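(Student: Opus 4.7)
The plan is to define $\project{Q}{q}$ by well-founded recursion on the fork hierarchy $\prec_A$, then verify by a parallel induction that the resulting set is finite, closed, and contains $q$. Concretely, I would set
\[
  \project{Q}{q} \;=\; \{\bot\} \cup \rho_A(q) \cup \bigcup_{q' \in \rho_A(q)} \;\bigcup_{\mset{r,s} \in \pi_A(q')} \bigl(\project{Q}{r} \cup \project{Q}{s}\bigr).
\]
The idea is that $\rho_A(q)$ already handles the first three clauses of Definition~\ref{definition:closed} for states reachable from $q$, and the outer union then adds whatever is needed to satisfy the fourth clause (support-membership), recursively closing those newly added states in turn.

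The first step is to justify that the recursion is well-defined. If $\mset{r,s} \in \pi_A(q')$ with $q' \in \rho_A(q)$, then by the first rule defining $\prec_A$ we have $r, s \prec_A q'$, and by Lemma~\ref{lemma:reach-vs-hierarchy} it follows that $r, s \prec_A q$. Since $\prec_A$ is well-founded, well-founded recursion applies. This is where the hypothesis of boundedness really pays off, and I expect it to be the main conceptual step: without Lemma~\ref{lemma:reach-vs-hierarchy}, descending into the support of a \emph{reachable} state would only give us strict $\prec_A$-decrease from that reachable state, not from $q$ itself, and the recursion would not obviously terminate.

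Next I would prove finiteness by $\prec_A$-induction. By boundedness, $\rho_A(q)$ is finite and each $\pi_A(q')$ is finite, so the union is indexed by a finite set; applying the inductive hypothesis to every $r$ and $s$ occurring in these supports gives finitely many finite sets, whose union is finite.

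Finally I would verify closure, again by $\prec_A$-induction. Membership of $q$ and $\bot$ is immediate. For any $q'' \in \project{Q}{q}$, either $q'' \in \rho_A(q)$, in which case $\delta(q'', a)$ and $\gamma(q'', \phi)$ land in $\rho_A(q)$ by the defining rules of $\rho_A$, and any $\mset{r,s} \in \pi_A(q'')$ contributes $r, s$ to $\project{Q}{q}$ directly via the outer union (noting that $r \in \rho_A(r) \subseteq \project{Q}{r}$); or else $q''$ lies in some $\project{Q}{r}$ with $r \prec_A q$, and the inductive hypothesis gives closure in $\project{Q}{r} \subseteq \project{Q}{q}$. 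Combining these cases yields all four closure rules, completing the construction.
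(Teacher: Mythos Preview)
Your proposal is correct and follows essentially the same route as the paper: the paper defines exactly the same set $\project{Q}{q} = \{\bot\} \cup \rho_A(q) \cup \bigcup\{\project{Q}{r} : q' \in \rho_A(q),\ \mset{r,s} \in \pi_A(q')\}$ by $\prec_A$-induction, invokes Lemma~\ref{lemma:reach-vs-hierarchy} for well-definedness just as you do, and verifies finiteness and closure in the same way. Your closure argument is slightly more explicit than the paper's (which simply says ``it suffices to show the last rule of closure holds for $q' \in \rho_A(q)$''), but the content is identical.
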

\begin{proof}
The proof proceeds by $\prec_A$-induction; this is sound, because $\prec_A$ is well-founded.

Suppose the claim holds for all $r \in Q$ with $r \prec_A q$.
If $q' \in \rho_A(q)$ and $\mset{r, s} \in \pi_A(q')$, then $r \prec_A q'$ and thus $r \prec_A q$ by Lemma~\ref{lemma:reach-vs-hierarchy}; by induction we obtain for every such $r$ a finite set of states $\project{Q}{r} \subseteq Q$ that is closed and contains $r$.
We choose:
\[\project{Q}{q} = \{ \bot \} \cup \rho_A(q) \cup \bigcup \{ \project{Q}{r} : q' \in \rho_A(q),\ \mset{r, s} \in \pi_A(q') \}\]
This set is finite because $\rho_A(q)$ and $\pi_A(q')$ are finite for all $q, q' \in Q$ since $A$ is bounded.
To see that $\project{Q}{q}$ is closed, it suffices to show that the last rule of closure holds for $q' \in \rho_A(q)$; it does, since if $q' \in \rho_A(q)$ and $\mset{r, s} \in \pi_A(q')$, then $r \in \project{Q}{r}$ and $s \in \project{Q}{s}$, thus $r, s \in \project{Q}{q}$.
\end{proof}

\ifarxiv%
\subsection{Lock-step traces}

Note that while the transition functions of a PA are deterministic in that their output is a single state rather than a set of states, the transition relation of a PA $A = \angl{Q, \delta, \gamma, F}$ should not be thought of as deterministic.
In particular, if $q \atrace{U}_A q'$ and $q \atrace{U}_A q''$, then $q' = q''$ does not hold in general.
For instance, consider the case where $\gamma(q, \mset{r_1, s_1}) = q'$ and $\gamma(q, \mset{r_2, s_2}) = q''$ and furthermore $s_1 \atrace{V}_A s_1' \in F$ and $s_2 \atrace{V}_A s_2' \in F$ as well as $r_1 \atrace{W}_A r_1' \in F$ and $r_2 \atrace{W}_A r_2' \in F$.
Here, $q \atrace{V \parallel W}_A q'$ and $q \atrace{V \parallel W}_A q''$, but $q'$ and $q''$ may not be equal.

When working with traces in a PA, it is may be useful to be able to relate traces of the same pomset in the presence of this kind of determinism.
To this end, we introduce the notion of lock-step traces.
Intuitively, lock-step traces prevent the counterexample discussed above by requiring that the application of the third rule must use the same starting states in the construction of both traces.
\begin{definition}%
\label{definition:lock-step}
Let $A = \angl{Q, \delta, \gamma, F}$ be a PA and let $U \in \pom^+$.
Suppose that $q_1 \atrace{U}_A q_1'$ and $q_2 \atrace{U}_A q_2'$; these traces are \emph{in lock-step} if one of the following is true:
\begin{enumerate}[(i)]
    \item $U = a$ for some $a \in \Sigma$
    \item $U = V \cdot W$, for $V$ and $W$ smaller than $U$, and there exist $q_1'', q_2'' \in Q$ such that $q_1 \atrace{V}_A q_1''$ and $q_2 \atrace{V}_A q_2''$ are in lock step, as well as $q_1'' \atrace{W} q_1'$ and $q_2'' \atrace{W} q_2'$
    \item $U = V \parallel W$, for $V$ and $W$ smaller than $U$, and there exist $r, s \in Q$ and $r', s' \in F$ such that $r \atrace{V} r'$ and $s \atrace{W} s'$, as well as $q_1' = \gamma(q_1, \mset{r, s})$ and $q_2' = \gamma(q_2, \mset{r, s})$.
\end{enumerate}
\end{definition}
It is easy to see that ``being in lock-step'' is an equivalence relation on traces.

The lemma below observes that traces that are in lock-step do enjoy determinism.
\begin{lemma}%
\label{lemma:lock-step-determinism}
Let $A = \angl{Q, \delta, \gamma, F}$ be a PA and let $q, q_1', q_2' \in Q$ and $U \in \pom^+$.
If $q \atrace{U}_A q_1'$ and $q \atrace{U}_A q_2'$ are in lock-step, then $q_1' = q_2'$.
\end{lemma}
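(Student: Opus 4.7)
The plan is to proceed by induction on the size of the pomset $U$, using Lemma~\ref{lemma:pomset-unique-factorization} to split into the three cases corresponding to the three clauses in Definition~\ref{definition:lock-step}. Since lock-step is defined by mirroring the inductive structure of $\tracerel_A$, this case split aligns naturally: for a given $U$, exactly one of the clauses (i)--(iii) can witness the lock-step condition, and I can extract the relevant witnesses from that clause.

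For the base case, if $U = a$ for some $a \in \Sigma$, then the only way to derive either trace is via the first rule of $\tracerel_A$, so $q_1' = \delta(q, a) = q_2'$. For the sequential case $U = V \cdot W$, the lock-step hypothesis gives intermediate states $q_1'', q_2''$ with $q \atrace{V}_A q_1''$ and $q \atrace{V}_A q_2''$ in lock-step; the induction hypothesis yields $q_1'' = q_2''$. Then $q_1'' \atrace{W}_A q_1'$ and $q_1'' \atrace{W}_A q_2'$ are also in lock-step (again by the hypothesis), so a second application of the induction hypothesis gives $q_1' = q_2'$. For the parallel case $U = V \parallel W$, the lock-step hypothesis directly supplies a \emph{shared} pair $r, s \in Q$ and $r', s' \in F$ with $q_1' = \gamma(q, \mset{r, s})$ and $q_2' = \gamma(q, \mset{r, s})$; equality is then immediate from functionality of $\gamma$, with no appeal to the induction hypothesis needed.

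I do not anticipate a serious obstacle here: the hard work has already been done in the definition of lock-step, which bakes in exactly the coincidences of intermediate states needed to deterministically reconstruct the target. The only point that requires a little care is justifying that a single clause of Definition~\ref{definition:lock-step} applies to a given pair of traces, so that the witnesses used in either trace can be identified; this is where Lemma~\ref{lemma:pomset-unique-factorization} is invoked, guaranteeing that $U$'s shape uniquely determines the clause. The induction is well-founded because in both inductive clauses the subpomsets $V$ and $W$ are strictly smaller than $U$.
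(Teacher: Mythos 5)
Your proof is correct and follows essentially the same route as the paper's: induction on $U$, with the base case resolved by functionality of $\delta$, the sequential case by two applications of the induction hypothesis (first to the $V$-traces, then to the $W$-traces from the now-identified intermediate state), and the parallel case immediately from the shared multiset $\mset{r,s}$ and functionality of $\gamma$. Your explicit appeal to Lemma~\ref{lemma:pomset-unique-factorization} to disambiguate which clause of Definition~\ref{definition:lock-step} applies is left implicit in the paper but is exactly the justification the paper gestures at elsewhere.
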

\begin{proof}
The proof proceeds by induction on $U$.
If $U = a \in \Sigma$, then $q_1' = \delta(q, a)$ and $q_2' = \delta(q, a)$, and so the claim follows.

For the inductive step, there are two cases to consider.
\begin{itemize}
    \item If $U = V \cdot W$ with $V$ and $W$ smaller than $U$, then there exist $q_1'', q_2'' \in Q$ such that $q \atrace{V}_A q_1''$ and $q \atrace{V}_A q_2''$ are in lock-step, as well as $q_1'' \atrace{W}_A q_1'$ and $q_2'' \atrace{W}_A q_2'$.
    By induction, $q_1'' = q_2''$, and thus again by induction $q_1' = q_2'$.
    \item If $U = V \parallel W$ with $V$ and $W$ smaller than $U$, then there exist $r, s \in Q$ and $r', s' \in F$ such that $r \atrace{V} r'$ and $S \atrace{W} s'$, and $q_1' = \gamma(q, \mset{r, s})$ and $q_2' = \gamma(q, \mset{r, s})$.
    The claim then follows. \qedhere
\end{itemize}
\end{proof}
\fi%

\section{Expressions to automata}%
\label{section:expressions-to-automata}

We now turn our attention to the task of translating a series-rational expression $e$ into a PA that accepts $\sem{e}$.
We employ Brzozowski's method~\cite{brzozowski-1964} to construct a single \emph{syntactic PA} where every series-rational expression is a state accepting exactly its denotational semantics.
To this end we must define which expressions are accepting, and how the sequential and parallel transition functions transform states --- what are, in Brzozowski's vocabulary, their sequential and parallel derivatives?

We start with the accepting states.
In Brzozowski's construction, a rational expression is accepting if its denotational semantics includes the empty word.
Analogously, a series-rational expression is accepting if its denotational semantics includes the empty pomset.

\begin{definition}%
\label{definition:accepting-terms}
We define the set $\sacc$ to be the smallest subset of $\terms$ satisfying the rules:
\begin{mathpar}
\inferrule{~}{%
    1 \in \sacc
}\and%
\inferrule{%
    e \in \sacc \\
    f \in \terms
}{%
    e + f, f + e \in \sacc
}\and%
\inferrule{%
    e, f \in \sacc
}{%
    e \cdot f, f \cdot e \in \sacc
}\and%
\inferrule{%
    e, f \in \sacc
}{%
    e \parallel f, f \parallel e \in \sacc
}\and%
\inferrule{%
    e \in \terms
}{%
    e^* \in \sacc
}
\end{mathpar}
\end{definition}
It is not hard to see that $e \in \sacc$ if and only if $1 \in \sem{e}$.
We use $e \star f$ as a shorthand for $f$ if $e \in \sacc$, and $0$ otherwise.
For an equation $\mathcal{E}$, we write $[\mathcal{E}]$ as a shorthand for $1$ if $\mathcal{E}$ holds, and $0$ otherwise.
We now define sequential and parallel derivatives:
\begin{definition}%
\label{definition:derivatives}
\ifarxiv%
We define the functions $\ssderiv: \terms \times \Sigma \to \terms$ and $\psderiv: \terms \times \binom{\terms}{2} \to \terms$ as follows:
\begin{align*}
\ssderiv(0, a) &= 0                                                             & \psderiv(0, \phi) &= 0 \\
\ssderiv(1, a) &= 0                                                             & \psderiv(1, \phi) &= 0 \\
\ssderiv(b, a) &= [a = b]                                                       & \psderiv(b, \phi) &= 0 \\
\ssderiv(e + f, a) &= \ssderiv(e, a) + \ssderiv(f, a)                           & \psderiv(e + f, \phi) &= \psderiv(e, \phi) + \psderiv(f, \phi) \\
\ssderiv(e \cdot f, a) &= \ssderiv(e, a) \cdot f + e \star \ssderiv(f, a)       & \psderiv(e \cdot f, \phi) &= \psderiv(e, \phi) \cdot f + e \star \psderiv(f, \phi) \\
\ssderiv(e \parallel f, a) &= e \star \ssderiv(f, a) + f \star \ssderiv(e, a)   & \psderiv(e \parallel f, \phi) &= [\phi \simeq \mset{e, f}] + e \star \psderiv(f, \phi) + f \star \psderiv(e, \phi) \\
\ssderiv(e^*, a) &= \ssderiv(e, a) \cdot e^*                                    & \psderiv(e^*, \phi) &= \psderiv(e, \phi) \cdot e^*
\end{align*}
\else%
We define the function $\ssderiv: \terms \times \Sigma \to \terms$ as follows:
\begin{mathpar}
\ssderiv(0, a) = 0        \and
\ssderiv(1, a) = 0        \and
\ssderiv(b, a) = [a = b]  \and
\ssderiv(e^*, a) = \ssderiv(e, a) \cdot e^*\and
\ssderiv(e + f, a) = \ssderiv(e, a) + \ssderiv(f, a) \and
\ssderiv(e \cdot f, a) = \ssderiv(e, a) \cdot f + e \star \ssderiv(f, a) \and
\ssderiv(e \parallel f, a) = e \star \ssderiv(f, a) + f \star \ssderiv(e, a)
\end{mathpar}
Furthermore, the function $\psderiv: \terms \times \binom{\terms}{2} \to \terms$ is defined as follows:
\begin{mathpar}
\psderiv(0, \phi) = 0 \and
\psderiv(1, \phi) = 0 \and
\psderiv(b, \phi) = 0 \and
\psderiv(e^*, \phi) = \psderiv(e, \phi) \cdot e^*\\
\psderiv(e + f, \phi) = \psderiv(e, \phi) + \psderiv(f, \phi)\and
\psderiv(e \cdot f, \phi) = \psderiv(e, \phi) \cdot f + e \star \psderiv(f, \phi)\and
\psderiv(e \parallel f, \phi) = [\phi \simeq \mset{e, f}] + e \star \psderiv(f, \phi) + f \star \psderiv(e, \phi)
\end{mathpar}
\fi%
\end{definition}
The definition of $\ssderiv$ coincides with Brzozowski's derivative on rational expressions.
The definition of $\psderiv$ mimics the definition of $\ssderiv$ on non-parallel terms except $b \in \Sigma$.

The definition of $\psderiv$ on parallel terms includes (in the first term) the possibility that the starting states provided to the parallel transition function are (congruent to) the operands of the parallel, in which case the target join state is the accepting state $1$.
The other two terms (as well as the definition of $\ssderiv$ on a parallel term) account for the fact that if $1 \in \sem{e}$, then $\sem{f} \subseteq \sem{e \parallel f}$.
Since we do not allow traces labelled with the empty pomset, traces that originate from these operands are thus lifted to the composition when necessary.

\begin{definition}
The \emph{syntactic PA} is the PA $A_\Sigma = \angl{\terms, \ssderiv, \psderiv, \sacc}$.
\end{definition}
We use $\slang$ as a shorthand for $L_{A_\Sigma}$, and $\satracerel$ ($\sartracerel$) as a shorthand for $\tracerel_{A_\Sigma}$ ($\rtracerel_{A_\Sigma}$).

The remainder of this section is devoted to showing that if $e \in \terms$, then $\slang(e) = \sem{e}$.

\subsection{Traces of congruent states}

In the analysis of the syntactic trace relation $\satracerel$, we often encounter sums of terms.
To work with these, it is useful to identify terms modulo $\simeq$.
In this section, we establish that such an identification is in fact sound, in the sense that if two expressions are related by $\simeq$, then the languages accepted by the states representing those expressions are also identical.

In the first step towards this goal, we show that $\sacc$ is well-defined with respect to $\simeq$.
\begin{restatable}{lemma}{congruentstateshalt}%
\label{lemma:congruent-states-halt}
Let $e, f \in \terms$ be such that $e \simeq f$.
Then $e \in \sacc$ if and only if $f \in \sacc$.
\end{restatable}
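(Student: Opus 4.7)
The plan is to reduce the statement to a semantic fact. First, I would verify by a routine structural induction on $e$ that
\[
e \in \sacc \iff 1 \in \sem{e}.
\]
Each case matches a clause of Definition~\ref{definition:series-rational-expression-semantics} against the corresponding clause of Definition~\ref{definition:accepting-terms}: for example, $e \cdot f \in \sacc$ iff both $e, f \in \sacc$, which by induction is iff $1 \in \sem{e}$ and $1 \in \sem{f}$, which (using $1 = 1 \cdot 1$ and that $1$ is the only decomposition of $1$ as a sequential product) is iff $1 \in \sem{e \cdot f}$. The Kleene star case follows because $\sem{e^*} \supseteq \{1\}$ always. Once this characterization is in hand, the lemma is immediate: given $e \simeq f$, Lemma~\ref{lemma:congruence-sound} yields $\sem{e} = \sem{f}$, so $1 \in \sem{e}$ iff $1 \in \sem{f}$, hence $e \in \sacc$ iff $f \in \sacc$.

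If one prefers a proof that does not pass through the semantics, the alternative is an induction on the derivation of $e \simeq f$ from the axioms in Definition~\ref{definition:congruence}. For each axiom one checks both sides against Definition~\ref{definition:accepting-terms}. For instance, $e_1 + 0 \in \sacc$ iff $e_1 \in \sacc$ (since $0 \notin \sacc$); $e_1 + e_1 \in \sacc$ iff $e_1 \in \sacc$; the associativity and commutativity of $+$ are clear; and for each of $0 \cdot e$, $e \cdot 0$, $0 \parallel e$, $e \parallel 0$ neither side is in $\sacc$ because the inductive rules for $\cdot$ and $\parallel$ require \emph{both} operands to be accepting and $0 \notin \sacc$. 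The congruence closure cases (e.g., if $e \simeq e'$ then $e + f \simeq e' + f$) all follow from the inductive hypothesis by inspecting the defining rule for the corresponding operator.

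I expect no serious obstacle here; the only thing to be a little careful about is not overlooking that the rules for $\sacc$ at $\cdot$ and $\parallel$ demand both operands be accepting, which is exactly what makes the $0$-annihilation axioms work out. The semantic route is the cleanest and is the one I would present.
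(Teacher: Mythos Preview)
Your proposal is correct. Both routes you outline are valid, and interestingly the paper takes your second route---a direct induction on the derivation of $e \simeq f$, checking each of the axioms and each congruence closure rule against Definition~\ref{definition:accepting-terms}---rather than the semantic route you prefer.

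The semantic route is genuinely shorter: the paper already records (immediately after Definition~\ref{definition:accepting-terms}) that $e \in \sacc$ if and only if $1 \in \sem{e}$, and Lemma~\ref{lemma:congruence-sound} is proved independently of the present lemma, so there is no circularity. What you gain is economy; what the paper's direct argument buys is self-containment---it does not route the claim through the denotational semantics at all, which keeps the operational development of Section~\ref{section:expressions-to-automata} decoupled from $\sem{-}$ until the final correctness theorem. Either is perfectly acceptable here.
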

\begin{arxivproof}
Refer to Appendix~\ref{appendix:proofs}.
\end{arxivproof}

Also, $\ssderiv$ and $\psderiv$ are well-defined with respect to $\simeq$, in the following sense:
\begin{restatable}{lemma}{congruentstatesderive}%
\label{lemma:congruent-states-derive}
Let $e, f \in \terms$ such that $e \simeq f$.
If $a \in \Sigma$, then $\ssderiv(e, a) \simeq \ssderiv(f, a)$.
Moreover, if $\phi = \mset{g, h} \in \binom{\terms}{2}$ with $g, h \not\simeq 0$, then $\psderiv(e, \phi) \simeq \psderiv(f, \phi)$, and if $\psi \in \binom{\terms}{2}$ with $\phi \simeq \psi$, then $\psderiv(e, \phi) = \psderiv(e, \psi)$.
\end{restatable}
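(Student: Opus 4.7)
The plan is to prove each of the three assertions separately. For the first two, I would use induction on the derivation of $e \simeq f$ (i.e., case-analyse on which rule of Definition~\ref{definition:congruence} was last applied). For the third, I would use induction on the structure of $e$. A crucial tool throughout is Lemma~\ref{lemma:congruent-states-halt}, which lets me treat $e \star h$ and $e' \star h'$ as yielding $\simeq$-congruent terms whenever $e \simeq e'$ and $h \simeq h'$, because the guard $e \in \sacc$ depends only on the $\simeq$-class of $e$.

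For the sequential derivative claim, the congruence-closure rules (reflexivity, symmetry, transitivity, and compatibility with $+$, $\cdot$, $\parallel$ and $^*$) follow immediately from the induction hypothesis and the recursive definition of $\ssderiv$, appealing to Lemma~\ref{lemma:congruent-states-halt} whenever a $\star$-guard appears. For each axiom one does a short computation: for instance, $\ssderiv(e_1 + 0, a) = \ssderiv(e_1, a) + 0 \simeq \ssderiv(e_1, a)$, and $\ssderiv(0 \cdot e_1, a) = 0 \cdot e_1 + 0 \star \ssderiv(e_1, a) \simeq 0 = \ssderiv(0, a)$. The remaining axioms (idempotence, commutativity, and associativity of $+$, and the parallel-annihilation laws) are handled in the same way.

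The parallel-derivative claim is analogous in structure, but the side condition that $\phi = \mset{g, h}$ with $g, h \not\simeq 0$ is used precisely in the axioms $0 \parallel e_2 \simeq 0$ and $e_1 \parallel 0 \simeq 0$. Taking the first one as a sample, $\psderiv(0 \parallel e_2, \phi) = [\phi \simeq \mset{0, e_2}] + 0 \star \psderiv(e_2, \phi) + e_2 \star \psderiv(0, \phi)$; the two $\star$-terms reduce to $0$, and the indicator is $0$ because $\phi$'s components are both $\not\simeq 0$, so $\phi \not\simeq \mset{0, e_2}$. The congruence case for $\parallel$ requires a small extra argument: if $g_1 \parallel h_1 \simeq g_2 \parallel h_2$ via $g_1 \simeq g_2$ and $h_1 \simeq h_2$, then by definition $\mset{g_1, h_1} \simeq \mset{g_2, h_2}$, so transitivity of $\simeq$ makes the indicators $[\phi \simeq \mset{g_1, h_1}]$ and $[\phi \simeq \mset{g_2, h_2}]$ agree as syntactic terms ($0$ or $1$); the remaining summands match via the IH and Lemma~\ref{lemma:congruent-states-halt}.

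For the third claim, induction on $e$ suffices. The base cases $e \in \{0, 1\} \cup \Sigma$ are immediate because $\psderiv(e, -)$ is the constant $0$. In the cases $e = e_1 + e_2$, $e = e_1 \cdot e_2$ and $e = e_1^*$, the multiset argument appears only inside recursive calls, so the inductive hypothesis yields equality on the nose. The only case that directly inspects the multiset is $e = e_1 \parallel e_2$, through the indicator $[\phi \simeq \mset{e_1, e_2}]$; transitivity of $\simeq$ gives $[\phi \simeq \mset{e_1, e_2}] = [\psi \simeq \mset{e_1, e_2}]$ as syntactic terms, and the remaining summands are equal by the IH. The main obstacle is bookkeeping in the parallel case of the second claim: one must simultaneously handle the interaction between $\simeq$ on multisets and the indicator term, and keep the various $\star$-guards coherent, both of which ultimately rest on Lemma~\ref{lemma:congruent-states-halt}.
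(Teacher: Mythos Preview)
Your proposal is correct and follows essentially the same approach as the paper: induction on the derivation of $e \simeq f$ for the first two claims, structural induction on $e$ for the third, with Lemma~\ref{lemma:congruent-states-halt} handling the $\star$-guards. You have also correctly identified the one subtle point, namely that the side condition $g, h \not\simeq 0$ is exactly what is needed to make the indicator $[\phi \simeq \mset{0, e_2}]$ vanish in the parallel-annihilation axioms.
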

\begin{arxivproof}
Refer to Appendix~\ref{appendix:proofs}.
\end{arxivproof}

With these lemmas in hand, we can show that $\simeq$ is a ``bisimulation'' with respect to $\satracerel$.
\begin{lemma}%
\label{lemma:congruent-states-progress}
Let $e, f \in \terms$ be such that $e \simeq f$.
If $e \satrace{U} e'$, then there exists an $f' \in \terms$ such that $f \satrace{U} f'$ and $e' \simeq f'$.
\end{lemma}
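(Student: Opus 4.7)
The plan is to induct on the pomset $U$, using Lemma~\ref{lemma:pomset-unique-factorization} to split into the three cases of a primitive, sequentially composed, or parallel composed pomset. The base case and sequential case should be routine: for $U = a$, set $f' = \ssderiv(f, a)$ and appeal to Lemma~\ref{lemma:congruent-states-derive}, which yields $\ssderiv(e,a) \simeq \ssderiv(f,a)$; for $U = V \cdot W$, factor the trace $e \satrace{V} e'' \satrace{W} e'$ and apply the induction hypothesis first to $(e,f,V)$ to obtain $f''$ with $e'' \simeq f''$, then to $(e'',f'',W)$ to obtain the desired $f'$.

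The parallel case $U = V \parallel W$ is where the subtlety lies. The trace $e \satrace{U} e'$ is witnessed by some $r, s \in \terms$ and $r', s' \in \sacc$ with $r \satrace{V} r'$ and $s \satrace{W} s'$, so that $e' = \psderiv(e, \mset{r, s})$. The natural choice is $f' = \psderiv(f, \mset{r,s})$, reusing the same witnesses (the accepting sub-traces do not change), and then invoking Lemma~\ref{lemma:congruent-states-derive} to conclude $\psderiv(e, \mset{r,s}) \simeq \psderiv(f, \mset{r,s})$. The obstacle is that this half of Lemma~\ref{lemma:congruent-states-derive} requires the side condition $r, s \not\simeq 0$.

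To discharge this side condition, I would prove a short auxiliary lemma: if $r \simeq 0$ and $r \satrace{V} r'$ for some $V \in \pom^+$, then $r' \simeq 0$. This proceeds by induction on $V$: for $V = a$ apply Lemma~\ref{lemma:congruent-states-derive} to get $\ssderiv(r,a) \simeq \ssderiv(0,a) = 0$; for the sequential case chain the induction hypothesis; for the parallel case, if the chosen witnesses $r_1, r_2$ are themselves $\simeq 0$ then the induction hypothesis forces the corresponding sub-targets to be $\simeq 0$ and hence (by Lemma~\ref{lemma:congruent-states-halt}, using $0 \notin \sacc$) not accepting, a contradiction; so $r_1, r_2 \not\simeq 0$ and Lemma~\ref{lemma:congruent-states-derive} yields $\psderiv(r, \mset{r_1, r_2}) \simeq \psderiv(0, \mset{r_1, r_2}) = 0$. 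Combined with Lemma~\ref{lemma:congruent-states-halt} and the fact that $0 \notin \sacc$, this auxiliary lemma shows that if $r \satrace{V} r' \in \sacc$ then $r \not\simeq 0$, and symmetrically for $s$, justifying the application of Lemma~\ref{lemma:congruent-states-derive} in the parallel case and completing the induction.
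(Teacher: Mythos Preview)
Your proof is correct and follows essentially the same approach as the paper: induction on $U$, with the same choice of $f'$ in each case and the same appeals to Lemma~\ref{lemma:congruent-states-derive}. The only difference is how the side condition $r,s \not\simeq 0$ is discharged in the parallel case: the paper handles it by a forward reference to Lemma~\ref{lemma:trace-deconstruct-base} (a trace from the literal term $0$ cannot reach $\sacc$), whereas you prove a standalone auxiliary lemma covering all terms $\simeq 0$ --- your treatment is in fact more careful, since the paper's citation as written only yields $r,s \neq 0$ rather than $r,s \not\simeq 0$, and would need to be supplemented by the induction hypothesis on the smaller pomset to close the gap.
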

\begin{arxivproof}
The proof proceeds by induction on $U$.
In the base, $U = a \in \Sigma$.
But then $e' = \ssderiv(e, a)$.
If we choose $f' = \ssderiv(f, a)$, we find that $f \satrace{U} f'$ and $e' \simeq f'$ by Lemma~\ref{lemma:congruent-states-derive}.

For the inductive step, there are two cases to consider.
\begin{itemize}
    \item If $U = V \cdot W$ with $V$ and $W$ smaller than $U$, there exists an $e'' \in \terms$ such that $e \satrace{V} e''$ and $e'' \satrace{W} e'$.
    By induction, we find an $f'' \in \terms$ such that $f \satrace{V} f''$ and $e'' \simeq f''$.
    Again by induction, we find $f' \in \terms$ such that $f'' \satrace{W} f'$ and $e' \simeq f'$.
    This means that $f \satrace{U} f'$ with $e' \simeq f'$.
    \item If $U = V \parallel W$ with $V$ and $W$ smaller than $U$, there exist $g, h \in \terms$ and $g', h' \in \sacc$ such that $g \satrace{V} g'$ and $h \satrace{W} h'$, and $e' = \psderiv(e, \mset{g, h})$.
    Note that, in this case, $g, h \neq 0$ by Lemma~\ref{lemma:trace-deconstruct-base} (which will be proved shortly).
    If we choose $f' = \psderiv(f, \mset{g, h})$ we find that $f \satrace{U} f'$ with $e' \simeq f'$ by Lemma~\ref{lemma:congruent-states-derive}. \qedhere
\end{itemize}
\end{arxivproof}

Let $I$ be a finite set, and let ${(e_i)}_{i \in I}$ be an $I$-indexed family of terms.
In the sequel, we treat $\sum_{i \in I} e_i$ as a term, where the $e_i$ are summed in some arbitrary order or bracketing.
The lemmas above guarantee that the precise choice of representing this sum as a term makes no matter with regard to the traces allowed.

\subsection{Trace deconstruction}%
\label{subsection:trace-deconstruction}

We proceed with a series of lemmas that characterise reachable states in the syntactic PA\@.
More precisely, we show that the expressions reachable from some expression $e$ can be written as sums of expressions reachable from subexpressions of $e$.
For this reason, we refer to these observations as \emph{trace deconstruction lemmas}: they deconstruct a trace of an expression into traces of ``smaller'' expressions.
The purpose of these lemmas is twofold; in Section~\ref{subsection:soundness}, they are used to characterise the languages of expressions as they appear in the syntactic PA, while in Section~\ref{subsection:bounding} they allow us to bound the reach of an expression.

We start by analysing the traces that originate in base terms, such as $0$, $1$, or $a \in \Sigma$.
\begin{lemma}%
\label{lemma:trace-deconstruct-base}
Let $e, e' \in \terms$ and $U \in \pom^+$ such that $e \satrace{U} e'$.
If $e \in \{ 0, 1 \}$, then $e' = 0$.
Furthermore, if $e = b \in \Sigma$, then either $e' = 1$ and $U = b$, or $e' = 0$.
\end{lemma}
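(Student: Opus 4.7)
The plan is to prove all three statements together by induction on the pomset $U$, using Lemma~\ref{lemma:pomset-unique-factorization} to break into the primitive, sequential, and parallel cases (the lemma applies since $U \in \pom^+$ is nonempty).

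For the base case $U = a \in \Sigma$, we have $e' = \ssderiv(e, a)$, and we read off the definition of $\ssderiv$ directly: $\ssderiv(0, a) = 0$, $\ssderiv(1, a) = 0$, while $\ssderiv(b, a) = [a = b]$, which is either $1$ (in which case $U = a = b$) or $0$. This covers all three claims in the base.

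For the inductive step in the parallel case $U = V \parallel W$, we have $e' = \psderiv(e, \mset{g, h})$ for some $g, h \in \terms$. The claim is immediate since $\psderiv(0, \phi) = \psderiv(1, \phi) = \psderiv(b, \phi) = 0$ by definition, forcing $e' = 0$ in all three sub-cases.

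The sequential case $U = V \cdot W$ is the only place where the induction hypothesis is actually used: there is some $e'' \in \terms$ with $e \satrace{V} e''$ and $e'' \satrace{W} e'$. Applying the IH to the first trace: if $e \in \{0, 1\}$, then $e'' = 0$; if $e = b$, then $e''$ is either $0$ or $1$. In every sub-case $e''$ lies in $\{0, 1\}$, so applying the IH a second time to $e'' \satrace{W} e'$ gives $e' = 0$, as required. I do not anticipate any real obstacle; the argument is just careful case bookkeeping, and it is essentially a sanity check confirming that the ``dead'' expressions $0$ and $1$ really do behave as expected in $A_\Sigma$, while a primitive $b$ can only take a single productive step.
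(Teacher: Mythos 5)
Your proof is correct and follows essentially the same route as the paper's: induction on $U$ via the primitive/sequential/parallel case split, reading off the definitions of $\ssderiv$ and $\psderiv$ in the base and parallel cases, and using the induction hypothesis twice in the sequential case. The only cosmetic difference is that you run a single simultaneous induction for both claims, whereas the paper proves the $e \in \{0,1\}$ claim first and then invokes it (rather than the induction hypothesis) when handling the sequential case for $e = b$; the two organizations are interchangeable here.
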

\begin{arxivproof}
The proof of the first claim proceeds by induction on $U$.
In the base, $U = a \in \Sigma$, and $e' = \ssderiv(e, a) = 0$, thus the claim holds.

For the inductive step, there are two cases to consider.
\begin{itemize}
    \item If $U = V \cdot W$, with $V$ and $W$ smaller than $U$, then there exists an $e'' \in \terms$ such that $e \satrace{V} e''$ and $e'' \satrace{W} e'$.
    By induction, $e'' = 0$, and thus again by induction $e' = 0$.
    \item If $U = V \parallel W$, with $V$ and $W$ smaller than $U$, then there exist $f, g \in \terms$ and $f', g' \in \sacc$ such that $f \satrace{V} f'$ and $g \satrace{W} g'$, and furthermore $e' = \psderiv(e, \mset{f, g}) = 0$.
\end{itemize}

The proof of the second claim also proceeds by induction on $U$.
In the base, $U = a \in \Sigma$.
Then $e' = \ssderiv(b, a)$.
If $a = b$, then $e' = 1$, otherwise $e' = 0$; thus the claim holds.

For the inductive step, there are two cases to consider.
\begin{itemize}
    \item If $U = V \cdot W$ with $V$ and $W$ smaller than $U$, then there exists an $e'' \in \terms$ such that $b \satrace{V} e''$ and $e'' \satrace{W} e'$.
    By induction, either $V = b$ and $e'' = 1$, or $e'' = 0$.
    In either case, $e' = 0$ by the first claim.
    \item If $U = V \parallel W$ with $V$ and $W$ smaller than $U$, then there exist $f, g \in \terms$ and $f', g' \in \sacc$ such that $f \satrace{V} f'$ and $g' \satrace{W} g'$, and $e' = \psderiv(e, \mset{f, g}) = 0$.
    The claim thus holds immediately. \qedhere
\end{itemize}
\end{arxivproof}
Note, however, that $0$ and $1$ are not indistinguishable, for $0 \not\in \sacc$ while $1 \in \sacc$.

We also consider the traces that originate in a sum of terms.
The intuition here is that the input is processed by both terms simultaneously, and thus the target state must be the sum of the states that are the result of processing the input for each term individually.
\begin{lemma}%
\label{lemma:trace-decompose-plus}
Let $e_1, e_2 \in \terms$ and $U \in \pom^+$.
If $e_1 + e_2 \satrace{U} e'$, then there exist $e_1', e_2' \in \terms$ such that $e' = e_1' + e_2'$, and $e_1 \satrace{U} e_1'$ and $e_2 \satrace{U} e_2'$.
\end{lemma}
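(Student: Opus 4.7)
The plan is to prove the lemma by induction on $U \in \pom^+$, splitting according to Lemma~\ref{lemma:pomset-unique-factorization}. The whole argument rides on the fact that both derivative operations distribute over $+$: directly from Definition~\ref{definition:derivatives}, $\ssderiv(e_1 + e_2, a) = \ssderiv(e_1, a) + \ssderiv(e_2, a)$ and $\psderiv(e_1 + e_2, \phi) = \psderiv(e_1, \phi) + \psderiv(e_2, \phi)$. So each time a single-symbol step or a parallel step is performed from $e_1 + e_2$, the resulting state is, by definition, already a sum whose summands are exactly the corresponding derivatives of $e_1$ and $e_2$.

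In the base case $U = a \in \Sigma$, the only applicable rule gives $e' = \ssderiv(e_1 + e_2, a)$, so taking $e_i' = \ssderiv(e_i, a)$ for $i \in \{1,2\}$ immediately yields the factorisation and the two traces $e_i \satrace{a} e_i'$.

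For the inductive step, case $U = V \cdot W$ with $V, W \in \pomsp^+$ strictly smaller: some intermediate $e''$ with $e_1 + e_2 \satrace{V} e''$ and $e'' \satrace{W} e'$ must exist. Applying the inductive hypothesis to $V$ gives $e'' = e_1'' + e_2''$ with $e_i \satrace{V} e_i''$. Then applying the hypothesis again to $e_1'' + e_2'' \satrace{W} e'$ gives $e' = e_1' + e_2'$ with $e_i'' \satrace{W} e_i'$, and concatenating the two yields $e_i \satrace{V \cdot W} e_i'$ as needed. Case $U = V \parallel W$: by the only parallel trace rule, there exist $f, g \in \terms$ and $f', g' \in \sacc$ with $f \satrace{V} f'$, $g \satrace{W} g'$, and $e' = \psderiv(e_1 + e_2, \mset{f, g})$. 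The distributivity of $\psderiv$ over $+$ gives $e' = \psderiv(e_1, \mset{f,g}) + \psderiv(e_2, \mset{f,g})$, and choosing these two summands as $e_1'$ and $e_2'$ makes each $e_i \satrace{V \parallel W} e_i'$ hold via the same witnesses $f, g, f', g'$.

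No serious obstacle is expected: the recursion in Definition~\ref{definition:derivatives} was specifically set up so that derivatives distribute over $+$, and Lemma~\ref{lemma:pomset-unique-factorization} guarantees the three cases in the induction are exhaustive and mutually exclusive, so nothing needs to be checked about the shape of $U$ beyond selecting the applicable trace rule. The only minor subtlety is the sequential case, where the induction hypothesis must be invoked twice on strictly smaller pomsets $V$ and $W$; both invocations are legitimate since $V, W \prec U$ in pomset size.
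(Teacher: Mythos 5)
Your proof is correct and follows essentially the same route as the paper's: induction on $U$ using the case split of Lemma~\ref{lemma:pomset-unique-factorization}, with the base and parallel cases resolved by the distributivity of $\ssderiv$ and $\psderiv$ over $+$, and the sequential case by two applications of the induction hypothesis. No gaps.
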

\begin{arxivproof}
The proof proceeds by induction on $U$.
In the base, $U = a \in \Sigma$, and $e' = \ssderiv(e_1 + e_2, a) = \ssderiv(e_1, a) + \ssderiv(e_2, a)$.
We can then choose $e_1' = \ssderiv(e_1, a)$ and $e_2' = \ssderiv(e_2, a)$ to validate the claim.

For the inductive step, there are two cases to consider.
\begin{itemize}
    \item If $U = V \cdot W$ with $V$ and $W$ smaller than $U$, then there exists an $e'' \in \terms$ such that $e_1 + e_2 \satrace{V} e''$ and $e'' \satrace{W} e'$.
    By induction, we find $e_1'', e_2'' \in \terms$ such that $e'' = e_1'' + e_2''$, and $e_1 \satrace{V} e_1''$ and $e_2 \satrace{V} e_2''$.
    Again by induction, we find $e_1', e_2' \in \terms$ such that $e' = e_1' + e_2'$, and $e_1'' \satrace{W} e_1'$ and $e_2'' \satrace{W} e_2'$.
    In conclusion, we find that $e_1 \satrace{U} e_1'$ and $e_2 \satrace{U} e_2'$.
    \item If $U = V \parallel W$, with $V$ and $W$ smaller than $U$, then there exist $f, g \in \terms$ such that $e' = \psderiv(e_1 + e_2, \mset{f, g}) = \psderiv(e_1, \mset{f, g}) + \psderiv(e_2, \mset{f, g})$, and furthermore there exist $f', g' \in \sacc$ such that $f \satrace{V} f'$ and $g \satrace{W} g'$.
    We can choose $e_1' = \psderiv(e_1, \mset{f, g})$ and $e_2' = \psderiv(e_2, \mset{f, g})$ to validate the claim. \qedhere
\end{itemize}
\end{arxivproof}

We now consider the traces starting in a sequential composition.
The intuition here is that the syntactic PA must first proceed through the left operand, before it can proceed to process the right operand.
Thus, either the pomset is processed by the left operand entirely, or we should be able to split the pomset in two sequential parts: the first part is processed by the left operand, and the second by the right operand.
\begin{lemma}%
\label{lemma:trace-deconstruct-sequential}
Let $e_1, e_2 \in \terms$ and $U \in \pom^+$ be such that $e_1 \cdot e_2 \satrace{U} f$.
There exist an $f' \in \terms$ and a finite set $I$, as well as $I$-indexed families ${(f_i')}_{i \in I}$ over $\sacc$ and ${(f_i)}_{i \in I}$ over $\terms$, and $I$-indexed families ${(U_i')}_{i \in I}$, ${(U_i)}_{i \in I}$ over $\pom^+$, such that:
\begin{itemize}
    \item $f \simeq f' \cdot e_2 + \sum_{i \in I} f_i$ and $e_1 \satrace{U} f'$, and
    \item for all $i \in I$, $e_1 \sartrace{U_i'} f_i'$, $e_2 \satrace{U_i} f_i$, and $U = U_i' \cdot U_i$.
\end{itemize}
\end{lemma}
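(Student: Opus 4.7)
The plan is to proceed by induction on $U$, using Lemma~\ref{lemma:pomset-unique-factorization} to branch on the three possible forms of $U$: primitive, sequential, or parallel.

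For the base case $U = a \in \Sigma$, I would unfold $f = \ssderiv(e_1 \cdot e_2, a) = \ssderiv(e_1, a) \cdot e_2 + e_1 \star \ssderiv(e_2, a)$, set $f' = \ssderiv(e_1, a)$ (so $e_1 \satrace{a} f'$), and populate the index set with one summand when $e_1 \in \sacc$ (taking $U_*' = 1$, $f_*' = e_1$, $U_* = a$, $f_* = \ssderiv(e_2, a)$, and using the reflexive case of $\sartrace{}$) or leave $I$ empty otherwise. The parallel inductive case $U = V \parallel W$ proceeds almost identically: the trace provides witnesses $g, h$ and $g', h' \in \sacc$ with $g \satrace{V} g'$ and $h \satrace{W} h'$, and $f = \psderiv(e_1, \mset{g, h}) \cdot e_2 + e_1 \star \psderiv(e_2, \mset{g, h})$; take $f' = \psderiv(e_1, \mset{g, h})$ and add at most one summand (using the same parallel rule to lift the fork to $e_2 \satrace{U} \psderiv(e_2, \mset{g, h})$ when $e_1 \in \sacc$).

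The sequential case $U = V \cdot W$ is the main work. From the trace I would extract an intermediate state $e''$ with $e_1 \cdot e_2 \satrace{V} e''$ and $e'' \satrace{W} f$, and apply the induction hypothesis to the first segment to obtain $e' \in \terms$ with $e_1 \satrace{V} e'$ together with a family indexed by some $I$ giving $e'' \simeq e' \cdot e_2 + \sum_{i \in I} e_i$. Next, Lemma~\ref{lemma:congruent-states-progress} transfers the trace $e'' \satrace{W} f$ across this congruence, and iterating Lemma~\ref{lemma:trace-decompose-plus} splits the resulting trace into a trace $e' \cdot e_2 \satrace{W} f''$ together with traces $e_i \satrace{W} f_i''$ for each $i \in I$. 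A second invocation of the induction hypothesis, this time on $e' \cdot e_2 \satrace{W} f''$, yields $f'$ with $e' \satrace{W} f'$ (hence $e_1 \satrace{U} f'$) and another family indexed by some $J$ such that $f'' \simeq f' \cdot e_2 + \sum_{j \in J} f_j$. I would then combine the two families over $K = I \sqcup J$: the $I$-part contributes summands $f_i''$ via transitivity $e_2 \satrace{V_i \cdot W} f_i''$ with factorisation $V_i' \cdot (V_i \cdot W) = U$, and the $J$-part contributes summands $f_j$ via composition $e_1 \sartrace{V \cdot W_j'} f_j'$ with $(V \cdot W_j') \cdot W_j = U$.

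The hard part will be the bookkeeping in the sequential case: transporting the tail trace across the congruence $\simeq$, then across the sum decomposition, and finally splicing two families while tracking how each pomset factorisation $U_k' \cdot U_k$ witnesses $U$. A minor subtlety is that Lemma~\ref{lemma:trace-decompose-plus} may return $W_j' = 1$ in the reflexive case of $\sartrace{}$, in which case $V \cdot W_j' = V \in \pom^+$ and $f_j' = e'$ remains accepting exactly when the induction hypothesis certifies so; these boundary conditions need to be checked explicitly rather than swept into the $\pom^+$ convention.
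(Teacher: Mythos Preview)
Your proposal is correct and follows essentially the same approach as the paper's proof: induction on $U$, with the base and parallel cases handled by direct unfolding of $\ssderiv$/$\psderiv$ and branching on $e_1 \in \sacc$, and the sequential case handled by applying the induction hypothesis to the first segment, transporting the tail trace across $\simeq$ via Lemma~\ref{lemma:congruent-states-progress}, splitting it with Lemma~\ref{lemma:trace-decompose-plus}, applying the induction hypothesis a second time to the $e'\cdot e_2$ summand, and merging the two resulting families. Your closing remark slightly misattributes the source of the possibly-empty $W_j'$ (it comes from the second induction hypothesis, not from Lemma~\ref{lemma:trace-decompose-plus}), but the concern itself and its resolution are exactly the boundary checks the paper's proof performs.
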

\begin{arxivproof}
The proof proceeds by induction on $U$.
In the base, $U = a \in \Sigma$, and $f = \ssderiv(e_1 \cdot e_2, a)$.
If on the one hand $e_1 \not\in \sacc$, then $f \simeq \ssderiv(e_1, a) \cdot e_2$; we choose $f' = \ssderiv(e_1, a)$ and $I = \emptyset$.
If on the other hand $e_1 \in \sacc$, then $f \simeq \ssderiv(e_1, a) \cdot e_2 + \ssderiv(e_2, a)$; we choose $f' = \ssderiv(e_1, a)$, $I = \{ * \}$, $f_*' = e_1$, $f_* = \ssderiv(e_2, a)$, $U_*' = 1$ and $U_* = U$.
In either case, our choices validate the claim.

For the inductive step, we consider two cases.
\begin{itemize}
    \item If $U = V \cdot W$, with $V$ and $W$ smaller than $U$, then there exists a $g$ such that $e_1 \cdot e_2 \satrace{V} g$ and $g \satrace{W} f$.
    By induction, we obtain a $g' \in \terms$ and a finite set $J$, as well as $J$-indexed families ${(g_j')}_{j \in J}$ over $\sacc$ and ${(g_j)}_{j \in J}$ over $\terms$, and $J$-indexed families ${(V_j')}_{j \in J}$ and ${(V_j)}_{j \in J}$ over $\pom^+$, such that
    \begin{itemize}
        \item $g \simeq g' \cdot e_2 + \sum_{j \in J} g_j$ and $e_1 \satrace{V} g'$, and
        \item for all $j \in J$, $e_1 \sartrace{V_j'} g_j'$, $e_2 \satrace{V_j} g_j$ and $V = V_j' \cdot V_j$.
    \end{itemize}
    By Lemma~\ref{lemma:congruent-states-progress} and $g \simeq g' \cdot e_2 + \sum_{j \in J} g_i$, as well as Lemma~\ref{lemma:trace-decompose-plus} and $g \satrace{W} f$, there exist $h \in \terms$ and a $J$-indexed family of terms ${(d_j)}_{j \in J}$ over $\terms$ such that $f \simeq h + \sum_{j \in J} d_j$, $g' \cdot e_2 \satrace{W} h$ and for all $j \in J$, also $g_j \satrace{W} d_j$.

    Since $g' \cdot e_2 \satrace{W} h$, we can find (again by induction) an $h' \in \terms$, and a finite set $K$ (without loss of generality, disjoint from $J$) and $K$-indexed families ${(h_k')}_{k \in K}$ over $\sacc$ and ${(h_k)}_{k \in K}$ over $\terms$, as well as $K$-indexed families ${(W_k')}_{k \in K}$ over $\pom$ and ${(W_k)}_{k \in K}$ over $\pom^+$ such that
    \begin{itemize}
        \item $h \simeq h' \cdot e_2 + \sum_{k \in K} h_k$ and $g' \satrace{W} h'$, and
        \item for all $k \in K$, $g' \sartrace{W_k'} h_k'$, $e_2 \satrace{W_k} h_k$ and $W = W_k' \cdot W_k$.
    \end{itemize}

    We now choose $I = J \cup K$, and $f' = h'$.
    Furthermore, we choose $I$-indexed families ${(f_i')}_{i \in I}$ over $\sacc$ and ${(f_i)}_{i \in I}$ over $\terms$, as well as $I$-indexed families ${(U_i')}_{i \in I}$ over $\pom$ and ${(U_i)}_{i \in I}$ over $\pom^+$ as follows:
    \begin{align*}
    f_i' &=
    \begin{cases}
    g_i' & i \in J \\
    h_i' & i \in K
    \end{cases}
    &
    f_i &=
    \begin{cases}
    d_i & i \in J \\
    h_i & i \in K
    \end{cases}
    &
    U_i' &=
    \begin{cases}
    V_i' & i \in J \\
    V \cdot W_i' & i \in K
    \end{cases}
    &
    U_i &=
    \begin{cases}
    V_i \cdot W & i \in J \\
    W_i & i \in K
    \end{cases}
    \end{align*}

    It remains to verify the requirements on our choices one by one.
    For the first claim, we can derive
    \[f \simeq h + \sum_{j \in J} d_j \simeq h' \cdot e_2 + \sum_{k \in K} h_k + \sum_{j \in J} d_j \simeq f' \cdot e_2 + \sum_{i \in I} f_i\]
    Furthermore, since $e_1 \satrace{V} g'$ and $g' \satrace{W} h'$, we have that $e_1 \satrace{U} h' = f'$.
    For the second claim, let us fix an $i \in I$.
    Suppose first that $i \in J$, then:
    \begin{itemize}
        \item $e_1 \sartrace{V_i'} g_i'$, thus $e_1 \sartrace{U_i'} f_i'$
        \item $e_2 \satrace{V_i} g_i$ and $g_i \satrace{W} d_i$, thus $e_2 \satrace{U_i} f_i$
        \item $U_i' \cdot U_i = V_i' \cdot V_i \cdot W = V \cdot W = U$.
    \end{itemize}
    Secondly, suppose that $i \in K$, then:
    \begin{itemize}
        \item $e_1 \satrace{V} g'$ and $g' \sartrace{W_i'} h_i'$, thus $e_1 \sartrace{U_i'} f_i'$
        \item $e_2 \satrace{W_i} h_i$, thus $e_2 \satrace{U_i} f_i$
        \item $U_i' \cdot U_i = V \cdot W_i' \cdot W_i = V \cdot W = U$.
    \end{itemize}
    All requirements are thus validated for our choices in this case.

    \item If $U = V \parallel W$, then there exist $g, h \in \terms$ and $g', h' \in \sacc$ such that $g \satrace{V} g'$ and $h \satrace{W} h'$, and we know that $f = \psderiv(e_1 \cdot e_2, \mset{g, h})$.
    If on the one hand $e_1 \not\in \sacc$, then $f \simeq \psderiv(e_1, \mset{g, h}) \cdot e_2$; we choose $f' = \psderiv(e_1, \mset{g_1, g_2})$ and $I = \emptyset$.
    If on the other hand $e_1 \in \sacc$, then $f \simeq \psderiv(e_1, \mset{g, h}) \cdot e_2 + \psderiv(e_2, \mset{g, h})$; we choose $f' = \psderiv(e_1, \mset{g_1, g_2})$ and $I = 1$, as well as $f_*' = e_1$, $f_* = \psderiv(e_2, \mset{g_1, g_2})$, $U_*' = 1$ and $U_*' = U$.
    In both cases, our choices validate the claim. \qedhere
\end{itemize}
\end{arxivproof}

The next deconstruction lemma concerns traces originating in a parallel composition.
Intuitively, the syntactic PA either processes parallel components of the pomset, or processes according to one operand, provided that the other operand allows immediate acceptance.

\begin{lemma}%
\label{lemma:trace-deconstruct-parallel}
If $e_1 \parallel e_2 \satrace{U} f$, then there exist $f_1, f_2, f_3 \in \terms$, such that
\begin{itemize}
    \item $f \simeq f_1 + f_2 + f_3$,
    \item either $f_1 = 0$, or $e_2 \in \sacc$ and $e_1 \satrace{U} f_1$,
    \item either $f_2 = 0$, or $e_1 \in \sacc$ and $e_2 \satrace{U} f_2$, and
    \item either $f_3 = 0$, or $f_3 = 1$ and there exist $f_1', f_2' \in \sacc$ and $U_1, U_2 \in \pom^+$ such that $U = U_1 \parallel U_2$ and $e_1 \satrace{U_1} f_1'$ and $e_2 \satrace{U_2} f_2'$.
\end{itemize}
\end{lemma}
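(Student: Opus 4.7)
The plan is to proceed by induction on the pomset $U$, mimicking the style of the earlier trace-deconstruction lemmas and exploiting Lemmas~\ref{lemma:trace-deconstruct-base}, \ref{lemma:trace-decompose-plus}, \ref{lemma:congruent-states-progress}, and \ref{lemma:congruent-states-halt}.

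In the base case, $U = a \in \Sigma$, so $f = \ssderiv(e_1 \parallel e_2, a) = e_1 \star \ssderiv(e_2, a) + e_2 \star \ssderiv(e_1, a)$. I will set $f_3 = 0$ (no parallel split of a single letter is possible since $a$ is primitive by Lemma~\ref{lemma:pomset-unique-factorization}), and take $f_1 = \ssderiv(e_1, a)$ or $f_1 = 0$ depending on whether $e_2 \in \sacc$, and symmetrically for $f_2$. The definition of $e \star f$ ensures $f \simeq f_1 + f_2 + f_3$ up to unit laws.

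For the inductive step with $U = V \cdot W$, there is some $g$ with $e_1 \parallel e_2 \satrace{V} g$ and $g \satrace{W} f$. The induction hypothesis on $V$ gives $g \simeq g_1 + g_2 + g_3$ with the required properties. Lemma~\ref{lemma:congruent-states-progress} lets me pass $\satrace{W}$ across $\simeq$ to obtain a trace from $g_1 + g_2 + g_3$, and iterating Lemma~\ref{lemma:trace-decompose-plus} splits it as $g_i \satrace{W} f_i^*$ for $i = 1, 2, 3$ with $f \simeq f_1^* + f_2^* + f_3^*$. By Lemma~\ref{lemma:trace-deconstruct-base}, whenever $g_i \in \{0, 1\}$ the successor $f_i^*$ must be $0$; in particular the $g_3 \in \{0, 1\}$ cases collapse to $f_3 := 0$ (again consistent with $U = V \cdot W$ not admitting a $\parallel$-decomposition by Lemma~\ref{lemma:pomset-unique-factorization}). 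In the remaining cases, $e_1 \satrace{V} g_1$ (with $e_2 \in \sacc$) composes with $g_1 \satrace{W} f_1^*$ to give $e_1 \satrace{U} f_1^*$, and symmetrically for $f_2^*$, so I take $f_1 := f_1^*$, $f_2 := f_2^*$, $f_3 := 0$.

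For the inductive step with $U = V \parallel W$, there are $r, s \in \terms$ and $r', s' \in \sacc$ with $r \satrace{V} r'$, $s \satrace{W} s'$, and $f = \psderiv(e_1 \parallel e_2, \mset{r, s}) = [\mset{r, s} \simeq \mset{e_1, e_2}] + e_1 \star \psderiv(e_2, \mset{r, s}) + e_2 \star \psderiv(e_1, \mset{r, s})$. The three summands correspond exactly to $f_3, f_2, f_1$. When $e_2 \in \sacc$, the third summand equals $\psderiv(e_1, \mset{r, s})$, which is reached from $e_1$ via the parallel rule with $r \satrace{V} r'$ and $s \satrace{W} s'$, yielding $e_1 \satrace{U} f_1$; the symmetric argument handles $f_2$. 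For $f_3$, when $\mset{r, s} \simeq \mset{e_1, e_2}$, I use Lemma~\ref{lemma:congruent-states-progress} (applied separately to the two cases of the multiset congruence) to lift $r \satrace{V} r'$ and $s \satrace{W} s'$ to traces $e_1 \satrace{U_1} f_1'$ and $e_2 \satrace{U_2} f_2'$, where $\{U_1, U_2\} = \{V, W\}$, and invoke Lemma~\ref{lemma:congruent-states-halt} to preserve the accepting property. The subtlest point, and the one I expect to be the main obstacle, is bookkeeping this multiset congruence cleanly, since $r$ and $s$ need not literally equal $e_1, e_2$ in the right order; this is exactly why Lemmas~\ref{lemma:congruent-states-halt} and~\ref{lemma:congruent-states-progress} were proved first.
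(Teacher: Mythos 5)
Your proposal is correct and follows essentially the same route as the paper's proof: the same choice of $f_1, f_2, f_3$ in the base and parallel cases (reading off the summands of $\ssderiv$ and $\psderiv$), the same use of Lemmas~\ref{lemma:congruent-states-progress} and~\ref{lemma:trace-decompose-plus} to split the trace in the sequential case, the same appeal to Lemma~\ref{lemma:trace-deconstruct-base} to kill $f_3$ there, and the same WLOG handling of the multiset congruence $\mset{r,s}\simeq\mset{e_1,e_2}$ via Lemmas~\ref{lemma:congruent-states-progress} and~\ref{lemma:congruent-states-halt}.
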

\begin{arxivproof}
The proof proceeds by induction on $U$.
In the base, $U = a \in \Sigma$.
Now, choose $f_2 = e_1 \star \ssderiv(e_2, a)$ and $f_1 = e_2 \star \ssderiv(e_1, a)$.
We also choose $f_3 = 0$.
It is easy to validate that the claim holds for these choices.

For the inductive step, there are two cases to consider.
\begin{itemize}
    \item If $U = V \cdot W$ with $V$ and $W$ smaller than $U$, then there exists a $g \in \terms$ such that $e_1 \parallel e_2 \satrace{V} g$ and $g \satrace{W} f$.
    By induction, we obtain $g_1, g_2, g_3$ such that:
    \begin{itemize}
        \item $g \simeq g_1 + g_2 + g_3$,
        \item either $g_1 = 0$, or $e_2 \in \sacc$ and $e_1 \satrace{V} g_1$,
        \item either $g_2 = 0$, or $e_1 \in \sacc$ and $e_2 \satrace{V} g_2$, and
        \item either $g_3 = 0$, or $g_3 = 1$ and there exist $e_1', e_2' \in \terms$ and $g_1', g_2' \in \sacc$ and $V_1, V_2 \in \pom^+$ such that $e_1 \simeq e_1'$ and $e_2 \simeq e_2'$ and $V = V_1 \parallel V_2$ and $e_1' \satrace{V_1} g_1'$ and $e_2' \satrace{V_2} g_2'$.
    \end{itemize}
    By $g \simeq g_1 + g_2 + g_3$ and Lemma~\ref{lemma:congruent-states-progress}, as well as $g \satrace{W} f$ and Lemma~\ref{lemma:trace-decompose-plus}, we obtain $f_1, f_2, f_3 \in \terms$ such that $f \simeq f_1 + f_2 + f_3$, and $g_1 \satrace{W} f_1$, $g_2 \satrace{W} f_2$ and $g_3 \satrace{W} f_3$.
    We now validate the remaining claims.
    First, note that if $g_i = 0$ for all $i \in \{ 1, 2, 3 \}$, then $f_i = 0$ by Lemma~\ref{lemma:trace-deconstruct-base}.
    As to the remaining possibilities:
    \begin{itemize}
        \item If $g_1 \neq 0$, then $e_2 \in \sacc$ and by $e_1 \satrace{V} g_1$ and $g_1 \satrace{W} f_1$ we conclude that $e_1 \satrace{U} f_1$.
        \item If $g_2 \neq 0$, then $e_1 \in \sacc$ and by $e_2 \satrace{V} g_2$ and $g_2 \satrace{W} f_2$ we conclude that $e_2 \satrace{U} f_2$.
        \item If $g_3 \neq 0$, then $g_3 = 1$ and we conclude that $f_3 = 0$ by Lemma~\ref{lemma:trace-deconstruct-base}.
    \end{itemize}

    \item If $U = V \parallel W$ with $V$ and $W$ smaller than $U$, then there exist $g, h \in \terms$ and $g', h' \in \sacc$ such that $f = \psderiv(e_1 \parallel e_2, \mset{g, h})$, and $g \satrace{V} g' \in \sacc$ and $h \satrace{W} h' \in \sacc$.
    We choose $f_1 = e_2 \star \psderiv(e_1, \mset{g, h})$ and $f_2 = e_1 \star \psderiv(e_2, \mset{g, h})$.
    Furthermore, we set $f_3 = [\mset{e_1, e_2} = \mset{g, h}]$.
    It is now easy to see that $f \simeq f_1 + f_2 + f_3$.
    We validate the remaining claims.
    \begin{itemize}
        \item If $e_2 \not\in \sacc$, then $f_1 = 0$.
        Otherwise, $e_2 \in \sacc$ and $e_1 \satrace{U} \psderiv(e_1, \mset{g, h}) = f_1$.
        \item If $e_1 \not\in \sacc$, then $f_2 = 0$.
        Otherwise, $e_1 \in \sacc$ and $e_2 \satrace{U} \psderiv(e_2, \mset{g, h}) = f_2$.
        \item If $\mset{e_1, e_2} \not\simeq \mset{g, h}$, then $f_3 = 0$.
        Otherwise, assume (without loss of generality) that $e_1 \simeq g$ and $e_2 \simeq h$.
        By Lemma~\ref{lemma:congruent-states-progress} and the fact that $g \satrace{V} g'$ as well as $h \satrace{W} h'$, there exist $f_1', f_2' \in \terms$ such that $e_1 \satrace{V} f_1'$ and $e_2 \satrace{W} f_2'$, with $f_1' \simeq g'$ and $f_2' \simeq h'$.
        By Lemma~\ref{lemma:congruent-states-halt} and the fact that $g', h' \in \sacc$, it then follows that $f_1, f_2' \in \sacc$.
        Choosing $U_1 = V$ and $U_2 = W$ now validates the claim. \qedhere
    \end{itemize}
\end{itemize}
\end{arxivproof}

Finally, we analyse the reachable states of an expression of the form $e^*$.
The intuition here is that, starting in $e^*$, the PA can iterate traces originating in $e$ indefinitely.
The trace should thus be sequentially decomposable, with each component the label of a trace originating in $e$.
Furthermore, all but the last target state of these traces should be accepting.

\begin{lemma}%
\label{lemma:trace-deconstruct-star}
If $e^* \satrace{U} f$, then there exists a finite set $I$ and an $I$-indexed family of finite sets ${(J_i)}_{i \in I}$, as well as $I$-indexed families ${(f_i)}_{i \in I}$ over $\terms$ and ${(U_i)}_{i \in I}$ over $\pom^+$, and for all $i \in I$ also $J_i$-indexed families ${(f_{i,j})}_{j \in J_i}$ over $\sacc$ and ${(U_{i,j})}_{j \in J_i}$ over $\pom^+$, such that $f \simeq \sum_{i \in I} f_i \cdot e^*$, and for all $i \in I$:
\begin{itemize}
    \item $e \satrace{U_i} f_i$,
    \item for all $j \in J_i$ we have that $e \satrace{U_{i,j}} f_{i,j}$, and
    \item $U = U_i' \cdot U_i$, where $U_i'$ is some concatenation of all $U_{i,j}$ for all $j \in J_i$.
\end{itemize}
\end{lemma}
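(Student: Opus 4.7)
The plan is to proceed by induction on the pomset $U$, following the structure of $\pom^+$ given by Lemma~\ref{lemma:pomset-unique-factorization}. Since $e^*$ is neither primitive nor a parallel composition, and since Lemma~\ref{lemma:pomset-unique-factorization} also applies to $U$ itself to disambiguate which syntactic trace rule was used, the three cases ($U = a$, $U = V \parallel W$, $U = V \cdot W$) can be handled separately based on which derivation rule produced $e^* \satrace{U} f$.

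The base case $U = a$ is immediate from $\ssderiv(e^*, a) = \ssderiv(e,a) \cdot e^*$: take a singleton index $I = \{*\}$ with empty $J_*$, $f_* = \ssderiv(e,a)$, $U_* = a$, and the empty concatenation $U_*' = 1$. The parallel case $U = V \parallel W$ is equally direct: Lemma~\ref{lemma:pomset-unique-factorization} forces the trace to use the parallel rule, so $f = \psderiv(e^*, \mset{g,h}) = \psderiv(e, \mset{g,h}) \cdot e^*$ for witnesses $g \satrace{V} g' \in \sacc$ and $h \satrace{W} h' \in \sacc$. The same witnesses give $e \satrace{V \parallel W} \psderiv(e, \mset{g,h})$, so once again a singleton $I$ with empty $J_*$ suffices.

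The real work is the sequential case $U = V \cdot W$. Here we decompose in two rounds. First, apply the induction hypothesis to $e^* \satrace{V} g$, giving $g \simeq \sum_{i \in I'} g_i \cdot e^*$ along with the families $(V_i), (V_{i,j}), (g_i), (g_{i,j})$ witnessing that each $g_i$ is reached from $e$ by $V_i$ after a concatenation $V_i^{\textsf{pre}}$ of complete $e$-iterations. Then use Lemma~\ref{lemma:congruent-states-progress} to transfer the $W$-trace to the sum, Lemma~\ref{lemma:trace-decompose-plus} (iteratively) to split it as $f \simeq \sum_{i \in I'} h_i$ with $g_i \cdot e^* \satrace{W} h_i$, and Lemma~\ref{lemma:trace-deconstruct-sequential} on each $g_i \cdot e^* \satrace{W} h_i$ to obtain $h_i \simeq h_i' \cdot e^* + \sum_{k \in K_i} h_{i,k}$, with $g_i \satrace{W} h_i'$ and for each $k \in K_i$ data $g_i \sartrace{W_{i,k}^{\textsf{pre}}} h_{i,k}^{\textsf{pre}} \in \sacc$ and $e^* \satrace{W_{i,k}} h_{i,k}$ with $W = W_{i,k}^{\textsf{pre}} \cdot W_{i,k}$. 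Finally, the inductive hypothesis applied to each $e^* \satrace{W_{i,k}} h_{i,k}$ unfolds the remaining $e^*$-traces. Combining these yields the new index set $I = I' \sqcup \{(i,k,m) : i \in I', k \in K_i, m \in M_{i,k}\}$, where the $M_{i,k}$ come from the inner IH calls.

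For each $i \in I'$ we put $f_i = h_i'$ and $U_i = V_i \cdot W$, observing that $e \satrace{V_i} g_i \satrace{W} h_i'$; the completed iterations are exactly those from the outer IH. For each $(i,k,m)$ we take $f_{(i,k,m)}$ and $U_{(i,k,m)}$ from the inner IH, and as completed-iteration family we take the union of: the outer $(V_{i,j}, g_{i,j})_{j \in J_i'}$; one extra iteration labelled by $V_i \cdot W_{i,k}^{\textsf{pre}}$ reaching $h_{i,k}^{\textsf{pre}} \in \sacc$ (valid because either $W_{i,k}^{\textsf{pre}} = 1$ and $g_i = h_{i,k}^{\textsf{pre}} \in \sacc$, or $W_{i,k}^{\textsf{pre}} \neq 1$ giving $e \satrace{V_i \cdot W_{i,k}^{\textsf{pre}}} h_{i,k}^{\textsf{pre}}$ by sequential composition of traces); and the completed iterations from the inner IH on $h_{i,k}$. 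A routine concatenation check then yields $U_{(i,k,m)}^{\textsf{pre}} \cdot U_{(i,k,m)} = V \cdot W_{i,k}^{\textsf{pre}} \cdot W_{i,k} = U$, and the overall congruence $f \simeq \sum_{i \in I} f_i \cdot e^*$ follows by telescoping the three decompositions together. The main obstacle is bookkeeping this three-level merge cleanly; the proof is otherwise not deep, as all necessary machinery is provided by the earlier deconstruction lemmas and the soundness of $\simeq$.
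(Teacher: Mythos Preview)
Your proposal is correct and follows essentially the same approach as the paper's own proof: induction on $U$, with the base and parallel cases handled by a singleton index with empty $J_*$, and the sequential case handled by applying the induction hypothesis to $e^* \satrace{V} g$, splitting the $W$-trace via Lemmas~\ref{lemma:congruent-states-progress} and~\ref{lemma:trace-decompose-plus}, applying Lemma~\ref{lemma:trace-deconstruct-sequential} to each $g_i \cdot e^* \satrace{W} h_i$, and then invoking the induction hypothesis again on the resulting $e^* \satrace{W_{i,k}} h_{i,k}$ traces before merging everything into the final index set. Your explicit case split on whether $W_{i,k}^{\textsf{pre}} = 1$ when forming the ``extra'' completed iteration is exactly what is needed to handle the $\sartrace{}$ in Lemma~\ref{lemma:trace-deconstruct-sequential}.
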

\begin{arxivproof}
The proof proceeds by induction on $U$.
In the base, $U = a \in \Sigma$, and so $f = \ssderiv(e^*, a) = \ssderiv(e, a) \cdot e^*$.
We can choose $I = \{ * \}$, $J_* = \emptyset$ and $U_* = a$ to validate the claim.

For the inductive step, we consider two cases.
\begin{itemize}
    \item If $U = V \cdot W$, with $V$ and $W$ smaller than $U$, there exists a $g$ such that $e^* \satrace{V} g$ and $g \satrace{W} f$.
    Since the remainder of this part of the proof is somewhat involved, we begin by outlining our strategy.
    First, we deconstruct the trace $e^* \satrace{V} g$, by induction.
    Then we deconstruct the trace $g \satrace{W} f$, making use of the fact that $g$ can be seen a sum of the form found in the claim, i.e., where each term is of the form $g' \cdot e^*$, and thus (by Lemma~\ref{lemma:trace-decompose-plus}) $f$ can be seen as a sum of terms $f'$ such that $g' \cdot e^* \satrace{W} f'$.
    We then leverage Lemma~\ref{lemma:trace-deconstruct-sequential} to deconstruct each of the latter traces.
    In the end, we have a big cache of variables, which we use to construct the families of terms and pomsets required by the claim.
    Since we will be dealing with a fair number of index sets, we tacitly assume (without loss of generality) that of them are disjoint.

    As for the proof, consider the trace $e^* \satrace{V} g$.
    By induction, we obtain a finite set $I'$ and an $I'$-indexed family of finite sets ${(J_i')}_{i \in I'}$, as well as $I'$-indexed families ${(g_i)}_{i \in I'}$ over $\terms$ and ${(V_i)}_{i \in I'}$ over $\pom^+$ and for all $i \in I'$ also $J_i'$-indexed families ${(g_{i,j})}_{j \in J_i'}$ over $\sacc$ and ${(V_{i,j})}_{j \in J_i'}$ over $\pom^+$, such that $g \simeq \sum_{i \in I'} g_i \cdot e^*$, and for all $i \in I'$:
    \begin{itemize}
        \item $e \satrace{V_i} g_i$,
        \item for all $j \in J_i'$ we have that $e \satrace{V_{i,j}} g_{i,j} \in \sacc$, and
        \item $V = V_i' \cdot V_i$, where $V_i'$ is some concatenation of $V_{i,j}$ for all $j \in J_i'$.
    \end{itemize}
    By the fact that $g \simeq \sum_{i \in I'} g_i \cdot e^*$ and by Lemma~\ref{lemma:congruent-states-progress}, as well as $g \satrace{W} f$ and Lemma~\ref{lemma:trace-decompose-plus}, we find an $I'$-indexed family ${(h_i)}_{i \in I'}$ over $\terms$ such that $f \simeq \sum_{i \in I'} h_i$, and for all $i \in I'$ also $g_i \cdot e^* \satrace{W} h_i$.
    Then, by Lemma~\ref{lemma:trace-deconstruct-sequential}, we find for all $i \in I'$ a term $h_i'$ and finite set $K_i$ as well as $K_i$-indexed families ${(h_{i,k}')}_{k \in K_i}$ over $\sacc$ and ${(h_{i,k})}_{k \in K_i}$ over $\terms$ and ${(W_{i,k}')}_{k \in K_i}$ over $\pom$ and ${(W_{i,k})}_{k \in K_i}$ over $\pom^+$, such that
    \begin{itemize}
        \item $h_i \simeq h_i' \cdot e^* + \sum_{k \in K_i} h_{i,k}$ and $g_i \satrace{W} h_i'$, and
        \item for all $k \in K_i$, $g_i \sartrace{W_i'} h_{i,k}'$, $e^* \satrace{W_i} h_{i,k}$ and $W = W_k' \cdot W_k$.
    \end{itemize}

    By induction and since for all $i \in I'$ and $k \in K_i$ we have that $e^* \satrace{W_i} h_{i,k}$, we obtain for this $i$ and $k$ a finite set $L_{i,k}$ and an $L_{i,k}$-indexed family of finite sets ${(M_{i,k,\ell})}_{\ell \in L_{i,k}}$, as well as $L_{i,k}$-indexed families ${(h_{i,k,\ell})}_{\ell \in L_{i,k}}$ over $\terms$ and ${(W_{i,k,\ell})}_{\ell \in L_{i,k}}$ over $\pom^+$, and for all $\ell \in L_{i,k}$ also $M_{i,k,\ell}$-indexed families ${(h_{i,k,\ell,m})}_{m \in M_{i,k,\ell}}$ over $\sacc$ and ${(V_{i,k,\ell,m})}_{m \in M_{i,k,\ell}}$ over $\pom^+$ such that $h_{i,k} \simeq \sum_{\ell \in L_{i,k}} h_{i,k,\ell} \cdot e^*$, and for all $\ell \in L_{i, k}$:
    \begin{itemize}
        \item $e \satrace{W_{i,k,\ell}} h_{i,k,\ell}$,
        \item for all $m \in M_{i,k,\ell}$ we have that $e \satrace{W_{i,k,\ell,m}} h_{i,k,\ell,m}$, and
        \item $W_i = W_{i,k,\ell}' \cdot W_{i,k,\ell}$, where $W_{i,k,\ell}'$ is some concatenation of $W_{i,k,\ell,m}$ for $m \in M_{i,k,\ell}$.
    \end{itemize}
    We are now ready to choose the required (families of) sets, terms and pomsets, as follows:
    \begin{itemize}
        \item $I = I' + \bigcup_{k \in K_i} L_{i,k}$
        \item for all $i \in I$, we set
        \begin{align*}
        J_i &=
        \begin{cases}
        J_i' & i \in I' \\
        M_{i',k,i} \cup \{ *_i \} & i \in L_{i',k}, k \in K_{i'}
        \end{cases}
        &
        f_i &=
        \begin{cases}
        h_i' & i \in I' \\
        h_{i',k,i} & i \in L_{i',k}, k \in K_{i'}
        \end{cases}
        \intertext{%
        where $*_i$ is a ``fresh'' symbol not in any index set.
        We furthermore choose
        }
        U_i &=
        \begin{cases}
        V_i \cdot W & i \in I' \\
        W_{i',k,i} & i \in L_{i',k}, k \in K_{i'}
        \end{cases}
        \end{align*}
        \item for all $i \in I$ and $j \in J_i$, we set
        \begin{align*}
        f_{i,j} &=
        \begin{cases}
        g_{i,j} & i \in I', j \in J_i' \\
        h_{i',k}' & i \in L_{i',k}, k \in K_{i'}, j = *_i \\
        h_{i',k,i,j} & i \in L_{i',k}, k \in K_{i'}, j \neq *_i
        \end{cases}
        \intertext{and furthermore}
        U_{i,j} &=
        \begin{cases}
        V_{i,j} & i \in I', j \in J_i' \\
        V_{i'} \cdot W_{i'}' & i \in L_{i',k}, k \in K_{i'}, j = *_i \\
        W_{i',k,i,j} & i \in L_{i',k}, k \in K_{i'}, j \neq *_i
        \end{cases}
        \end{align*}
    \end{itemize}
    It remains to check the requirements on our choices.
    \begin{itemize}
        \item One easily verifies that
        \[f \simeq \sum_{i \in I'} h_i \simeq \sum_{i \in I'} \left( h_i' \cdot e^* + \sum_{i \in K_i} h_{i,k} \right) \simeq \sum_{i \in I'} \left( h_i' \cdot e^* + \sum_{k \in K_i} \sum_{\ell \in L_{i,k}} h_{i,k,\ell} \cdot e^* \right) \simeq \sum_{i \in I} f_i \cdot e^*\]
        \item If $i \in I$, there are two cases to consider.
        \begin{itemize}
            \item If $i \in I'$, then since $e \satrace{V_i} g_i$ and $g_i \satrace{W} h_i'$, we find that $e \satrace{U_i} f_i$.
            \item If $i \in L_{i',k}$ for some $i' \in I'$ and $k \in K_{i'}$, then since $e \satrace{W_{i',k,i}} h_{i',k,i}$ we find that $e \satrace{U_i} f_i$.
        \end{itemize}
        \item If $i \in I$ and $j \in J_i$, there are three cases to consider.
        \begin{itemize}
            \item If $i \in I'$ and $j \in J_i'$, then since $e \satrace{V_{i,j}} g_{i,j}$, we find that $e \satrace{U_{i,j}} f_{i,j}$.
            \item If $i \in L_{i',k}$ for some $i' \in I'$ and $k \in K_{i'}$, and $j = *_i$, then since $e \satrace{V_{i'}} g_i$ and $g_i \sartrace{W_{i'}'} h_{i',k}'$, we find that $e \satrace{U_{i,j}} f_{i,j}$.
            \item If $i \in L_{i',k}$ for some $i' \in I'$ and $k \in K_{i'}$, and $j \neq *_i$ (thus $j \in M_{i',k,i}$), then since $e \satrace{W_{i',k,i,j}} h_{i',k,i,j}$ we find that $e \satrace{U_{i,j}} f_{i,j}$.
        \end{itemize}
        \item If $i \in I$, there are two cases to consider.
        \begin{itemize}
            \item If $i \in I'$, then $V = V_i' \cdot V_i$, where $V_i'$ is some concatenation of $V_{i,j}$ for all $j \in J_i'$.
            But then we can choose $U_i' = V_i'$ as a concatenation of $U_{i,j}$ for all $j \in J_i$, to find that $U_i' \cdot U_i = V_i' \cdot V_i \cdot W = V \cdot W = U$.
            \item If $i \in L_{i',k}$ for all $i' \in I'$ and $k \in K_{i'}$, then we know that there exists a pomset $W_{i',k,i}'$ which is a concatenation of $W_{i',k,i,m}$ for all $m \in M_{i',k,i}$, such that $W_{i',k,i}' \cdot W_{i',k,i} = W_{i'}$.
            We can then choose $U_i' = V \cdot W_{i'}' \cdot W_{i',k,i}'$ as a concatenation of $U_{i,j}$ with $j \in J_i$, and find that $U_i' \cdot U_i = V \cdot W_{i'}' \cdot W_{i',k,i}' \cdot W_{i',k,i} = V \cdot W_{i'}' \cdot W_{i'} = V \cdot W = U$.
        \end{itemize}
    \end{itemize}

    \item If $U = V \parallel W$, then there exist $g, h \in \terms$ and $g', h' \in \terms$ such that $g \satrace{V} g'$ and $h \satrace{W} h'$, and $f = \psderiv(e^*, \mset{g, h}) = \psderiv(e, \mset{g, h}) \cdot e^*$.
    In this case, we choose $I = \{ * \}$, and $J_* = \emptyset$ and $U_* = U$ to find that the claims are validated. \qedhere
\end{itemize}
\end{arxivproof}

\subsection{Trace construction}%
\label{subsection:trace-construction}

In the above, we learned how to deconstruct traces in the syntactic PA\@.
To verify that the state in the syntactic PA associated with a series-rational expression $e$ indeed accepts the series-rational pomset language $\sem{e}$, we also need to show the converse, that is, how to \emph{construct} traces in the syntactic PA from smaller traces.
In this context it is often useful to work with the preorder obtained from $\simeq$.
\begin{definition}
  The relation $\lesssim\ \subseteq \terms \times \terms$ is defined
  by $e \lesssim f$ if and only if $e + f \simeq f$.
\end{definition}
The intuition to $e \lesssim f$ is that $e$ consists of one or more terms that also appear in $f$, up to $\simeq$.

In analogy to Lemma~\ref{lemma:congruent-states-progress}, we show that $\lesssim$ is a ``simulation'' with respect to traces.
\begin{lemma}%
\label{lemma:subcongruent-states-simulation}
Let $e, e', f \in \terms$ be such that $e \lesssim f$.
If $e \satrace{U} e'$, then there exists an $f' \in \terms$ such that $f \satrace{U} f'$ and $e' \lesssim f'$.
Furthermore, if $e \in \sacc$, then $f \in \sacc$.
\end{lemma}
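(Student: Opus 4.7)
The plan is to proceed by induction on the pomset $U$, exploiting the definitional unfolding $e \lesssim f \iff e + f \simeq f$ together with the fact that both derivatives commute with $+$ syntactically, so that congruence of states (Lemma~\ref{lemma:congruent-states-derive}) immediately transfers to congruence of derivatives. In the base case $U = a \in \Sigma$, take $f' = \ssderiv(f, a)$. Then $f \satrace{a} f'$, and applying Lemma~\ref{lemma:congruent-states-derive} to $e + f \simeq f$ gives $\ssderiv(e, a) + \ssderiv(f, a) = \ssderiv(e + f, a) \simeq \ssderiv(f, a)$, i.e.\ $e' + f' \simeq f'$, so $e' \lesssim f'$.

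For $U = V \cdot W$, split the original trace as $e \satrace{V} e'' \satrace{W} e'$ for some $e''$. Apply the induction hypothesis first to $(e, e'', f)$ along $V$ to obtain $f''$ with $f \satrace{V} f''$ and $e'' \lesssim f''$, then again to $(e'', e', f'')$ along $W$ to obtain the desired $f'$.

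For $U = V \parallel W$, the trace provides $g, h$ and accepting $g', h'$ with $g \satrace{V} g'$, $h \satrace{W} h'$, and $e' = \psderiv(e, \mset{g, h})$. Set $f' = \psderiv(f, \mset{g, h})$; then $f \satrace{V \parallel W} f'$ immediately, using the same branching data. To conclude $e' \lesssim f'$ via Lemma~\ref{lemma:congruent-states-derive}, we need $g, h \not\simeq 0$; this is the main obstacle, since the lemma's precondition rules out the degenerate case. It is handled by a short contradiction: if say $g \simeq 0$, Lemma~\ref{lemma:congruent-states-progress} lifts $g \satrace{V} g'$ to a trace from $0$ whose target, by Lemma~\ref{lemma:trace-deconstruct-base}, must be $0$ and congruent to $g'$; but then $g' \simeq 0$ forces $0 \in \sacc$ by Lemma~\ref{lemma:congruent-states-halt}, contradicting Definition~\ref{definition:accepting-terms}. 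With the side condition discharged, Lemma~\ref{lemma:congruent-states-derive} and the syntactic identity $\psderiv(e + f, \mset{g, h}) = \psderiv(e, \mset{g, h}) + \psderiv(f, \mset{g, h})$ yield $e' + f' \simeq f'$.

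For the final statement, if $e \in \sacc$ then $e + f \in \sacc$ directly from the inference rule for $+$ in Definition~\ref{definition:accepting-terms}, and since $e + f \simeq f$, Lemma~\ref{lemma:congruent-states-halt} gives $f \in \sacc$. Aside from the $g, h \not\simeq 0$ check, every step is a direct calculation, so the proof reduces to executing this induction and invoking the already-established congruence lemmas.
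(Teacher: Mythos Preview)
Your proposal is correct and follows essentially the same approach as the paper's proof: induction on $U$, with the base and parallel cases handled by choosing $f'$ to be the appropriate derivative of $f$ and invoking Lemma~\ref{lemma:congruent-states-derive} together with the syntactic identity $\ssderiv(e+f,a)=\ssderiv(e,a)+\ssderiv(f,a)$ (resp.\ the analogous identity for $\psderiv$), and the sequential case by chaining two applications of the induction hypothesis. In fact your treatment of the parallel case is slightly more careful than the paper's, which silently applies Lemma~\ref{lemma:congruent-states-derive} without explicitly discharging the side condition $g,h\not\simeq 0$; your contradiction argument via Lemmas~\ref{lemma:congruent-states-progress}, \ref{lemma:trace-deconstruct-base}, and~\ref{lemma:congruent-states-halt} fills exactly that gap.
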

\begin{arxivproof}
We prove the first claim by induction on $U$.
In the base, $U = a \in \Sigma$ and $e' = \ssderiv(e, a)$.
Note that $\ssderiv(e, a) + \ssderiv(f, a) = \ssderiv(e + f, a) \simeq \ssderiv(f, a)$ by Lemma~\ref{lemma:congruent-states-derive}.
We choose $f' = \ssderiv(f, a)$ to find that $f \satrace{U} f'$ with $e' = \ssderiv(e, a) \lesssim \ssderiv(f, a) = f'$.

For the inductive step, there are two cases to consider.
\begin{itemize}
    \item If $U = V \cdot W$ with $V$ and $W$ smaller than $U$, then there exists an $e''$ such that $e \satrace{V} e''$ and $e'' \satrace{W} e'$.
    By induction, we find an $f'' \in \terms$ such that $f \satrace{V} f''$ and $e'' \lesssim f''$, and again by induction we find an $f' \in \terms$ such that $f'' \satrace{W} f'$ with $e' \lesssim f'$.
    In total, we know that $e \satrace{U} f$ with $e' \lesssim f'$.
    \item If $U = V \parallel W$ with $V$ and $W$ smaller than $U$, then there exist $g, h \in \terms$ and $g', h' \in \sacc$ such that $g \satrace{V} g'$ and $h \satrace{W} h'$ and $e' = \psderiv(e, \mset{g, h})$.
    Note that $\psderiv(e, \mset{g, h}) + \psderiv(f, \mset{g, h}) = \psderiv(e + f, \mset{g, h}) \simeq \psderiv(f, \mset{g, h})$ by Lemma~\ref{lemma:congruent-states-derive}.
    We choose $f' = \psderiv(f, \mset{g, h})$ to find that $f \satrace{U} f'$ with $e' = \psderiv(e, \phi) \lesssim \psderiv(f, \phi) = f'$.
\end{itemize}

For the second claim, suppose that $e \in \sacc$, then also $e + f \in \sacc$.
But then $f \in \sacc$ by Lemma~\ref{lemma:congruent-states-halt}.
\end{arxivproof}

The following lemma tells us that we can create a trace labelled with the concatenation of the labels of two smaller traces, and starting in the sequential composition of the original starting states, provided that the first trace ends in an accepting state.
Furthermore, the target state of the newly constructed trace contains the target state of the second trace.
\ifarxiv%
We also prove two auxiliary claims towards this end, which will be useful later on.
\fi%
\begin{lemma}%
\label{lemma:trace-construct-sequential}
\ifarxiv%
Let $e_1, e_2, f_1, f_2 \in \terms$ and $U, V \in \pom^+$ be such that $e_1 \satrace{U} f_1$ and $e_2 \satrace{V} f_2$.
The following hold:
\begin{itemize}
    \item There exists an $f \in \terms$ such that $e_1 \cdot e_2 \satrace{U} f$ with $f_1 \cdot e_2 \lesssim f$.
    \item If $f_1 \in \sacc$, then there exists an $f \in \terms$ such that $f_1 \cdot e_2 \satrace{V} f$ with $f_2 \lesssim f$.
    \item If $f_1 \in \sacc$, then there exists an $f \in \terms$ such that $e_1 \cdot e_2 \satrace{U \cdot V} f$ with $f_2 \lesssim f$.
\end{itemize}
\else%
Let $e_1, e_2, f_2 \in \terms$ and $f_1 \in \sacc$.
If $U, V \in \pom^+$ are such that $e_1 \satrace{U} f_1$ and $e_2 \satrace{V} f_2$, then there exists an $f \in \terms$ such that $e_1 \cdot e_2 \satrace{U \cdot V} f$ with $f_2 \lesssim f$.
\fi%
\end{lemma}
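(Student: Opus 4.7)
The plan is to prove the three parts in order, proceeding by induction on the pomset label of the supplied trace, and then use Lemma~\ref{lemma:subcongruent-states-simulation} to glue things together. The key computational ingredients are the definitions of $\ssderiv$ and $\psderiv$ on a sequential composition, which in both cases produce a sum whose first summand is the derivative of the left operand sequentially composed with the right operand, plus (modulo $\star$) the derivative of the right operand.

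For the first claim, I would induct on $U$. If $U = a \in \Sigma$ and $f_1 = \ssderiv(e_1,a)$, then by definition $\ssderiv(e_1 \cdot e_2, a) = \ssderiv(e_1, a) \cdot e_2 + e_1 \star \ssderiv(e_2, a)$, so choosing $f = \ssderiv(e_1 \cdot e_2, a)$ immediately gives $f_1 \cdot e_2 \lesssim f$. If $U = V' \cdot W'$ factors through some $e_1''$, I apply the induction hypothesis to $e_1 \satrace{V'} e_1''$ to obtain $f''$ with $e_1 \cdot e_2 \satrace{V'} f''$ and $e_1'' \cdot e_2 \lesssim f''$; applying it again to $e_1'' \satrace{W'} f_1$ gives $g$ with $e_1'' \cdot e_2 \satrace{W'} g$ and $f_1 \cdot e_2 \lesssim g$; Lemma~\ref{lemma:subcongruent-states-simulation} then transports the trace from $e_1'' \cdot e_2$ along $\lesssim$ to $f''$, yielding the desired $f$. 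If $U = V' \parallel W'$ with the parallel rule firing at $e_1$ via some $\mset{g,h}$, the analogous calculation $\psderiv(e_1 \cdot e_2, \mset{g, h}) = \psderiv(e_1, \mset{g, h}) \cdot e_2 + e_1 \star \psderiv(e_2, \mset{g, h})$ finishes the case.

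For the second claim, which assumes $f_1 \in \sacc$, the point is that the $f_1 \star (-)$ shorthand now equals $(-)$ outright, so in the base case $\ssderiv(f_1 \cdot e_2, a) = \ssderiv(f_1, a) \cdot e_2 + \ssderiv(e_2, a)$ contains $\ssderiv(e_2, a) = f_2$ as a summand, giving $f_2 \lesssim \ssderiv(f_1 \cdot e_2, a)$. The sequential inductive case uses the IH and Lemma~\ref{lemma:subcongruent-states-simulation} exactly as above. The parallel case uses $\psderiv(f_1 \cdot e_2, \mset{g, h}) = \psderiv(f_1, \mset{g, h}) \cdot e_2 + \psderiv(e_2, \mset{g, h})$ (again because $f_1 \in \sacc$), so the $\psderiv(e_2, \mset{g, h}) = f_2$ summand sits inside the derivative and we obtain $f_2 \lesssim \psderiv(f_1 \cdot e_2, \mset{g, h})$.

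For the third claim I just chain the first two. Part (i) supplies $f$ with $e_1 \cdot e_2 \satrace{U} f$ and $f_1 \cdot e_2 \lesssim f$. Part (ii) supplies $f'$ with $f_1 \cdot e_2 \satrace{V} f'$ and $f_2 \lesssim f'$. By Lemma~\ref{lemma:subcongruent-states-simulation}, the trace $f_1 \cdot e_2 \satrace{V} f'$ lifts along $f_1 \cdot e_2 \lesssim f$ to some $f''$ with $f \satrace{V} f''$ and $f' \lesssim f''$, and concatenating yields $e_1 \cdot e_2 \satrace{U \cdot V} f''$ with $f_2 \lesssim f' \lesssim f''$, as required.

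The main obstacle is purely bookkeeping: in the inductive steps one must be careful that every trace we produce is rerouted through the $\lesssim$-larger target state before the next step, since we only get $f_1 \cdot e_2 \lesssim f$ rather than equality. The natural temptation to prove equality fails because both $\ssderiv$ and $\psderiv$ on $e_1 \cdot e_2$ produce \emph{additional} summands (the $e_1 \star \ssderiv(e_2, a)$ and $e_1 \star \psderiv(e_2, \phi)$ parts) that may be non-zero when $e_1 \in \sacc$, so working with $\lesssim$ throughout and invoking Lemma~\ref{lemma:subcongruent-states-simulation} at each composition is the right level of looseness.
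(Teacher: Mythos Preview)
Your proposal is correct and follows essentially the same approach as the paper's proof: induction on the pomset label for parts (i) and (ii), using the explicit form of $\ssderiv$ and $\psderiv$ on sequential compositions in the base and parallel cases, invoking Lemma~\ref{lemma:subcongruent-states-simulation} to transport traces along $\lesssim$ in the sequential inductive step, and then chaining (i), (ii), and Lemma~\ref{lemma:subcongruent-states-simulation} for part (iii). Your closing remark about why $\lesssim$ (rather than equality) is the right invariant is exactly the point.
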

\begin{arxivproof}
The proof of the first claim proceeds by induction on $U$.
In the base, $U = a \in \Sigma$ and $f_1 = \ssderiv(e_1, a)$.
We choose $f = \ssderiv(e_1 \cdot e_2, a)$.
It is now easy to show that $e_1 \cdot e_2 \satrace{U} f$ with $f_1 \cdot e_2 \lesssim f$.

For the inductive step, there are two cases to consider.
\begin{itemize}
    \item If $U = W \cdot X$ with $W$ and $X$ smaller than $U$, there exists a $g_1$ such that $e_1 \satrace{W} g_1$ and $g_1 \satrace{X} f_1$.
    By induction, we obtain $g \in \terms$ such that $e_1 \cdot e_2 \satrace{W} g$ with $g_1 \cdot e_2 \lesssim g$.
    Again by induction, we find $h \in \terms$ such that $g_1 \cdot e_2 \satrace{X} h$ with $f_1 \cdot e_2 \lesssim h$.
    By Lemma~\ref{lemma:subcongruent-states-simulation}, we then know that $g \satrace{W} f$ for some $f$ with $h \lesssim f$.
    In total, we find that $e_1 \cdot e_2 \satrace{U} f$ with $f_1 \cdot e_2 \lesssim f$ (by transitivity of $\lesssim$).
    \item If $U = W \parallel X$ with $W$ and $X$ smaller than $U$, there exist $g, h \in \terms$ and $g', h' \in \sacc$ such that $g \satrace{W} g'$ and $h \satrace{X} h'$, and $f_1 = \psderiv(e_1, \mset{g, h})$.
    We choose $f = \psderiv(e_1 \cdot e_2, \mset{g, h})$.
    It is now easy to show that $e_1 \cdot e_2 \satrace{U} f$ with $f_1 \cdot e_2 \lesssim f$.
\end{itemize}

The proof of the second claim proceeds by induction on $V$.
In the base, $V = a \in \Sigma$ and $f_2 = \ssderiv(e_2, a)$.
We can then choose $f = \ssderiv(f_1 \cdot e_2, a)$.
It is now easy to show that $f_1 \cdot e_2 \satrace{V} f$ with $f_2 \lesssim f$.

For the inductive step, there are two cases to consider.
\begin{itemize}
    \item If $V = W \cdot X$ with $W$ and $X$ smaller than $V$, there exists a $g_2 \in \terms$ such that $e_2 \satrace{W} g_2$ and $g_2 \satrace{X} f_2$.
    By induction, there exists a $g \in \terms$ such that $f_1 \cdot e_2 \satrace{W} g$ with $g_2 \lesssim g$.
    By Lemma~\ref{lemma:subcongruent-states-simulation}, we find $f$ such that $g \satrace{X} f$ with $f_2 \lesssim f$, and thus $f_1 \cdot e_2 \satrace{V} f$.
    \item If $V = W \parallel X$ with $W$ and $X$ smaller than $V$, there exist $g, h \in \terms$ and $g', h' \in \sacc$ such that $g \satrace{W} g'$ and $h \satrace{X} h'$ and $f_2 = \psderiv(e_2, \mset{g, h})$.
    We can then choose $f = \psderiv(f_1 \cdot e_2, \mset{g, h})$.
    It is now easy to show that $f_1 \cdot e_2 \satrace{V} f$ with $f_2 \lesssim f$.
\end{itemize}

The third claim is a direct consequence of the first two claims and Lemma~\ref{lemma:subcongruent-states-simulation}.
\end{arxivproof}

We can also construct traces that start in a parallel composition.
One way is to construct traces that start in each operand and reach an accepting state; we obtain a trace in their parallel composition almost trivially.
If one of the operands is accepting, we can also construct a single trace that starts in the other operand and obtain a trace with the same label starting in the parallel construction.
In both cases, we describe the target of the new trace using $\lesssim$.
\begin{lemma}%
\label{lemma:trace-construct-parallel}
Let $e_1, e_2 \in \terms$.
The following hold:
\begin{itemize}
    \item If $f_1, f_2 \in \sacc$ and $U, V \in \pom^+$ are such that $e_1 \satrace{U} f_1$ and $e_2 \satrace{V} f_2$, then there exists an $f \in \terms$ such that $e_1 \parallel e_2 \satrace{U \parallel V} f$ with $1 \lesssim f$.
    \item If $e_2 \in \sacc$ (respectively $e_1 \in \sacc$), and $f' \in \terms$ and $U \in \pom^+$ are such that $e_1 \satrace{U} f'$ (respectively $e_2 \satrace{U} f'$), then there exists an $f \in \terms$ such that $e_1 \parallel e_2 \satrace{U} f$ with $f' \lesssim f$.
\end{itemize}
\end{lemma}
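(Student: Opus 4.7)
The plan is to attack the two parts separately, with the first being essentially a single application of the parallel trace rule, and the second proceeding by induction on $U$ using the previously established simulation property.

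For the first claim, I would let $\phi = \mset{e_1, e_2}$ and set $f = \psderiv(e_1 \parallel e_2, \phi)$. By definition of $\psderiv$ on a parallel composition, $f = [\phi \simeq \phi] + e_1 \star \psderiv(e_2, \phi) + e_2 \star \psderiv(e_1, \phi)$, so the first summand is $1$ and hence $1 \lesssim f$. Since $e_1 \satrace{U} f_1 \in \sacc$ and $e_2 \satrace{V} f_2 \in \sacc$, the parallel rule of $\satracerel$ immediately yields $e_1 \parallel e_2 \satrace{U \parallel V} f$, which validates this part.

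For the second claim, I would prove (say) the case $e_2 \in \sacc$ by induction on $U$, the symmetric case being identical. In the base $U = a \in \Sigma$ with $f' = \ssderiv(e_1, a)$: since $e_2 \in \sacc$, we have $\ssderiv(e_1 \parallel e_2, a) = e_1 \star \ssderiv(e_2, a) + \ssderiv(e_1, a)$, so $f' \lesssim \ssderiv(e_1 \parallel e_2, a)$ and we choose $f = \ssderiv(e_1 \parallel e_2, a)$. For $U = V \cdot W$, decompose the trace through some intermediate $g$ with $e_1 \satrace{V} g$ and $g \satrace{W} f'$; the induction hypothesis gives $h$ with $e_1 \parallel e_2 \satrace{V} h$ and $g \lesssim h$, then Lemma~\ref{lemma:subcongruent-states-simulation} lifts $g \satrace{W} f'$ to $h \satrace{W} f$ with $f' \lesssim f$. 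For $U = V \parallel W$, obtain $g, h \in \terms$ and $g', h' \in \sacc$ with $g \satrace{V} g'$, $h \satrace{W} h'$, and $f' = \psderiv(e_1, \mset{g, h})$; then set $f = \psderiv(e_1 \parallel e_2, \mset{g, h})$, whose expansion contains $e_2 \star \psderiv(e_1, \mset{g, h}) = \psderiv(e_1, \mset{g, h}) = f'$ as a summand (using $e_2 \in \sacc$), so $f' \lesssim f$, while $e_1 \parallel e_2 \satrace{U} f$ follows from the parallel rule applied to the forks witnessed by $g, h$.

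There is essentially no hard step here: the first part is a direct unfolding of the definition, and the inductive case for sequential composition is the only point where one needs an auxiliary result, which is precisely Lemma~\ref{lemma:subcongruent-states-simulation}. The main thing to be careful about is keeping the role of the hypothesis $e_2 \in \sacc$ (respectively $e_1 \in \sacc$) visible in each case, since it is exactly this that causes the relevant $e_i \star (-)$ summand not to collapse to $0$ and therefore keeps $f'$ present (up to $\simeq$) inside $f$.
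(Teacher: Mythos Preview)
Your proposal is correct and follows essentially the same route as the paper: the first claim is handled by a single application of the parallel trace rule with $f = \psderiv(e_1 \parallel e_2, \mset{e_1, e_2})$, and the second by induction on $U$, invoking Lemma~\ref{lemma:subcongruent-states-simulation} in the sequential case and a direct unfolding of $\psderiv$ in the parallel case. The only cosmetic difference is that the paper does not expand the $e_2 \star (-)$ simplification explicitly, leaving the verification that $f' \lesssim f$ to the reader.
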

\begin{arxivproof}
For the first claim, choose $f = \psderiv(e_1 \parallel e_2, \mset{e_1, e_2})$.
We then immediately find that $e_1 \parallel e_2 \satrace{U \parallel V} f$ with $1 \lesssim f$.

For the second claim, the proof proceeds by induction on $U$.
In the base, $U = a \in \Sigma$ and $f_1 = \ssderiv(e_1, a)$.
Choose $f = \ssderiv(e_1 \parallel e_2, a)$.
It is then easy to see that $e_1 \parallel e_2 \satrace{U} f$ with $f_1 \lesssim f$.

In the inductive step, there are two cases to consider.
\begin{itemize}
    \item If $U = V \cdot W$ with $V$ and $W$ smaller than $U$, there exists a $g_1$ such that $e_1 \satrace{V} g_1$ and $g_1 \satrace{W} f_1$.
    By induction, we find $g \in \terms$ such that $e_1 \parallel e_2 \satrace{V} g$ with $g_1 \lesssim g$.
    By Lemma~\ref{lemma:subcongruent-states-simulation}, we find $f \in \terms$ such that $g \satrace{W} f$ and $f_1 \lesssim f$.
    In total, we have that $e_1 \parallel e_2 \satrace{U} f$ with $f_1 \lesssim f$.
    \item If $U = V \parallel W$ with $V$ and $W$ smaller than $U$, there exist $g, h \in \terms$ and $g', h' \in \sacc$ such that $g \satrace{V} g'$ and $h \satrace{W} h'$, and $f_1 = \psderiv(e_1, \mset{g, h})$.
    We choose $f = \psderiv(e_1 \parallel e_2, \mset{g, h})$.
    It is now easy to see that $e_1 \parallel e_2 \satrace{U} f$ with $f_1 \lesssim f$. \qedhere
\end{itemize}
\end{arxivproof}

Lastly, we present a trace construction lemma to obtain traces originating in expressions of the form $e^*$.
The idea here is that, given a finite number of traces that originate in $e$, where all (but possibly one) have an accepting state as their target, we can construct a trace originating in $e^*$, with a concatenation of the labels of the input traces as its label.
\begin{lemma}%
\label{lemma:trace-construct-star}
Let $e, f_1, f_2, \dots, f_n \in \terms$ (with $n > 0$) be such that $f_1, f_2, \dots, f_{n-1} \in \sacc$.
Also, let $U, U_1, U_2, \dots, U_n \in \pom^+$ be such that $U = U_1 \cdot U_2 \cdots U_n$.
If for all $i \leq n$ it holds that $e \satrace{U_i} f_i$, then there exists an $f \in \terms$ such that $e^* \satrace{U} f$, with $f_n \cdot e^* \lesssim f$.
\end{lemma}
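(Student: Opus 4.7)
My plan is to prove the lemma by induction on $n$, with the base case $n = 1$ serving as the key technical claim: if $e \satrace{U} g$, then there exists an $f \in \terms$ such that $e^* \satrace{U} f$ and $g \cdot e^* \lesssim f$. Once this claim is in hand, the inductive step will follow by combining it with the sequential construction lemma (Lemma~\ref{lemma:trace-construct-sequential}) and the simulation property of $\lesssim$ (Lemma~\ref{lemma:subcongruent-states-simulation}).

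For the base case, I will carry out a subsidiary induction on $U$. If $U = a \in \Sigma$, then the identity $\ssderiv(e^*, a) = \ssderiv(e, a) \cdot e^*$ makes $f = g \cdot e^*$ work directly. If $U = V \parallel W$, then $g = \psderiv(e, \phi)$ for some $\phi$, and by $\psderiv(e^*, \phi) = \psderiv(e, \phi) \cdot e^*$, again $f = g \cdot e^*$ suffices. The interesting case is $U = V \cdot W$, where the trace factors as $e \satrace{V} e'' \satrace{W} g$. I will first apply the inner induction to $V$ to obtain $e^* \satrace{V} g'$ with $e'' \cdot e^* \lesssim g'$, then invoke the first variant of Lemma~\ref{lemma:trace-construct-sequential} to lift $e'' \satrace{W} g$ to $e'' \cdot e^* \satrace{W} f'$ with $g \cdot e^* \lesssim f'$, and finally use Lemma~\ref{lemma:subcongruent-states-simulation} on the relation $e'' \cdot e^* \lesssim g'$ to obtain $g' \satrace{W} f$ with $f' \lesssim f$. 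Transitivity of $\lesssim$ then gives $g \cdot e^* \lesssim f$, and concatenation yields $e^* \satrace{U} f$.

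For the inductive step with $n > 1$, I will apply the base-case claim to $e \satrace{U_1} f_1$ to obtain $e^* \satrace{U_1} h$ with $f_1 \cdot e^* \lesssim h$, and apply the induction hypothesis to the tail $f_2, \dots, f_n$ to obtain $e^* \satrace{U_2 \cdots U_n} g$ with $f_n \cdot e^* \lesssim g$. Since $f_1 \in \sacc$, the second variant of Lemma~\ref{lemma:trace-construct-sequential} yields $f_1 \cdot e^* \satrace{U_2 \cdots U_n} f'$ with $g \lesssim f'$. Using $f_1 \cdot e^* \lesssim h$, a further application of Lemma~\ref{lemma:subcongruent-states-simulation} extends this to $h \satrace{U_2 \cdots U_n} f$ with $f' \lesssim f$. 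Prepending the trace $e^* \satrace{U_1} h$ produces $e^* \satrace{U} f$, and transitivity of $\lesssim$ delivers the required $f_n \cdot e^* \lesssim f$.

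The main obstacle I expect is keeping the chain of $\lesssim$-refinements coherent across the repeated invocations of Lemma~\ref{lemma:subcongruent-states-simulation}: each ``re-routing'' step enlarges the target state, so I will need to rely carefully on the transitivity of $\lesssim$ to confirm that the desired containment $f_n \cdot e^* \lesssim f$ survives all the way to the end. The algebraic computations themselves are routine Brzozowski-style unfoldings, but the bookkeeping of which variant of each auxiliary lemma to apply at each step is where the delicate work lies.
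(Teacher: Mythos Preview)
Your proposal is correct and follows essentially the same approach as the paper: both prove the $n=1$ case by induction on $U$ (with identical treatment of the three cases), and both extend to general $n$ by an outer induction that peels off one factor, invokes Lemma~\ref{lemma:trace-construct-sequential} to thread through the $e^*$, and uses Lemma~\ref{lemma:subcongruent-states-simulation} plus transitivity of $\lesssim$ to close. The only cosmetic difference is that the paper peels off the last factor $U_n$ in the outer induction whereas you peel off the first factor $U_1$; this is a symmetric and equally valid choice.
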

\begin{arxivproof}
First, we show that the claim holds for $n = 1$, i.e., if $e, f_1 \in \terms$ and $U \in \pom^+$ are such that $e \satrace{U} f_1$, then there exists an $f \in \terms$ such that $e^* \satrace{U} f$, with $f_1 \cdot e^* \lesssim f$.
The proof proceeds by induction on $U$.
In the base, $U = a \in \Sigma$ and $f_1 = \ssderiv(e, a)$.
We choose $f = \ssderiv(e^*, a) = \ssderiv(e, a) \cdot e^* = f_1 \cdot e^*$ to find that $e^* \satrace{U} f$ with $f_1 \cdot e^* \lesssim f$.

For the inductive step, there are two cases to consider.
\begin{itemize}
    \item If $U = V \cdot W$ with $V$ and $W$ smaller than $U$, there exists a $g_1 \in \terms$ such that $e \satrace{V} g_1$ and $g_1 \satrace{W} f_1$.
    By induction, we find $g \in \terms$ such that $e^* \satrace{V} g$ with $g_1 \cdot e^* \lesssim g$.
    By Lemma~\ref{lemma:trace-construct-sequential}, we find $h \in \terms$ such that $g_1 \cdot e^* \satrace{W} h$, with $f_1 \cdot e^* \lesssim h$.
    By Lemma~\ref{lemma:subcongruent-states-simulation}, we find $f$ such that $g \satrace{W} f$ with $h \lesssim f$.
    In total, we have that $e^* \satrace{U} f$ with $f_1 \cdot e^* \lesssim f$.
    \item If $U = V \parallel W$ with $V$ and $W$ smaller than $U$, there exist $g, h \in \terms$ and $g', h' \in \sacc$ such that $g \satrace{V} g'$, and $h \satrace{W} h'$, and $f_1 = \psderiv(e, \mset{g, h})$.
    In this case, we choose $f = \psderiv(e^*, \mset{g, h})$ to find that $e^* \satrace{U} f$ with $f_1 \cdot e^* \lesssim f$.
\end{itemize}

We now inductively extend the claim to all $n > 0$, using the proof above as our base.
In the inductive step, we assume the claim holds for $n-1$, and try to prove it for $n$.
Let $U' = U_1 \cdot U_2 \cdots U_{n-1}$.
By induction, we obtain $g \in \terms$ such that $e^* \satrace{U'} g$ with $f_{n-1} \cdot e^* \lesssim g$.
Furthermore, by the previous observation, we find $h \in \terms$ such that $e^* \satrace{U_n} h$ with $f_n \cdot e^* \lesssim h$.
By Lemma~\ref{lemma:trace-construct-sequential} and the fact that $f_{n-1} \in \sacc$, we find $d \in \terms$ such that $f_{n-1} \cdot e^* \satrace{U_n} d$ with $h \lesssim d$.
By the fact that $f_{n-1} \cdot e^* \lesssim g$ and $f_{n-1} \cdot e^* \satrace{U_n} d$, we find $f \in \terms$ such that $g \satrace{U_n} f$ with $d \lesssim f$.
In total, we have that $e^* \satrace{U} f$ with $f_n \cdot e^* \lesssim h \lesssim d \lesssim f$.
\end{arxivproof}

\subsection{Soundness for the syntactic PA}%
\label{subsection:soundness}

With trace deconstruction and construction lemmas in our toolbox, we are ready to show that the syntactic PA indeed captures series-rational languages.

First, note that $\slang$ can be seen as a function from $\terms$ to $\pom$, like $\sem{-}$.
To establish equality between $\slang$ and $\sem{-}$, we first show that $\slang$ enjoys the same homomorphic equalities as those in the definition of the semantic map, i.e., that $\slang(e)$ can be expressed in terms of $\slang$ applied to subexpressions of $e$.
The proofs of the equalities below follow a similar pattern: for the inclusion from left to right we use trace deconstruction lemmas to obtain traces for the component expressions, while for the inclusion from right to left we use trace construction lemmas to build traces for the composed expressions given the traces of the component expressions.
We treat the case for the empty pomset separately almost everywhere.
\begin{lemma}%
\label{lemma:syntactic-pa-language-homomorphism}
Let $e_1, e_2 \in \terms$, and $a \in \Sigma$.
The following equalities hold:
\ifarxiv%
\begin{align}
\slang(0) &= \emptyset \label{equality:soundness-zero} \\
\slang(1) &= \{ 1 \} \label{equality:soundness-one} \\
\slang(a) &= \{ a \} \label{equality:soundness-primitive} \\
\slang(e_1 + e_2) &= \slang(e_1) \cup \slang(e_2) \label{equality:soundness-plus} \\
\slang(e_1 \cdot e_2) &= \slang(e_1) \cdot \slang(e_2) \label{equality:soundness-sequential} \\
\slang(e_1 \parallel e_2) &= \slang(e_1) \parallel \slang(e_2) \label{equality:soundness-parallel} \\
\slang(e_1^*) &= {\slang(e_1)}^* \label{equality:soundness-star}
\end{align}
\else%
\begin{mathpar}
\slang(0) = \emptyset  \and
\slang(1) = \{ 1 \}    \and
\slang(a) = \{ a \}    \and
\slang(e_1 + e_2) = \slang(e_1) \cup \slang(e_2) \\
\slang(e_1 \cdot e_2) = \slang(e_1) \cdot \slang(e_2) \and
\slang(e_1 \parallel e_2) = \slang(e_1) \parallel \slang(e_2)\and
\slang(e_1^*) = {\slang(e_1)}^*
\end{mathpar}
\fi%
\end{lemma}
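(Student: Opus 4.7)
My plan is to prove each of the seven equalities separately, using the trace deconstruction lemmas of Section~\ref{subsection:trace-deconstruction} for the inclusion from left to right and the trace construction lemmas of Section~\ref{subsection:trace-construction} for the inclusion from right to left. Throughout, I will handle the empty pomset as a distinct case, because $\satracerel$ is defined only on pomsets in $\pom^+$; accordingly, $1 \in \slang(e)$ holds if and only if $e \in \sacc$, and Definition~\ref{definition:accepting-terms} together with Lemma~\ref{lemma:congruence-sound} handles the coincidence with the semantic account of $1 \in \sem{e}$.

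The base cases follow directly from Lemma~\ref{lemma:trace-deconstruct-base}: any non-empty trace from $0$ or $1$ reaches $0 \notin \sacc$, and any non-empty trace from $a \in \Sigma$ either reaches $1 \in \sacc$ by reading $a$, or reaches $0$. For equality~\eqref{equality:soundness-plus}, the $\subseteq$ direction uses Lemma~\ref{lemma:trace-decompose-plus} together with Lemma~\ref{lemma:congruent-states-halt} to show that if $e_1 + e_2 \sartrace{U} e' \in \sacc$ then $e' \simeq e_1' + e_2'$ with $e_i \sartrace{U} e_i'$ and at least one $e_i' \in \sacc$; the $\supseteq$ direction follows from Lemma~\ref{lemma:subcongruent-states-simulation} since $e_i \lesssim e_1 + e_2$. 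The parallel case uses Lemma~\ref{lemma:trace-deconstruct-parallel} for $\subseteq$: an accepting target $f \simeq f_1 + f_2 + f_3$ forces one summand to be accepting, each of which corresponds exactly to one of the three ways $U \in \sem{e_1 \parallel e_2}$ can arise (namely $1 \in \sem{e_2}$, $1 \in \sem{e_1}$, or a genuine split $U = U_1 \parallel U_2$). For $\supseteq$, the two parts of Lemma~\ref{lemma:trace-construct-parallel} directly furnish the required traces.

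The sequential case is the first place where a little care is needed. Applying Lemma~\ref{lemma:trace-deconstruct-sequential} to $e_1 \cdot e_2 \satrace{U} f \in \sacc$ yields $f \simeq f' \cdot e_2 + \sum_{i \in I} f_i$. By Lemma~\ref{lemma:congruent-states-halt}, some summand of this sum is accepting, which either says that $f', e_2 \in \sacc$ (so $U \in \slang(e_1)$ and $1 \in \slang(e_2)$) or that some $f_i \in \sacc$ (so $U_i' \in \slang(e_1)$ and $U_i \in \slang(e_2)$), both of which give $U \in \slang(e_1) \cdot \slang(e_2)$. The $\supseteq$ direction, together with the empty-pomset cases, is handled by the three-part version of Lemma~\ref{lemma:trace-construct-sequential} stated in the arxiv version.

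The star case will be the main obstacle, simply because of the bookkeeping. For $\subseteq$, Lemma~\ref{lemma:trace-deconstruct-star} decomposes a trace $e_1^* \satrace{U} f \in \sacc$ into a sum $\sum_i f_i \cdot e_1^*$ with each $f_i$ reached from $e_1$ via $U_i$ and preceded by a concatenation of traces $e_1 \satrace{U_{i,j}} f_{i,j} \in \sacc$; the accepting summand provides the factorisation of $U$ into pomsets in $\slang(e_1)$, putting $U \in \slang(e_1)^*$. For $\supseteq$, if $U = U_1 \cdots U_n$ with each $U_k \in \slang(e_1)$, i.e.\ $e_1 \satrace{U_k} f_k \in \sacc$, then Lemma~\ref{lemma:trace-construct-star} gives a trace $e_1^* \satrace{U} f$ with $f_n \cdot e_1^* \lesssim f$; since $f_n \in \sacc$ implies $f_n \cdot e_1^* \in \sacc$, Lemma~\ref{lemma:subcongruent-states-simulation} yields $f \in \sacc$. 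The empty pomset case is immediate from $1 \in \sacc$ via $e_1^* \in \sacc$, matching $1 \in \slang(e_1)^*$.
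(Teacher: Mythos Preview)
Your proposal is correct and follows essentially the same approach as the paper's own proof: each equality is shown by two inclusions, using the trace deconstruction lemmas of Section~\ref{subsection:trace-deconstruction} for the left-to-right direction, the trace construction lemmas of Section~\ref{subsection:trace-construction} together with Lemma~\ref{lemma:subcongruent-states-simulation} for the right-to-left direction, and treating the empty pomset separately via membership in $\sacc$. The only minor point you gloss over is that in the $\supseteq$ direction for the star case one should first discard any factors $U_k = 1$ (which is harmless since $1$ is the unit for sequential composition) before invoking Lemma~\ref{lemma:trace-construct-star}, as that lemma requires $U_k \in \pom^+$; the paper makes this explicit but it does not affect the correctness of your argument.
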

\begin{arxivproof}
For~\eqref{equality:soundness-zero}, suppose that $U \in \slang(0)$.
Then there exists an $f \in \sacc$ such that $0 \sartrace{U} f$.
Since $0 \not\in \sacc$, we know that $f \neq 0$, and so $0 \satrace{U} f$ must hold.
But then, by Lemma~\ref{lemma:trace-deconstruct-base} we know that $f = 0$, which is a contradiction.
We conclude that $\slang(0) = \emptyset$.

For~\eqref{equality:soundness-one}, suppose that $U \in \slang(1)$.
Then there exists an $f \in \sacc$ such that $1 \sartrace{U} f$.
Then either $f = 1$ and $U = 1$, or $U \in \pom^+$ and $1 \satrace{U} f$.
The former case is possible (since $1 \in \sacc$), but in the latter case we find that $f = 0$ by Lemma~\ref{lemma:trace-deconstruct-base}, and so $f \not\in \sacc$ --- a contradiction.
In conclusion, we find that $U = 1$, and thus $U \in \{ 1 \}$.
The other inclusion is easy: simply observe that $1 \sartrace{1} 1$ by definition of $\sartracerel$, implying that $1 \in \slang(1)$.

For~\eqref{equality:soundness-primitive}, suppose that $U \in \slang(a)$.
Then there exists an $f \in \sacc$ such that $a \sartrace{U} f$.
Then either $f = a$ and $U = 1$, or $U \in \pom^+$ and $a \satrace{U} f$.
We can rule out the former case, as it implies that $f = a \in \sacc$ --- a contradiction.
In the latter case, we find by Lemma~\ref{lemma:trace-deconstruct-base} that either $U = a$ and $f = 1$, or $f = 0$.
Since the latter again contradicts that $f \in \sacc$, we find that $U = a$; we thus conclude that $U \in \{ a \}$.
The other inclusion is easy: simply observe that $a \satrace{a} 1$ by definition of $\satracerel$, and thus $a \sartrace{a} 1$, implying that $a \in \slang(a)$.

For~\eqref{equality:soundness-plus}, suppose that $U \in \slang(e_1 + e_2)$.
Then there exists an $f \in \sacc$ such that $e_1 + e_2 \sartrace{U} f$.
If $f = e_1 + e_2$ and $U = 1$, then either $e_1 \in \sacc$ or $e_2 \in \sacc$.
In either case, it follows quite easily that $U \in \slang(e_1) \cup  \slang(e_2)$.
Otherwise, $e_1 + e_2 \satrace{U} f$.
But then, by Lemma~\ref{lemma:trace-decompose-plus}, we know that $f = f_1 + f_2$, such that $e_1 \satrace{U} f_1$ and $e_2 \satrace{U} f_2$.
Since $f \in \sacc$, either $f_1 \in \sacc$ or $f_2 \in \sacc$, and therefore $U \in \slang(e_1)$ or $U \in \slang(e_2)$; either way, $U \in \slang(e_1) \cup \slang(e_2)$.

To prove the other inclusion, suppose that $U \in \slang(e_1) \cup \slang(e_2)$.
If $U \in \slang(e_1)$, then there exists an $f_1 \in \sacc$ such that $e_1 \sartrace{U} f_1$.
If $U = 1$ and $f_1 = e_1$, then $e_1 \in \sacc$ and therefore $e_1 + e_2 \in \sacc$; it follows that $U \in \slang(e_1 + e_2)$.
Otherwise, $e_1 \satrace{U} f_1$, and so by Lemma~\ref{lemma:subcongruent-states-simulation} and the fact that $e_1 \lesssim e_1 + e_2$ we find $f \in \terms$ such that $e_1 + e_2 \satrace{U} f$ with $f_1 \lesssim f_2$.
By Lemma~\ref{lemma:subcongruent-states-simulation}, $f \in \sacc$, and so it follows that $U \in \slang(e_1 + e_2)$.
The case where $U \in \slang(e_2)$ is similar.

For~\eqref{equality:soundness-sequential}, suppose that $U \in \slang(e_1 \cdot e_2)$.
Then there exists an $f \in \sacc$ such that $e_1 \cdot e_2 \sartrace{U} f$.
If $U = 1$ and $f = e_1 \cdot e_2$, then $e_1 \cdot e_2 \in \sacc$, and thus $e_1, e_2 \in \sacc$.
Therefore $1 \in \slang(e_1)$ and $1 \in \slang(e_2)$, implying that $1 = 1 \cdot 1 \in \slang(e_1) \cdot \slang(e_2)$.
Otherwise, $e_1 \cdot e_2 \satrace{U} f$.
Then, by Lemma~\ref{lemma:trace-deconstruct-sequential}, there exists an $f' \in \terms$ and a finite set $I$, as well as $I$-indexed families ${(f_i')}_{i \in I}$ over $\sacc$ and ${(f_i)}_{i \in I}$ over $\terms$ and ${(U_i')}_{i \in I}$ over $\pom$ and ${(U_i)}_{i \in I}$ over $\terms$, such that:
\begin{itemize}
    \item $f \simeq f' \cdot e_2 + \sum_{i \in I} f_i$, $e_1 \satrace{U} f'$
    \item for all $i \in I$, $e_1 \satrace{U_i'} f_i'$, $e_2 \satrace{U_i} f_i$ and $U = U_i' \cdot U_i$.
\end{itemize}
Since $f \in \sacc$, also $f' \cdot e_2 + \sum_{i \in I} f_i \in \sacc$ by Lemma~\ref{lemma:congruent-states-halt}.
This means that either $f' \cdot e_2 \in \sacc$, or $f_i \in \sacc$ for some $i \in I$.
In the former case, $f', e_2 \in \sacc$, and therefore $U \in \slang(e_1)$ and $1 \in \slang(e_2)$, thus $U = U \cdot 1 \in \slang(e_1) \cdot \slang(e_2)$.
In the latter case, $U_i' \in \slang(e_1)$ and $U_i \in \slang(e_2)$, thus $U = U_i' \cdot U_i \in \slang(e_1) \cdot \slang(e_2)$.

To prove the other inclusion, suppose that $U \in \slang(e_1) \cdot \slang(e_2)$.
Then $U = U_1 \cdot U_2$, with $U_1 \in \slang(e_1)$ and $U_2 \in \slang(e_2)$.
There are four cases to consider.
\begin{itemize}
    \item If $U_1 = 1$ and $U_2 = 1$, then $e_1, e_2 \in \sacc$, and therefore $e_1 \cdot e_2 \in \sacc$.
    Thus $1 \in \slang(e_1 \cdot e_2)$.
    \item If $U_1 = 1$ and $U_2 \neq 1$, then $e_1 \in \sacc$ and $e_2 \satrace{U_2} f_2$ for some $f_2 \in \sacc$.
    Then, by Lemma~\ref{lemma:trace-construct-sequential}, $e_1 \cdot e_2 \satrace{U_2} f$ for some $f \in \terms$ with $f_2 \lesssim f$.
    By Lemma~\ref{lemma:subcongruent-states-simulation}, $f \in \sacc$, and thus $U = U_2 \in \slang(e_1 \cdot e_2)$.
    \item If $U_1 \neq 1$ and $U_2 = 1$, then $e_1 \satrace{U_1} f_1$ for some $f_1 \in \sacc$, and $e_2 \in \sacc$.
    Then, by Lemma~\ref{lemma:trace-construct-sequential}, $e_1 \cdot e_2 \satrace{U_1} f$ for some $f \in \terms$ with $f_1 \cdot e_2 \lesssim f$.
    Since $f_1, e_2 \in \sacc$, also $f_1 \cdot e_2 \in \sacc$, and thus by Lemma~\ref{lemma:subcongruent-states-simulation}, $f \in \sacc$.
    Therefore $U = U_1 \in \slang(e_1 \cdot e_2)$.
    \item If $U_1 \neq 1$ and $U_2 \neq 1$, then $e_1 \satrace{U_1} f_1$ and $e_2 \satrace{U_2} f_2$ for some $f_1, f_2 \in \sacc$.
    Then, by Lemma~\ref{lemma:trace-construct-sequential}, we find that $e_1 \cdot e_2 \satrace{U} f$ with $f_2 \lesssim f$.
    By Lemma~\ref{lemma:subcongruent-states-simulation}, $f \in \sacc$, and thus $U \in \slang(e_1 \cdot e_2)$.
\end{itemize}

For~\eqref{equality:soundness-parallel}, suppose that $U \in \slang(e_1 \parallel e_2)$.
Then there exists an $f \in \sacc$ such that $e_1 \parallel e_2 \sartrace{U} f$.
If $f = e_1 \parallel e_2$ and $U = 1$, then $e_1, e_2 \in \sacc$, and thus $1 \in \slang(e_1)$ and $1 \in \slang(e_2)$.
But then $1 = 1 \parallel 1 \in \slang(e_1) \parallel \slang(e_2)$.
Otherwise, $e_1 \parallel e_2 \satrace{U} f$ for some $f \in \sacc$.
By Lemma~\ref{lemma:trace-deconstruct-parallel}, we obtain $f_1, f_2, f_3 \in \sacc$ such that
\begin{itemize}
    \item $f \simeq f_1 + f_2 + f_3$
    \item either $f_1 = 0$, or $e_2 \in \sacc$ and $e_1 \satrace{U} f_1$
    \item either $f_2 = 0$, or $e_1 \in \sacc$ and $e_2 \satrace{U} f_2$
    \item either $f_3 = 0$, or $f_3 = 1$ and there exist $e_1', e_2' \in \terms$ and $f_1', f_2' \in \sacc$ and $U_1, U_2 \in \pom^+$ such that $e_1 \simeq e_1'$ and $e_2 \simeq e_2'$ and $U = U_1 \parallel U_2$ and $e_1' \satrace{U_1} f_1'$ and $e_2' \satrace{U_2} f_2'$.
\end{itemize}
By Lemma~\ref{lemma:subcongruent-states-simulation}, $f_1 + f_2 + f_3 \in \sacc$, and thus $f_1 \in \sacc$ or $f_2 \in \sacc$ or $f_3 \in \sacc$.
In the first case, $U \in \slang(e_1)$ and $1 \in \slang(e_2)$, and therefore $U = U \parallel 1 \in \slang(e_1) \parallel \slang(e_2)$.
In the second case, we similarly find that $U \in \slang(e_1) \parallel \slang(e_2)$.
In the last case, we find by Lemma~\ref{lemma:congruent-states-progress} and Lemma~\ref{lemma:congruent-states-halt} $f_1'', f_2'' \in \sacc$ such that $e_1 \satrace{U} f_1''$ and $e_2 \satrace{U} f_2''$, and thus we have that $U = U_1 \parallel U_2$ such that $U_1 \in \slang(e_1)$ and $U_2 \in \slang(e_2)$, therefore $U \in \slang(e_1) \parallel \slang(e_2)$.

To prove the other inclusion, suppose that $U \in \slang(e_1) \parallel \slang(e_2)$.
Then $U = U_1 \parallel U_2$.
There are four cases to consider.
\begin{itemize}
    \item If $U_1 = 1$ and $U_2 = 1$, then $e_1, e_2 \in \sacc$, thus $e_1 \parallel e_2 \in \sacc$.
    But then $U = 1 \in \slang(e_1 \parallel e_2)$.
    \item If $U_1 = 1$ and $U_2 \neq 1$, then $e_2 \satrace{U_2} f_2$ for some $f_2 \in \sacc$.
    By Lemma~\ref{lemma:trace-construct-parallel}, we find an $f \in \terms$ such that $e_1 \parallel e_2 \satrace{U} f$ with $f_2 \lesssim f$.
    But then, by Lemma~\ref{lemma:subcongruent-states-simulation}, $f \in \sacc$, and thus $U \in \slang(e_1 \parallel e_2)$.
    \item If $U_1 \neq 1$ and $U_2 = 1$, then we find that $U \in \slang(e_1 \parallel e_2)$ by an argument similar to the above case.
    \item If $U_1 \neq 1$ and $U_2 \neq 1$, then $e_1 \satrace{U_1} f_1$ and $e_2 \satrace{U_2} f_2$ for some $f_1, f_2 \in \sacc$.
    By Lemma~\ref{lemma:trace-construct-parallel}, we find an $f \in \terms$ such that $e_1 \parallel e_2 \satrace{U} f$ with $1 \lesssim f$.
    But then, by Lemma~\ref{lemma:subcongruent-states-simulation}, $f \in \sacc$, and thus $U \in \slang(e_1 \parallel e_2)$.
\end{itemize}

For~\eqref{equality:soundness-star}, suppose that $U \in \slang(e_1^*)$.
Then there exists an $f \in \sacc$ such that $e_1^* \sartrace{U} f$.
If $f = e_1^*$ and $U = 1$, then $1 \in {\slang(e_1)}^*$.
Otherwise, $e_1 \satrace{U} f$.
By Lemma~\ref{lemma:trace-deconstruct-star}, we find a finite set $I$ and an $I$-indexed family of finite sets ${(J_i)}_{i \in I}$ as well as $I$-indexed families ${(f_i)}_{i \in I}$ over $\terms$ and ${(U_i)}_{i \in I}$ over $\pom^+$, and for all $i \in I$ also $J_i$-indexed families ${(f_{i,j})}_{j \in J_i}$ over $\terms$ and ${(U_{i,j})}_{j \in J_i}$ over $\pom^+$, such that $f \simeq \sum_{i \in I} f_i \cdot e^*$, and for all $i \in I$
\begin{itemize}
    \item $e \satrace{U_i} f_i$
    \item for all $j \in J_i$ we have that $e \satrace{U_{i,j}} f_{i,j}$
    \item $U = U_i' \cdot U_i$, where $U_i'$ is some concatenation of all $U_{i,j}$ for all $j \in J_i$.
\end{itemize}
By Lemma~\ref{lemma:subcongruent-states-simulation}, we know that $\sum_{i \in I} f_i \cdot e^* \in \sacc$, and thus $f_i \cdot e^* \in \sacc$ for some $i \in I$, meaning in particular that $f_i \in \sacc$ for this $i \in I$.
Now $U_{i,j} \in \slang(e)$ for all $j \in J$, and thus $U_i' \in {\slang(e)}^*$.
Since also $U_i \in \slang(e)$, we find that $U = U_i' \cdot U_i \in {\slang(e)}^* \cdot \slang(e) \subseteq {\slang(e)}^*$.

To prove the other inclusion, suppose that $U \in {\slang(e)}^*$.
Then $U \in {\slang(e)}^n$, i.e., $U = U_1 \cdot U_2 \cdots U_n$, such that for all $1 \leq i \leq n$ we have $U_i \in \slang(e)$.
Assume (without loss of generality) that $U_i \neq 1$ for all $1 \leq i \leq n$.
If $n = 0$, then $U = 1$ and we find that $U \in \slang(e^*)$ immediately.
Otherwise, there exist $f_1, f_2, \dots, f_n \in \sacc$ such that $e \satrace{U_i} f_i$.
By Lemma~\ref{lemma:trace-construct-star} we find $f \in \terms$ such that $e^* \satrace{U} f$ with $f_n \lesssim f$.
But then $f \in \sacc$ by Lemma~\ref{lemma:subcongruent-states-simulation}, and thus $U \in \slang(e^*)$.
\end{arxivproof}

It is now easy to establish that the Brzozowski construction for the syntactic PA is sound with respect to the denotational semantics of series-rational expressions.
\begin{theorem}%
\label{theorem:syntactic-pa-soundness}
Let $e \in \terms$.
Then $\slang(e) = \sem{e}$.
\end{theorem}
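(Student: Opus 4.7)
The plan is to prove Theorem~\ref{theorem:syntactic-pa-soundness} by a straightforward structural induction on $e \in \terms$, using Lemma~\ref{lemma:syntactic-pa-language-homomorphism} to reduce each case of $\slang$ to an identity that matches the corresponding clause of Definition~\ref{definition:series-rational-expression-semantics}. Since the lemma already established that $\slang$ satisfies exactly the same compositional equalities as $\sem{-}$, the induction is essentially mechanical: the theorem says that two functions $\terms \to 2^{\pom}$ agree, and both are determined on each constructor in the same way.

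In more detail, first I would handle the three base cases: for $e = 0$, equality~\eqref{equality:soundness-zero} gives $\slang(0) = \emptyset = \sem{0}$; for $e = 1$, equality~\eqref{equality:soundness-one} gives $\slang(1) = \{1\} = \sem{1}$; and for $e = a \in \Sigma$, equality~\eqref{equality:soundness-primitive} gives $\slang(a) = \{a\} = \sem{a}$. Then for the inductive step I would treat each composite constructor in turn. For $e = e_1 + e_2$, I apply equality~\eqref{equality:soundness-plus} and the induction hypothesis to get $\slang(e_1 + e_2) = \slang(e_1) \cup \slang(e_2) = \sem{e_1} \cup \sem{e_2} = \sem{e_1 + e_2}$. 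The cases $e = e_1 \cdot e_2$, $e = e_1 \parallel e_2$, and $e = e_1^*$ proceed in exactly the same way, using equalities~\eqref{equality:soundness-sequential},~\eqref{equality:soundness-parallel}, and~\eqref{equality:soundness-star} respectively together with the induction hypothesis applied to the subexpressions.

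There is really no obstacle at this level: all the hard work has already been absorbed into Lemma~\ref{lemma:syntactic-pa-language-homomorphism}, whose proof required the full battery of trace deconstruction lemmas (Section~\ref{subsection:trace-deconstruction}) and trace construction lemmas (Section~\ref{subsection:trace-construction}), together with the congruence-soundness results (Lemmas~\ref{lemma:congruent-states-halt},~\ref{lemma:congruent-states-derive},~\ref{lemma:congruent-states-progress}, and~\ref{lemma:subcongruent-states-simulation}). Consequently the proof of the theorem itself should fit in a few lines.
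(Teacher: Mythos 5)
Your proposal is correct and matches the paper's own proof exactly: a structural induction on $e$ in which each case is discharged by the corresponding equality of Lemma~\ref{lemma:syntactic-pa-language-homomorphism} together with the induction hypothesis and the defining clauses of $\sem{-}$. You are also right that all the substantive work lives in that lemma, so the theorem itself needs only a few lines.
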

\begin{proof}
The proof proceeds by induction on $e$.
In the base, $e = 0$, $e = 1$ or $e = a$ for some $a \in \Sigma$.
In all cases, $\slang(e) = \sem{e}$ by Lemma~\ref{lemma:syntactic-pa-language-homomorphism}.
For the inductive step, there are four cases to consider: either $e = e_1 + e_2$, $e = e_1 \cdot e_2$, $e = e_1 \parallel e_2$ or $e = e_1^*$.
In all cases, the claim follows from the induction hypothesis and the definition of $\sem{-}$, combined with Lemma~\ref{lemma:syntactic-pa-language-homomorphism}.
\end{proof}

\subsection{Bounding the syntactic PA}%
\label{subsection:bounding}

Ideally, we would like to obtain a single PA with finitely many states that recognizes $\sem{e}$ for a given $e \in \terms$.
Unfortunately, the syntactic PA is not bounded, and thus Theorem~\ref{theorem:bounded-finite-subpa} does not apply.
For instance, the requirement that $\rho_\Sigma(e)$ be finite for $e \in \terms$ fails; consider the family of distinct terms ${(e_n)}_{n \in \naturals}$ defined by $e_0 = 1 \cdot a^*$ and $e_{n+1} = 0 \cdot a^* + e_n$ for $n \in \naturals$; it is not hard to show that $e_n \in \rho_\Sigma(a^*)$ for $n \in \naturals$, and thus conclude that $\rho_\Sigma(a^*)$ is infinite.
We remedy this problem by quotienting the state space of the syntactic PA by congruence.

In what follows, we write $[e]$ for the congruence class of $e \in \terms$ modulo $\simeq$, i.e., the set of all $e' \in \terms$ such that $e \simeq e'$.
We furthermore write $\qterms$ for the set of all congruence classes of expressions in $\terms$.
We now leverage Lemma~\ref{lemma:congruent-states-derive} to define a transition structure on $\qterms$.
\ifnotarxiv%

To save space, this section only summarizes the main stepping stones towards finding a finite PA for an expression; for a full proof, we refer to the full version of this paper~\cite{kappe-brunet-luttik-silva-zanasi-2017-arxiv}.
\fi%
\begin{definition}
We define $\qssderiv: \qterms \times \Sigma \to \qterms$ and $\qpsderiv: \qterms \times \binom{\qterms}{2} \to \qterms$ as
\begin{mathpar}
\qssderiv([e], a) = [\ssderiv(e, a)]
\and%
\qpsderiv([e], \mset{[f], [g]}) =
\begin{cases}
[0] & f \simeq 0\ \mbox{or}\ g \simeq 0 \\
[\psderiv(e, \mset{g, h})] & \mbox{otherwise}
\end{cases}
\end{mathpar}
Furthermore, the set $\qsacc$ is defined to be $\{ [e] : e \in \sacc \}$.
The \emph{quotiented syntactic PA} is the PA $A_\simeq = \angl{\qterms, \qssderiv, \qpsderiv, \qsacc}$.
\end{definition}

Note that, by virtue of Lemma~\ref{lemma:congruent-states-derive} and Lemma~\ref{lemma:congruent-states-halt}, we have that $\qssderiv$ and $\qpsderiv$, as well as $\qsacc$, are well-defined.
As before, we abbreviate subscripts, for example by writing $\qsatracerel$ rather than $\tracerel_{A_\simeq}$, and $\qslang$ rather than $L_{A_\simeq}$.
Of course, we also want the quotiented syntactic PA to accept the same languages as the syntactic PA\@.
\ifarxiv%
To that end, we show that the trace relations of the syntactic PA and the quotiented syntactic PA correspond.

\begin{restatable}{lemma}{syntracevsquotsyntrace}%
\label{lemma:syntactic-trace-vs-quotiented-syntactic-trace}
Let $e, f \in \terms$ and $U \in \pom^+$.
If $e \satrace{U} f$, then $[e] \qsatrace{U} [f]$.
If $[e] \qsatrace{U} [f]$, then there exists an $f \in \terms$ with $f \simeq f'$ and $e \satrace{U} f'$.
\end{restatable}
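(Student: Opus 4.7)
The plan is to prove both implications by induction on the pomset $U$, leveraging Lemma~\ref{lemma:pomset-unique-factorization} to distinguish the three structural cases ($U$ primitive, sequential, or parallel). The proof is largely routine book-keeping, apart from the parallel case where one must justify that the side-condition ``$f \simeq 0$ or $g \simeq 0$'' in the definition of $\qpsderiv$ never applies when the trace is genuine.

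For the first implication, I would proceed as follows. In the base case $U = a$, we have $f = \ssderiv(e,a)$, so $[f] = \qssderiv([e],a)$ by definition. For $U = V \cdot W$, we factor $e \satrace{V} e'' \satrace{W} f$, apply the inductive hypothesis twice, and compose. For $U = V \parallel W$, the trace provides $g, h \in \terms$ and $g', h' \in \sacc$ with $g \satrace{V} g'$ and $h \satrace{W} h'$, and $f = \psderiv(e,\mset{g,h})$. By induction, $[g] \qsatrace{V} [g']$ and $[h] \qsatrace{W} [h']$, with $[g'], [h'] \in \qsacc$. The key observation is that $g \not\simeq 0$: otherwise Lemma~\ref{lemma:congruent-states-progress} would produce a trace $0 \satrace{V} g''$ with $g'' \simeq g'$, but Lemma~\ref{lemma:trace-deconstruct-base} forces $g'' = 0$, contradicting $g' \in \sacc$ via Lemma~\ref{lemma:congruent-states-halt}. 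The same holds for $h$, so $\qpsderiv([e], \mset{[g],[h]}) = [\psderiv(e, \mset{g,h})] = [f]$, giving the desired quotient trace.

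For the second implication, I would again induct on $U$, but now I must be careful because the quotient trace only determines the target state up to $\simeq$, and the inductive hypotheses may produce intermediate states that are only congruent to the ones used at the quotient level; Lemma~\ref{lemma:congruent-states-progress} will repeatedly be invoked to bridge these gaps. In the base case, $[f] = [\ssderiv(e,a)]$ lets us pick $f' = \ssderiv(e,a)$. In the sequential case, the inductive hypotheses supply $g' \simeq g$ with $e \satrace{V} g'$ and $f'_0 \simeq f$ with $g \satrace{W} f'_0$, and Lemma~\ref{lemma:congruent-states-progress} converts the second into a trace $g' \satrace{W} f'$ with $f' \simeq f'_0 \simeq f$. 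In the parallel case, we obtain $[g], [h], [g'], [h']$ with the appropriate quotient traces, and $[g'], [h'] \in \qsacc$ implies (by Lemma~\ref{lemma:congruent-states-halt}) that any representatives we find will lie in $\sacc$; the inductive hypotheses then produce the required accepting traces in the syntactic PA. Since $[f] \neq [0]$ (by the definition of $\qpsderiv$'s second clause being active), we get $f = \psderiv(e, \mset{g, h})$ up to $\simeq$, yielding the trace $e \satrace{U} \psderiv(e, \mset{g, h})$ as required.

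The main obstacle is bookkeeping the parallel case correctly, in particular the asymmetric side-condition in the definition of $\qpsderiv$. Ensuring that $g, h \not\simeq 0$ whenever they are used as endpoints of accepting subtraces is the crux, and it relies on combining the trace-deconstruction lemma for $0$ with the compatibility results (Lemmas~\ref{lemma:congruent-states-progress} and~\ref{lemma:congruent-states-halt}). Everything else is a straightforward translation between $\satracerel$ and $\qsatracerel$ mediated by the defining equations of $\qssderiv$ and $\qpsderiv$.
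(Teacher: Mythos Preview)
Your approach matches the paper's: both directions are proved by induction on $U$, with the parallel case handled via the interplay of Lemmas~\ref{lemma:trace-deconstruct-base}, \ref{lemma:congruent-states-halt}, and~\ref{lemma:congruent-states-progress}. The first implication is fine as you describe it.

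There is a small slip in your parallel case for the second implication. You write ``Since $[f] \neq [0]$ (by the definition of $\qpsderiv$'s second clause being active), we get $f \simeq \psderiv(e, \mset{g, h})$.'' But $[f] \neq [0]$ is neither established nor necessarily true: the target of a quotient trace can perfectly well be $[0]$ (for instance when $\psderiv(e,\mset{g,h}) \simeq 0$). What you actually need is that $g, h \not\simeq 0$, so that the second clause of $\qpsderiv$ applies and hence $[f] = [\psderiv(e,\mset{g,h})]$. You already have the ingredients for this: the inductive hypotheses give you syntactic traces $g \satrace{V} g''$ and $h \satrace{W} h''$ with $g'', h'' \in \sacc$, and the same argument you used in the first direction (via Lemmas~\ref{lemma:congruent-states-progress}, \ref{lemma:trace-deconstruct-base}, and~\ref{lemma:congruent-states-halt}) then forces $g, h \not\simeq 0$. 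This is exactly how the paper argues it. Once that is fixed, your proof is correct and essentially identical to the paper's.
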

\begin{proof}
Refer to Appendix~\ref{appendix:proofs}.
\end{proof}
\else%
This turns out to be the case.
\fi%

\begin{theorem}%
\label{theorem:quotiented-syntactic-pa-vs-syntactic-pa}
Let $e \in \terms$.
Then $\slang(e) = \qslang([e])$.
\end{theorem}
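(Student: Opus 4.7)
The plan is to reduce the theorem to the correspondence between the trace relations $\satracerel$ and $\qsatracerel$ captured by Lemma~\ref{lemma:syntactic-trace-vs-quotiented-syntactic-trace}, and then invoke Lemma~\ref{lemma:congruent-states-halt} to match acceptance conditions across congruence classes.

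The main work will be in establishing the trace correspondence, which I would prove by mutual induction on the pomset $U$. For the forward direction, suppose $e \satrace{U} f$. In the base $U = a$, we have $f = \ssderiv(e,a)$, so by definition $\qssderiv([e],a) = [\ssderiv(e,a)] = [f]$ giving $[e] \qsatrace{a} [f]$. The sequential case is immediate from the induction hypothesis. For $U = V \parallel W$, the trace is witnessed by $g \satrace{V} g' \in \sacc$, $h \satrace{W} h' \in \sacc$, and $f = \psderiv(e, \mset{g,h})$. By Lemma~\ref{lemma:trace-deconstruct-base} neither $g$ nor $h$ can be $\simeq 0$ (otherwise they could not reach an accepting state on a non-empty pomset), so $\qpsderiv([e], \mset{[g],[h]}) = [\psderiv(e, \mset{g,h})] = [f]$; the induction hypothesis gives $[g] \qsatrace{V} [g']$ and $[h] \qsatrace{W} [h']$, and Lemma~\ref{lemma:congruent-states-halt} ensures $[g'], [h'] \in \qsacc$, so the parallel rule in $\qsatracerel$ yields $[e] \qsatrace{U} [f]$.

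For the reverse direction, suppose $[e] \qsatrace{U} [f]$; I will produce $f' \simeq f$ with $e \satrace{U} f'$. In the base $U = a$, $[f] = \qssderiv([e],a) = [\ssderiv(e,a)]$, so choose $f' = \ssderiv(e,a)$. The sequential case uses Lemma~\ref{lemma:congruent-states-progress} to transport the second sub-trace across the intermediate congruence. The parallel case is the most delicate: the trace arises from $[g] \qsatrace{V} [g_1] \in \qsacc$ and $[h] \qsatrace{W} [h_1] \in \qsacc$ with $g, h \not\simeq 0$, and $[f] = [\psderiv(e, \mset{g,h})]$. Applying the induction hypothesis produces witnesses $g \satrace{V} g_1'$ and $h \satrace{W} h_1'$ with $g_1' \simeq g_1$ and $h_1' \simeq h_1$; Lemma~\ref{lemma:congruent-states-halt} puts $g_1', h_1'$ in $\sacc$, so $e \satrace{V \parallel W} \psderiv(e, \mset{g,h})$, and this target is $\simeq f$ as required.

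With the trace correspondence in hand, the theorem follows by a short case analysis on $U$. If $U = 1$, then $U \in \slang(e)$ iff $e \in \sacc$ iff $[e] \in \qsacc$ iff $U \in \qslang([e])$, where the middle equivalence is the definition of $\qsacc$. If $U \neq 1$, then $U \in \slang(e)$ iff there exists $f \in \sacc$ with $e \satrace{U} f$; the forward trace correspondence gives $[e] \qsatrace{U} [f]$ with $[f] \in \qsacc$, hence $U \in \qslang([e])$. Conversely, $U \in \qslang([e])$ gives $[e] \qsatrace{U} [f]$ with $[f] \in \qsacc$; the reverse direction produces $f' \simeq f$ with $e \satrace{U} f'$, and $f' \in \sacc$ by Lemma~\ref{lemma:congruent-states-halt}.

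The main obstacle is the parallel case of the reverse direction: one must be careful that the representatives $g, h$ of the multiset argument to $\qpsderiv$ are not $\simeq 0$ (so that the second clause of the definition applies and equates $\qpsderiv([e], \mset{[g],[h]})$ with $[\psderiv(e, \mset{g,h})]$), and that switching to different but $\simeq$-equivalent representatives on either the target or the branches does not alter which trace is constructed. Both concerns are handled by Lemma~\ref{lemma:congruent-states-derive}, which guarantees that $\psderiv$ respects $\simeq$ both in its first argument and, under the non-triviality assumption, in the multiset components.
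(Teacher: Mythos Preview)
Your proposal is correct and follows essentially the same route as the paper: the theorem is reduced to the trace correspondence of Lemma~\ref{lemma:syntactic-trace-vs-quotiented-syntactic-trace}, with the empty pomset handled separately via the definition of $\qsacc$, and acceptance transported across $\simeq$ by Lemma~\ref{lemma:congruent-states-halt}. The only quibble is your closing sentence: the non-triviality $g,h\not\simeq 0$ is not delivered by Lemma~\ref{lemma:congruent-states-derive} but (as you correctly argue earlier) by Lemma~\ref{lemma:trace-deconstruct-base} applied to the syntactic traces obtained from the induction hypothesis; Lemma~\ref{lemma:congruent-states-derive} only handles the representative-independence concern.
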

\begin{arxivproof}
By definition, $e \in F_\Sigma$ if and only if $[e] \in F_\simeq$.
It then follows that $1 \in \slang(e)$ if and only if $1 \in \qslang([e])$.
The remaining cases are covered by Lemma~\ref{lemma:syntactic-trace-vs-quotiented-syntactic-trace}.
\end{arxivproof}

\ifarxiv%
\begin{corollary}%
\label{corollary:quotiented-syntactic-pa-single-empty-state}
The state $[0]$ is the only state in the quotiented syntactic PA with an empty language.
\end{corollary}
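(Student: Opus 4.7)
The plan is to chain together the three main results established just before the corollary. By Theorem~\ref{theorem:quotiented-syntactic-pa-vs-syntactic-pa}, for any $e \in \terms$ we have $\qslang([e]) = \slang(e)$, and by Theorem~\ref{theorem:syntactic-pa-soundness}, $\slang(e) = \sem{e}$. Hence $\qslang([e]) = \emptyset$ if and only if $\sem{e} = \emptyset$.

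Next I would invoke Lemma~\ref{lemma:congruence-sound}, which states that $e \simeq 0$ if and only if $\sem{e} = \emptyset$. Combining this with the previous equivalence yields $\qslang([e]) = \emptyset$ iff $e \simeq 0$, i.e., iff $[e] = [0]$.

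Since every state of the quotiented syntactic PA is of the form $[e]$ for some $e \in \terms$, this shows that $[0]$ is the unique state whose accepted language is empty. There is essentially no obstacle here: all the work has been done in the cited lemma and theorems, and the corollary is just a one-line consequence of their combination.
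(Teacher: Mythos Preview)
Your proposal is correct and follows essentially the same approach as the paper: chain Theorem~\ref{theorem:quotiented-syntactic-pa-vs-syntactic-pa}, Theorem~\ref{theorem:syntactic-pa-soundness}, and Lemma~\ref{lemma:congruence-sound} to conclude that $\qslang([e]) = \emptyset$ precisely when $[e] = [0]$. The paper's proof phrases only the forward implication explicitly, but your biconditional formulation is equally valid and uses the same ingredients.
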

\begin{proof}
Let $[e] \in \qterms$ be a state such that $\qslang([e]) = \emptyset$.
Then, by Theorem~\ref{theorem:quotiented-syntactic-pa-vs-syntactic-pa}, we know that $\slang(e) = \emptyset$, and by Theorem~\ref{theorem:syntactic-pa-soundness}, $\sem{e} = \emptyset$.
But then $e \simeq 0$ by Lemma~\ref{lemma:congruence-sound}, and thus $[e] = [0]$.
\end{proof}

We now show that the quotiented syntactic PA is bounded.
First, we need the following.

\begin{definition}%
\label{definition:parallel-depth}
Let $e \in \terms$.
The \emph{parallel depth} of $e$, denoted $\depth{e}$, is $0$ when $e \simeq 0$, and otherwise:
\begin{mathpar}
\depth{1} = 0 \and
\depth{a} = 1 \and
\depth{e_1 + e_2} = \max(\depth{e_1}, \depth{e_2})\\
\depth{e_1 \cdot e_2} = \max(\depth{e_1}, \depth{e_2})\and
\depth{e_1 \parallel e_2} = \max(\depth{e_1}, \depth{e_2}) + 1\and
\depth{e_1^*} = \depth{e_1}
\end{mathpar}
\end{definition}

It is easy to show that the parallel depth of an expression is also well-defined on the congruence classes of $\simeq$.
This allows us to define a fork hierarchy on $\qterms$.
\begin{restatable}{lemma}{depthvscongruence}%
\label{lemma:depth-vs-congruence}
Let $e, e' \in \terms$ be such that $e \simeq e'$.
Then $\depth{e} = \depth{e'}$.
\end{restatable}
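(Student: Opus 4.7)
The plan is to proceed by induction on the derivation of $e \simeq e'$. The key auxiliary fact that makes the proof go through smoothly is that whether an expression is $\simeq$-equivalent to $0$ is invariant under $\simeq$: by Lemma~\ref{lemma:congruence-sound}, $e \simeq 0$ iff $\sem{e} = \emptyset$, so $e \simeq e'$ forces $e \simeq 0$ iff $e' \simeq 0$. This is needed because the definition of $\depth{-}$ splits on the $\simeq 0$ case, and we have to make sure the two sides of any step of $\simeq$ agree on that case.

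First I would dispatch the axioms of $\simeq$ directly. The axioms with $0$ on the right (e.g.\ $0 \cdot e_1 \simeq 0$, $e_1 \cdot 0 \simeq 0$, $0 \parallel e \simeq 0$, $e \parallel 0 \simeq 0$) are immediate because both sides are $\simeq 0$, hence both have depth $0$ by definition. The additive axioms reduce to idempotence, commutativity, and associativity of $\max$ over $\naturals$ (using $\depth{0} = 0$ when needed), after a short check that the left-hand side is $\simeq 0$ iff the right-hand side is (all four additive axioms have $\sem{\text{lhs}} = \sem{\text{rhs}}$ by Lemma~\ref{lemma:congruence-sound}, which transfers the $\simeq 0$ status).

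Next I would handle the congruence closure. Reflexivity, symmetry, and transitivity are trivial, so only the operator-compatibility cases remain. Suppose by induction that $\depth{e_1} = \depth{e_1'}$ and $\depth{e_2} = \depth{e_2'}$; I want to conclude $\depth{e_1 \ast e_2} = \depth{e_1' \ast e_2'}$ for $\ast \in \{+, \cdot, \parallel\}$. Using $\sem{e_1 \cdot e_2} = \sem{e_1} \cdot \sem{e_2}$, $\sem{e_1 \parallel e_2} = \sem{e_1} \parallel \sem{e_2}$, and $\sem{e_1 + e_2} = \sem{e_1} \cup \sem{e_2}$, together with Lemma~\ref{lemma:congruence-sound}, $e_1 \ast e_2 \simeq 0$ is decidable purely from which of $e_1, e_2$ are $\simeq 0$, and this condition is preserved under the inductive hypothesis. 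So either both $e_1 \ast e_2$ and $e_1' \ast e_2'$ are $\simeq 0$ (and both depths are $0$), or neither is, in which case $\depth{-}$ unfolds to the $\max$ expression and the induction hypothesis applies directly. For the star case, $1 \in \sem{e^*}$ gives $e^* \not\simeq 0$ always, so $\depth{e^*} = \depth{e} = \depth{e'} = \depth{{e'}^*}$ immediately.

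The only obstacle worth noting is the bookkeeping around the $\simeq 0$ branch of the definition of $\depth{-}$; everything else is routine once Lemma~\ref{lemma:congruence-sound} is invoked to certify that the $\simeq 0$ property propagates compositionally.
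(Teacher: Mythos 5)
Your proposal is correct and follows essentially the same route as the paper's proof: induction on the derivation of $e \simeq e'$, with the $e \simeq 0$ branch of the definition of $\depth{-}$ dispatched up front so that the axioms and congruence cases reduce to properties of $\max$. Your explicit appeal to Lemma~\ref{lemma:congruence-sound} to certify that the $\simeq 0$ status propagates compositionally is, if anything, slightly more careful bookkeeping than the paper's, which handles that point implicitly.
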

\begin{proof}
Refer to Appendix~\ref{appendix:proofs}.
\end{proof}

\begin{definition}
The relation $\prec\ \subseteq \qterms \times \qterms$ is such that $[e] \prec [f]$ if and only if $\depth{e} < \depth{f}$.
\end{definition}

\begin{lemma}%
\label{lemma:quotiented-syntactic-pa-fork-acyclic}
The quotiented syntactic PA is fork-acyclic, with fork hierarchy $\prec$.
\end{lemma}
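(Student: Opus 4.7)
My plan is to verify the three rules of Definition~\ref{definition:fork-acyclic} for the relation $\prec$ on $\qterms$. First, $\prec$ is a strict order on $\qterms$ because $<$ on $\naturals$ is, and $\depth{-}$ descends to $\qterms$ by Lemma~\ref{lemma:depth-vs-congruence}. The three fork-hierarchy rules then correspond, respectively, to the following three facts about representatives in $\terms$, which I would prove first:
\begin{enumerate}
\item[(a)] If $\psderiv(e, \mset{r, s}) \not\simeq 0$ and $r, s \not\simeq 0$, then $\depth{r}, \depth{s} < \depth{e}$.
\item[(b)] $\depth{\ssderiv(e, a)} \leq \depth{e}$ for every $a \in \Sigma$.
\item[(c)] $\depth{\psderiv(e, \phi)} \leq \depth{e}$ for every $\phi \in \binom{\terms}{2}$.
\end{enumerate}
Given these, rule~(1) follows directly from (a) and the definition of $\pi_{A_\simeq}$. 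Rules~(2) and~(3) follow from (b) and (c): if $\depth{r} < \depth{\qssderiv([q], a)}$ then using (b) for a representative yields $\depth{r} < \depth{q}$, and similarly for the parallel case using (c).

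Claims (b) and (c) are routine inductions on the structure of $e$. In each inductive case the derivative is built as a sum of subterms, and $\depth{-}$ of a sum is the maximum of the components' depths. For example, $\depth{\ssderiv(e_1 \cdot e_2, a)} = \max(\depth{\ssderiv(e_1, a) \cdot e_2}, \depth{e_1 \star \ssderiv(e_2, a)}) \leq \max(\depth{e_1}, \depth{e_2}) = \depth{e_1 \cdot e_2}$ by induction, and similarly for the other constructors; the Kleene star case uses $\depth{e^*} = \depth{e}$. The parallel case of (c) is still comfortable because the indicator summand $[\phi \simeq \mset{e_1, e_2}]$ has depth $0$, and the recursive summands are handled inductively.

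The main content is claim (a), proved by induction on $e$. The base cases ($0$, $1$, primitive) are immediate because $\psderiv$ returns $0$ there. For $e = e_1 + e_2$, $e = e_1 \cdot e_2$, and $e = e_1^*$, the derivative is a sum in which any nonzero summand contains $\psderiv(e_i, \mset{r,s})$ (possibly multiplied on the right or scaled by the $\star$-guard); the induction hypothesis on the relevant $e_i$ yields $\depth{r}, \depth{s} < \depth{e_i} \leq \depth{e}$. The crucial case is $e = e_1 \parallel e_2$, where
\[
  \psderiv(e_1 \parallel e_2, \mset{r, s}) \simeq [\mset{r,s} \simeq \mset{e_1, e_2}] + e_1 \star \psderiv(e_2, \mset{r,s}) + e_2 \star \psderiv(e_1, \mset{r,s}).
\]
Some summand must be $\not\simeq 0$. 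If the first summand is nonzero, then $\mset{r,s} \simeq \mset{e_1, e_2}$, so by Lemma~\ref{lemma:depth-vs-congruence}, $\{\depth{r},\depth{s}\} = \{\depth{e_1},\depth{e_2}\}$, and thus $\depth{r},\depth{s} \leq \max(\depth{e_1},\depth{e_2}) < \max(\depth{e_1},\depth{e_2})+1 = \depth{e_1 \parallel e_2}$. If one of the other summands is nonzero, then by induction $\depth{r}, \depth{s} < \depth{e_i}$ for some $i$, and again $\depth{e_i} < \depth{e_1 \parallel e_2}$.

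The main obstacle I anticipate is being careful in the parallel case of (a) with the fact that $\psderiv$ is defined on representatives but we need to reason modulo $\simeq$: invoking Lemma~\ref{lemma:depth-vs-congruence} (together with Lemma~\ref{lemma:congruence-sound} to translate ``nonzero modulo $\simeq$'') to get the strict inequality $\depth{r}, \depth{s} \leq \max(\depth{e_1},\depth{e_2})$ from the congruence $\mset{r,s} \simeq \mset{e_1,e_2}$ is exactly where the hierarchy strictly decreases by one. Everything else is bookkeeping on the sum/composition cases.
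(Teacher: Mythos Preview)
Your proposal is correct and follows essentially the same approach as the paper: both reduce the three fork-hierarchy rules to the three auxiliary facts (a), (b), (c) about parallel depth on $\terms$, and prove each by structural induction on $e$ with the same case analysis (in particular, the parallel case of (a) is handled identically via the indicator summand and Lemma~\ref{lemma:depth-vs-congruence}). The only cosmetic difference is that the paper explicitly separates out the case $e \simeq 0$ before the inductions on (b) and (c), which you handle implicitly.
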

\begin{proof}
We need to show that $\prec$ satisfies the conditions of Definition~\ref{definition:fork-acyclic}.

For the first rule, we begin by showing that if $g, h, e \in \terms$ such that $e, g, h \not\simeq 0$ and $\psderiv(e, \mset{g, h}) \not\simeq 0$, then $\depth{g} < \depth{e}$.
The proof proceeds by induction on $e$.
In the base, $e = 1$ or $e = a$; in both cases, $\psderiv(e, \mset{g, h}) = 0$ and so the claim holds vacuously.
For the inductive step, there are four cases to consider.
\begin{itemize}
    \item If $e = e_1 + e_2$, then $\psderiv(e, \mset{g, h}) = \psderiv(e_1, \mset{g, h}) + \psderiv(e_2, \mset{g, h}) \not\simeq 0$, thus $\psderiv(e_1, \mset{g, h}) \not\simeq 0$ or $\psderiv(e_2, \mset{g, h}) \not\simeq 0$.
    Therefore, by induction $\depth{g} < \depth{e_1}$ or $\depth{g} < \depth{e_2}$, and thus $\depth{g} < \max(\depth{e_1}, \depth{e_2}) = \depth{e}$.
    \item If $e = e_1 \cdot e_2$, then $\psderiv(e, \mset{g, h}) = \psderiv(e_1, \mset{g, h}) \cdot e_2 + e_1 \star \psderiv(e_2, \mset{g, h}) \not\simeq 0$, thus $\psderiv(e_1, \mset{g, h}) \not\simeq 0$ or $\psderiv(e_2, \mset{g, h}) \not\simeq 0$.
    Therefore, by induction $\depth{g} < \depth{e_1}$ or $\depth{g} < \depth{e_2}$, and thus $\depth{g} < \max(\depth{e_1}, \depth{e_2}) = \depth{e}$.
    \item If $e = e_1 \parallel e_2$, then $\psderiv(e, \mset{g, h}) = [\mset{g, h} = \mset{e_1, e_2}] + e_2 \star \psderiv(e_1, \mset{g, h}) + e_1 \star \psderiv(e_2, \mset{g, h}) \not\simeq 0$.
    If $[\mset{g, h} = \mset{e_1, e_2}] \not\simeq 0$, we know that $\mset{g, h} = \mset{e_1, e_2}$.
    Assume without loss of generality that $e_1 = g$; then $\depth{g} = \depth{e_1} < \max(\depth{e_1}, \depth{e_2}) + 1 = \depth{e}$.
    If $\psderiv(e_1, \mset{g, h}) \not\simeq 0$, then by induction $\depth{g} < \depth{e_1} < \max(\depth{e_1}, \depth{e_2}) + 1$.
    The case where $\psderiv(e_2, \mset{g, h}) \not\simeq 0$ is similar.
    \item If $e = e_1^*$, then $\psderiv(e, \mset{g, h}) = \psderiv(e_1, \mset{g, h}) \cdot e_1^* \not\simeq 0$ and therefore $\psderiv(e_1, \mset{g, h}) \not\simeq 0$.
    But then $\depth{g} < \depth{e_1} = \depth{e}$ by induction.
\end{itemize}
To fully validate the first rule, suppose that $\mset{[g], [h]} \in \qssupport([e])$.
Then $e, g, h \not\simeq 0$ and $\qpsderiv([e], \mset{[g], [h]}) \neq [0]$ and thus $\psderiv(e, \mset{g, h}) \not\simeq 0$.
But then we find that $\depth{g} < \depth{e}$ by the above, and therefore $[g] \prec [e]$.

For the second rule, we first show that if $a \in \Sigma$ and $e \in \terms$, then $\depth{\ssderiv(e, a)} \leq \depth{e}$.
If $e \simeq 0$, then $\depth{\ssderiv(e, a)} = \depth{\ssderiv(0, a)} = \depth{0} = 0 \leq \depth{0} = \depth{e}$.
The proof of the remaining cases proceeds by induction on $e$.
In the base, $e = 1$ or $e = a$.
In both cases, $\depth{\ssderiv(e, a)} = 0 \leq \depth{e}$.

For the inductive step, there are four cases to consider.
\begin{itemize}
    \item If $e = e_1 + e_2$, then we can derive
    \[\depth{\ssderiv(e, a)} = \depth{\ssderiv(e_1, a) + \ssderiv(e_2, a)} = \max(\depth{\ssderiv(e_1, a)}, \depth{\ssderiv(e_2, a)}) \leq \max(\depth{e_1}, \depth{e_2}) = \depth{e}\]
    \item If $e = e_1 \cdot e_2$, then we can derive
    \begin{align*}
    \depth{\ssderiv(e, a)}
        &= \depth{\ssderiv(e_1, a) \cdot e_2 + e_1 \star \ssderiv(e_2, a)} \\
        &\leq \max(\depth{\ssderiv(e_1, a)}, \depth{e_2}, \depth{\ssderiv(e_2, a)}) \\
        &\leq \max(\depth{e_1}, \depth{e_2}) = \depth{e}
    \end{align*}
    \item If $e = e_1 \parallel e_2$, then we can derive
    \begin{align*}
    \depth{\ssderiv(e, a)}
        &= \depth{e_2 \star \ssderiv(e_1, a) + e_1 \star \ssderiv(e_2, a)} \\
        &\leq \max(\depth{\ssderiv(e_1, a)}, \depth{\ssderiv(e_2, a)}) \\
        &\leq \max(\depth{e_1}, \depth{e_2}) = \depth{e_1 \parallel e_2}
    \end{align*}
    \item If $e = e_1^*$, then we can derive
    \[\depth{\ssderiv(e, a)} = \depth{\ssderiv(e_1, a) \cdot e_1^*} \leq \max(\depth{\ssderiv(e_1, a)}, \depth{e_1^*}) \leq \depth{e}\]
\end{itemize}

Thus, suppose that $[f] \prec \qssderiv([e], a) = [\ssderiv(e, a)]$.
Then $[f] \prec [e]$, since we can derive that $\depth{f} < \depth{\ssderiv(e, a)} \leq \depth{e}$.

For the third rule, one also first shows that for $e \in \terms$ and $\phi \in \binom{\terms}{2}$ we have that $\depth{\psderiv(e, \phi)} \leq \depth{e}$.
One can then extend this to show that if $[f] \prec \qpsderiv([e], \phi)$, also $[f] \prec [e]$ using the same technique as above.
\end{proof}

Finally, we investigate the reach and support of a state in the quotiented syntactic PA, using the deconstruction lemmas showed in Section~\ref{subsection:trace-deconstruction}, as well as Lemma~\ref{lemma:syntactic-trace-vs-quotiented-syntactic-trace}.
\begin{lemma}%
\label{lemma:quotiented-syntactic-pa-finite-reach-support}
Let $e \in \terms$.
Then $\rho_\simeq([e])$ and $\qssupport([e])$ are finite.
\end{lemma}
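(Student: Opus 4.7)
\emph{Plan.} The proof proceeds by induction on the structure of $e$, exhibiting explicit finite ``covers'' for both $\rho_\simeq([e])$ and $\qssupport([e])$.

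For the support, I would define a finite set $P(e) \subseteq \binom{\qterms}{2}$ by induction on $e$, taking $P(e) = \emptyset$ for $e \in \{0, 1\} \cup \Sigma$; $P(e_1 + e_2) = P(e_1 \cdot e_2) = P(e^*) = P(e_1) \cup P(e_2)$ (with the obvious simplification in the star case); and
\[
P(e_1 \parallel e_2) = \begin{cases} P(e_1) \cup P(e_2) \cup \{\mset{[e_1], [e_2]}\} & \text{if } e_1, e_2 \not\simeq 0, \\ P(e_1) \cup P(e_2) & \text{otherwise.} \end{cases}
\]
A direct induction on $e$ then confirms that $P(e)$ is finite, and that $\psderiv(e, \mset{g, h}) \not\simeq 0$ together with $g, h \not\simeq 0$ forces $\mset{[g], [h]}$ to match the top-level pair of some parallel subterm of $e$ occurring in a suitable context, hence $\mset{[g], [h]} \in P(e)$. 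In particular $\qssupport([e]) \subseteq P(e)$, which is finite.

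For the reach, I would define in parallel a finite set $A(e) \subseteq \qterms$ of ``atoms''. Write $\Sigma(S) \subseteq \qterms$ for the set of congruence classes of finite $+$-sums of elements of $S$; since $\simeq$ identifies sums up to associativity, commutativity, and idempotence of $+$ together with its unit $0$, we have $|\Sigma(S)| \leq 2^{|S|}$. The definition is $A(0) = \{[0]\}$, $A(1) = \{[1]\}$, $A(a) = \{[1], [a]\}$, $A(e_1 + e_2) = A(e_1) \cup A(e_2)$, $A(e_1 \parallel e_2) = A(e_1) \cup A(e_2) \cup \{[1], [e_1 \parallel e_2]\}$, and
\[
A(e_1 \cdot e_2) = \{[h \cdot e_2] : [h] \in \Sigma(A(e_1))\} \cup A(e_2), \qquad A(e^*) = \{[h \cdot e^*] : [h] \in \Sigma(A(e))\} \cup \{[e^*]\}.
\]
Induction on $e$ shows that each $A(e)$ is finite.

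I would then establish by simultaneous induction on $e$ that (i)~$[e] \in \Sigma(A(e))$, (ii)~for every $[h] \in \Sigma(A(e))$ and $a \in \Sigma$, $\qssderiv([h], a) \in \Sigma(A(e))$, and (iii)~for every $[h] \in \Sigma(A(e))$ and $\phi \in P(e)$, $\qpsderiv([h], \phi) \in \Sigma(A(e))$. Because $\ssderiv$ and $\psderiv$ distribute over $+$ by Definition~\ref{definition:derivatives}, parts (ii) and (iii) reduce to verifying closure only for each \emph{atom} of $A(e)$, which is routine in each case by unfolding definitions (with the star and parallel cases using the explicitly-added atoms $[e^*]$ and $[e_1 \parallel e_2]$). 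Granting this, since $[e] \in \Sigma(A(e))$ and $\Sigma(A(e))$ is closed under the operations used to generate the reach (with $\qssupport([e]) \subseteq P(e)$), we conclude $\rho_\simeq([e]) \subseteq \Sigma(A(e))$, which is finite.

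\emph{Main obstacle.} The critical design choice is that $\simeq$ does \emph{not} include distributivity of $\cdot$ over $+$: when $[f'] \in \Sigma(A(e_1))$ is a non-trivial sum, $[f' \cdot e_2]$ does not rewrite under $\simeq$ to a $+$-sum of $[g \cdot e_2]$ with $[g] \in A(e_1)$. This forces the clauses for $A(e_1 \cdot e_2)$ and $A(e^*)$ to quantify over all sums in $\Sigma(A(\cdot))$ rather than over atoms, incurring an exponential blow-up but restoring closure. The extra explicit atoms $[e^*]$ and $[e_1 \parallel e_2]$ are likewise unavoidable, since neither starting state can otherwise be expressed as a non-trivial $+$-sum of subterm atoms up to $\simeq$.
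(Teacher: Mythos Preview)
Your proposal is essentially correct and takes a genuinely different route from the paper. The paper does not build explicit atom sets; instead it shows that the set $R_\Sigma(e)$ of trace-reachable states is finite up to~$\simeq$ by induction on~$e$, invoking the trace \emph{deconstruction} lemmas of Section~\ref{subsection:trace-deconstruction} (Lemmas~\ref{lemma:trace-deconstruct-base}--\ref{lemma:trace-deconstruct-star}) at each case, and then transfers this to~$\rho_\simeq([e])$ via Lemma~\ref{lemma:reach-vs-reachable} together with Corollary~\ref{corollary:quotiented-syntactic-pa-single-empty-state}. Your construction instead works directly at the level of derivatives, avoiding both the trace machinery and the reach-versus-reachable translation; in return it gives an explicit upper bound $2^{|A(e)|}$ on~$|\rho_\simeq([e])|$. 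The paper's treatment of the support is the same as yours in spirit, just without naming the set~$P(e)$.

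There is one wrinkle worth flagging. Restricting clause~(iii) to $\phi \in P(e)$ makes the inductive step for $e_1 \cdot e_2$ (and similarly for $\parallel$) stick: for an atom $[h \cdot e_2]$ with $[h] \in \Sigma(A(e_1))$ you need $[\psderiv(h,\phi)] \in \Sigma(A(e_1))$, but the induction hypothesis for~$e_1$ only supplies this when $\phi \in P(e_1)$, not when $\phi \in P(e_2) \setminus P(e_1)$. The cleanest repair is to state~(iii) for \emph{all} $\phi$ with both components $\not\simeq 0$: this is no stronger in practice, since $\qpsderiv([h],\phi) = [0] \in \Sigma(A(e))$ whenever $\phi \notin \qssupport([h])$, and it lets the induction go through uniformly. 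Alternatively you can carry the extra invariant that $\qssupport([h]) \subseteq P(e)$ for every $[h] \in \Sigma(A(e))$, which also holds but is more work.
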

\begin{proof}
We first claim that $R_\Sigma(e)$ is finite up to $\simeq$, i.e., that the set $\{ [f] : f \in R_\Sigma(e) \}$ is finite.
The proof of this claim proceeds by induction on $e$.
In the base, there are three cases to consider.
\begin{itemize}
    \item If $e = 0$ and $e \satrace{U} f$, then $f = 0$ by Lemma~\ref{lemma:trace-deconstruct-base}; it follows that $R_\Sigma(e)$ is finite.
    \item If $e = 1$ and $e \satrace{U} f$, then $f = 0$ by Lemma~\ref{lemma:trace-deconstruct-base}; it follows that $R_\Sigma(e)$ is finite.
    \item If $e = a$ for some $a \in \Sigma$, then $f = 0$ or $f = 1$ by Lemma~\ref{lemma:trace-deconstruct-base}.
    It follows that $R_\Sigma(e)$ is finite.
\end{itemize}
In all cases, we find that $R_\Sigma(e)$ is finite, and therefore finite up to $\simeq$.

For the inductive step, there are four cases to consider.
In each case our strategy is to write an arbitrary element of $R_\Sigma(e)$ into a sum of a number of terms.
If we can then show that there are only finitely many choices for each term up to $\simeq$, it follows that there are only finitely many such sums up to $\simeq$.
Note that the number of terms does not matter, since repeated terms in a sum do not make a difference with respect to $\simeq$ by virtue of the idempotence rule.
\begin{itemize}
    \item If $e = e_1 + e_2$ and $f \in R_\Sigma(e)$, then by Lemma~\ref{lemma:trace-decompose-plus} there exist $f_1 \in R_\Sigma(e_1)$ and $f_2 \in R_\Sigma(e_2)$ such that $f \simeq f_1 + f_2$.
    Since $R_\Sigma(e_1)$ and $R_\Sigma(e_2)$ are finite up to $\simeq$ by induction, it follows that $R_\Sigma(e)$ is finite up to $\simeq$.
    \item If $e = e_1 \cdot e_2$ and $f \in R_\Sigma(e)$, then by Lemma~\ref{lemma:trace-deconstruct-sequential} there exists an $f' \in R_\Sigma(e_1) \cup \{ e_1 \}$ and a finite set $I$ and an $I$-indexed set of terms ${(f_i)}_{i \in I} \in R_\Sigma(e_2)$ such that $f \simeq f' \cdot e_2 + \sum_{i \in I} f_i$.
    Since $R_\Sigma(e_1) \cup \{ e_1 \}$ and $R_\Sigma(e_2)$ are finite up to $\simeq$ by induction, it follows that $R_\Sigma(e)$ is finite up to $\simeq$.
    \item If $e = e_1 \parallel e_2$ and $f \in R_\Sigma(e)$, then by Lemma~\ref{lemma:trace-deconstruct-parallel} there exist $f_1 \in R_\Sigma(e_1) \cup \{ 0 \}$, $f_2 \in R_\Sigma(e_2) \cup \{ 0 \}$ and $f_3 \in \{ 0, 1 \}$ such that $f \simeq f_1 + f_2 + f_3$.
    Since $R_\Sigma(e_1)$ and $R_\Sigma(e_2)$ are finite up to $\simeq$ by induction, it follows that $R_\Sigma(e)$ is finite up to $\simeq$.
    \item If $e = e_1^*$ and $f \in R_\Sigma(e)$, then by Lemma~\ref{lemma:trace-deconstruct-star} there exists a finite set $I$ and an $I$-indexed family ${(f_i)}_{i \in I}$ over $R_\Sigma(e_1)$ such that $f \simeq \sum_{i \in I} f_i \cdot e^*$.
    Since $R_\Sigma(e_1)$ is finite up to $\simeq$ by induction, it follows that $R_\Sigma(e)$ is finite up to $\simeq$.
\end{itemize}

Since $\rho_\simeq(e) \cup \{ [0] \} = R_\simeq([e]) \cup \{ [e], [0] \}$ by Lemma~\ref{lemma:reach-vs-reachable} and Corollary~\ref{corollary:quotiented-syntactic-pa-single-empty-state}, it suffices to show that $R_\simeq([e])$ is finite.
This is an immediate consequence of Lemma~\ref{lemma:syntactic-trace-vs-quotiented-syntactic-trace} and the above.

We now treat the claim about $\qssupport([e])$.
First, we claim that there are finitely many $\phi \in \binom{\terms}{2}$ up to $\simeq$ such that $\psderiv(e, \phi) \not\simeq 0$.
 This is shown by induction on $e$.
In the base, $e = 0$, $e = 1$ or $e = a$; in all cases, we find that $\psderiv(e, \phi) = 0$ for $\phi \in \binom{\terms}{2}$.
For the inductive step, there are four cases; in each case, we apply Lemma~\ref{lemma:syntactic-pa-language-homomorphism}.
\begin{itemize}
    \item Suppose $e = e_1 + e_2$.
    By induction, there are finitely many $\phi$ up to $\simeq$ such that $\psderiv(e_1, \phi) \not\simeq 0$ or $\psderiv(e_2, \phi) \not\simeq 0$.
    Thus, there are finitely many $\phi \in \binom{\terms}{2}$ up to $\simeq$ such that $\psderiv(e, \phi) = \psderiv(e_1, \phi) + \psderiv(e_2, \phi) \not\simeq 0$.
    \item Suppose $e = e_1 \cdot e_2$.
    By induction, there are finitely many $\phi$ up to $\simeq$ such that $\psderiv(e_1, \phi) \not\simeq 0$ or $\psderiv(e_2, \phi) \not\simeq 0$.
    Thus, there are finitely many $\phi \in \binom{\terms}{2}$ up to $\simeq$ such that $\psderiv(e, \phi) = \psderiv(e_1, \phi) \cdot e_2 + e_1 \star \psderiv(e_2, \phi) \not\simeq 0$.
    \item Suppose $e = e_1 \parallel e_2$.
    By induction, there are finitely many $\phi$ up to $\simeq$ such that $\psderiv(e_1, \phi) \not\simeq 0$ or $\psderiv(e_2, \phi) \not\simeq 0$.
    Furthermore, there is exactly one $\phi$ up to $\simeq$ such that $\phi \simeq \mset{e_1, e_2}$.
    Thus, there are finitely many $\phi \in \binom{\terms}{2}$ up to $\simeq$ such that $\psderiv(e, \phi) = [\phi \simeq \mset{e_1, e_2}] + e_2 \star \psderiv(e_1, \phi) + e_1 \star \psderiv(e_2, \phi) \not\simeq 0$.
    \item Suppose $e = e_1^*$.
    By induction, there are finitely many $\phi$ up to $\simeq$ such that $\psderiv(e_1, \phi) \not\simeq 0$.
    Thus, there are finitely many $\phi \in \binom{\terms}{2}$ up to $\simeq$ such that $\psderiv(e, \phi) = \psderiv(e_1, \phi) \cdot e_1^* \not\simeq 0$.
\end{itemize}
The proof that that $\qssupport([e])$ is finite, i.e., that there are finitely many $\phi \in \binom{\qterms}{2}$ such that $\qpsderiv([e], \phi) \neq [0]$ is now an immediate consequence of the above.
\end{proof}

\begin{theorem}%
\label{theorem:quotiented-syntactic-pa-bounded}
The quotiented syntactic PA is bounded.
\end{theorem}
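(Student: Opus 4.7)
The plan is to verify each of the three conditions in Definition~\ref{definition:bounded} in turn, essentially by collecting facts that have already been proved earlier in the section. Recall that a PA is bounded when it is fork-acyclic, its fork hierarchy is well-founded, and the reach and support of every state are finite.

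First, I would invoke Lemma~\ref{lemma:quotiented-syntactic-pa-fork-acyclic}, which already certifies that the quotiented syntactic PA is fork-acyclic with fork hierarchy given by the relation $\prec$ where $[e] \prec [f]$ iff $\depth{e} < \depth{f}$. Next, I would establish well-foundedness of $\prec$. Since $\depth{\cdot}$ takes values in $\naturals$ (and is well-defined on congruence classes by Lemma~\ref{lemma:depth-vs-congruence}), any infinite descending $\prec$-chain ${([e_n])}_{n \in \naturals}$ in $\qterms$ would induce an infinite strictly descending sequence ${(\depth{e_n})}_{n \in \naturals}$ in $\naturals$; this contradicts the well-foundedness of $<$ on $\naturals$. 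Finally, the finiteness of $\rho_\simeq([e])$ and $\qssupport([e])$ for every $[e] \in \qterms$ is delivered directly by Lemma~\ref{lemma:quotiented-syntactic-pa-finite-reach-support}.

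In short, the proof is essentially bookkeeping: all three conditions are already established or follow by a one-line appeal to the well-ordering of $\naturals$. There is no real obstacle at this point, because the genuine work—defining the parallel depth, checking that the three closure rules of a fork hierarchy are respected, and bounding the reach and support via the trace-deconstruction lemmas of Section~\ref{subsection:trace-deconstruction}—has already been carried out. Consequently I expect the proof to be only a few lines long, consisting of pointers to Lemmas~\ref{lemma:quotiented-syntactic-pa-fork-acyclic} and~\ref{lemma:quotiented-syntactic-pa-finite-reach-support} together with the remark about well-foundedness of $\prec$.
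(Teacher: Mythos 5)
Your proposal matches the paper's proof exactly: it cites Lemma~\ref{lemma:quotiented-syntactic-pa-fork-acyclic} for fork-acyclicity, observes that $\prec$ is well-founded because $\depth{\cdot}$ takes values in $\naturals$, and cites Lemma~\ref{lemma:quotiented-syntactic-pa-finite-reach-support} for finiteness of reach and support. Your spelled-out well-foundedness argument is just a slightly more explicit version of the paper's one-line remark, so there is nothing to add.
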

\begin{proof}
By Lemma~\ref{lemma:quotiented-syntactic-pa-fork-acyclic}, we know that $A_\simeq$ is fork-acyclic.
Furthermore, $\prec$ is well-founded by construction: every term has finite depth.
By Lemma~\ref{lemma:quotiented-syntactic-pa-finite-reach-support}, we know that $\rho_\simeq([e])$ and $\qssupport([e])$ are finite for all $[e] \in \qterms$.
All requirements are now validated.
\end{proof}
\else%
Furthermore, the quotiented syntactic PA is sufficiently restricted to show the following:
\begin{theorem}%
\label{theorem:quotiented-syntactic-pa-bounded-fork-acyclic}
The quotiented syntactic PA is fork-acyclic and bounded.
\end{theorem}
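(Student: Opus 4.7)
The plan is to take the fork hierarchy $\prec$ on $\qterms$ to be given by the strict ordering of \emph{parallel depth}: define $\depth{\cdot}$ inductively with $\depth{0} = \depth{1} = 0$, $\depth{a} = 1$, $\depth{e \parallel f} = \max(\depth{e}, \depth{f}) + 1$, and taking the max (without incrementing) for $+$, $\cdot$, and ${}^*$. Put $[e] \prec [f]$ iff $\depth{e} < \depth{f}$. A short induction on the rules of $\simeq$ shows depth is constant on congruence classes (for the annihilation laws, observe $e \simeq 0$ forces every subterm to either vanish or match the convention $\depth{0}=0$), so $\prec$ is well-defined; and $\prec$ is well-founded since depths are natural numbers.

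To verify the three conditions of Definition~\ref{definition:fork-acyclic}, I would first prove two monotonicity lemmas by induction on $e$: namely, $\depth{\ssderiv(e,a)} \leq \depth{e}$ and $\depth{\psderiv(e,\phi)} \leq \depth{e}$ for every $a \in \Sigma$ and $\phi \in \binom{\terms}{2}$. These yield conditions (2) and (3) directly: any $[f] \prec \qssderiv([e],a)$ or $[f] \prec \qpsderiv([e],\phi)$ inherits $\depth{f} < \depth{e}$. For condition (1), I would show by induction on $e$ that if $g,h,e \not\simeq 0$ and $\psderiv(e,\mset{g,h}) \not\simeq 0$, then $\depth{g}, \depth{h} < \depth{e}$. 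The only nontrivial inductive case is $e = e_1 \parallel e_2$: either $\mset{g,h} \simeq \mset{e_1,e_2}$ (and so $\depth{g}, \depth{h} \leq \max(\depth{e_1},\depth{e_2}) < \depth{e}$), or one of the recursive summands $\psderiv(e_i,\mset{g,h})$ is non-$\simeq 0$ and induction applies.

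For boundedness, I must also prove that $\rho_\simeq([e])$ and $\qssupport([e])$ are finite for every $e$. By Lemma~\ref{lemma:syntactic-trace-vs-quotiented-syntactic-trace} (relating the two trace relations) and Corollary~\ref{corollary:quotiented-syntactic-pa-single-empty-state} (stating that $[0]$ is the sole empty-language state), it suffices to show that the set of reachable expressions from $e$ in the syntactic PA is finite modulo $\simeq$. The proof is by induction on $e$, using the trace-deconstruction lemmas~\ref{lemma:trace-deconstruct-base}--\ref{lemma:trace-deconstruct-star}: each lemma expresses a reachable state as a sum of terms drawn from finitely many source reach-sets (the inductive hypothesis supplies finiteness modulo $\simeq$ for those source sets). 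Because $+$ is idempotent, commutative, and associative modulo $\simeq$, sums of terms drawn from a finite set (up to $\simeq$) yield only finitely many congruence classes, regardless of multiplicity or bracketing. An analogous structural induction on $e$ bounds the number of $\phi \in \binom{\terms}{2}$ with $\psderiv(e,\phi) \not\simeq 0$ modulo $\simeq$: in the parallel case, the only new support beyond what the inductive hypothesis provides is $\mset{e_1,e_2}$ itself.

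The main obstacle will be the finiteness of the reach in the star case. Here Lemma~\ref{lemma:trace-deconstruct-star} writes every reachable term as $\sum_{i \in I} f_i \cdot e_1^*$ with each $f_i \in R_\Sigma(e_1)$; one has to argue carefully that although $I$ is unbounded, the resulting sums collapse to finitely many classes modulo $\simeq$ because idempotence removes all duplication among the finitely many choices of $[f_i]$. Once this bookkeeping is settled, bounded\-ness follows immediately from the combination of fork-acyclicity, well-foundedness of $\prec$, and the two finiteness results.
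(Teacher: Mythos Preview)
Your approach matches the paper's almost exactly: the fork hierarchy is defined via parallel depth, the three rules of Definition~\ref{definition:fork-acyclic} are verified by the two monotonicity lemmas and the support-strictly-decreases lemma you describe, and finiteness of reach and support follows from the trace-deconstruction lemmas together with the idempotence argument you sketch.

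There is, however, one genuine slip. Your purely syntactic definition of $\depth{\cdot}$ is \emph{not} invariant under $\simeq$: for instance $a \cdot 0 \simeq 0$, yet with your definition $\depth{a \cdot 0} = \max(1,0) = 1 \neq 0 = \depth{0}$; likewise $\depth{a \parallel 0} = 2$. Your parenthetical about the annihilation laws does not rescue this --- the subterm $a$ does not ``vanish'' in any sense that would force its contribution to depth to be zero. Consequently $[e] \prec [f]$ would not be well-defined on congruence classes. The paper repairs this by stipulating upfront that $\depth{e} = 0$ whenever $e \simeq 0$, and applying the inductive clauses only otherwise. With that amendment the annihilation cases of the $\simeq$-invariance lemma become trivial (both sides are $\simeq 0$, hence both have depth $0$), and the remainder of your argument is sound and coincides with the paper's.
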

\fi%

The desired result then follows from the above, Lemma~\ref{lemma:subpa-preserves-language} and Theorem~\ref{theorem:bounded-finite-subpa}.
\begin{corollary}
Let $e \in \terms$.
There exists a finite PA $A_e$ that accepts $\sem{e}$.
\end{corollary}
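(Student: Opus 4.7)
The plan is to assemble the corollary from the machinery developed in this section, using the quotiented syntactic PA as the infinite host automaton from which a finite fragment can be carved out. First, I would consider the state $[e]$ in the quotiented syntactic PA $A_\simeq$. By Theorem~\ref{theorem:quotiented-syntactic-pa-bounded-fork-acyclic}, $A_\simeq$ is bounded (it is fork-acyclic, its fork hierarchy $\prec$ is well-founded, and every state has finite reach and finite support). Theorem~\ref{theorem:bounded-finite-subpa} then applies to yield a finite closed subset $\project{Q}{[e]} \subseteq \qterms$ containing $[e]$.

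Next, I would take the generated sub-PA $A_e = \restr{A_\simeq}{\project{Q}{[e]}}$, which is finite by construction. Since $\project{Q}{[e]}$ is closed and contains $[e]$, Lemma~\ref{lemma:subpa-preserves-language} guarantees that $L_{A_e}([e]) = \qslang([e])$. Combining Theorem~\ref{theorem:quotiented-syntactic-pa-vs-syntactic-pa} with Theorem~\ref{theorem:syntactic-pa-soundness}, we obtain
\[
L_{A_e}([e]) \;=\; \qslang([e]) \;=\; \slang(e) \;=\; \sem{e},
\]
so $A_e$ accepts $\sem{e}$, as required.

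There is no real obstacle here beyond checking that all of the cited results fit together without hidden hypotheses: one must confirm that $[e]$ is indeed a state of $A_\simeq$ (which is immediate, since $\qterms$ contains all congruence classes), that $A_e$ inherits a sink state (namely $[0]$, which lies in $\project{Q}{[e]}$ since every closed set contains $\bot$), and that the notion of ``accepting the language $\sem{e}$'' used in Definition~\ref{definition:automaton-language} is satisfied by producing a specific state, here $[e]$, whose language in $A_e$ equals $\sem{e}$. Each of these is routine given the results above, so the proof is essentially a one-paragraph assembly of the four ingredients.
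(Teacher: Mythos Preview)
Your proposal is correct and follows essentially the same approach as the paper: use boundedness of $A_\simeq$ to extract a finite closed subset containing $[e]$ via Theorem~\ref{theorem:bounded-finite-subpa}, then combine Lemma~\ref{lemma:subpa-preserves-language} with Theorems~\ref{theorem:quotiented-syntactic-pa-vs-syntactic-pa} and~\ref{theorem:syntactic-pa-soundness} to conclude that the resulting sub-PA accepts $\sem{e}$. The additional sanity checks you mention (that $[0]$ serves as the sink in the sub-PA, etc.) are reasonable but the paper simply leaves them implicit.
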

\begin{arxivproof}
Consider the quotiented syntactic PA $A_\simeq$.
By Theorem~\ref{theorem:quotiented-syntactic-pa-vs-syntactic-pa} and Theorem~\ref{theorem:syntactic-pa-soundness}, we know that $\qslang([e]) = \slang(e) = \sem{e}$.
By Theorem~\ref{theorem:quotiented-syntactic-pa-bounded}, $A_\simeq$ is bounded, and therefore (by Theorem~\ref{theorem:bounded-finite-subpa}) we can construct a finite and closed set $Q_e$ such that $[e] \in Q_e$.
We now choose $A_e = \restr{A_\simeq}{Q_e}$ and $q = [e]$ to find that $L_{A_e}(q) = \qslang([e]) = \sem{e}$ by Lemma~\ref{lemma:subpa-preserves-language}.
\end{arxivproof}

\section{Automata to expressions}%
\label{section:automata-to-expressions}

To associate with every state $q$ in a bounded PA $A = \angl{Q, \delta, \gamma, F}$ a series-rational expression $e_q$ such that $\sem{e_q}=L_A(q)$, we modify the procedure for associating a rational expression with a state in a finite automaton described in~\cite{kozen-1997-a}.
The modification consists of adding parallel terms to the expression associated with $q$ whenever a fork in $q$ contributes to its language, i.e., whenever $\mset{r,s}\in\pi_A(q)$.

In view of the special treatment of $1$ in the semantics of PAs, it is convenient to first define expressions $e_q^+$ with the property that $\sem{e_q^+}=L_A(q)-\{1\}$; then we can define $e_q$ by $e_q=e_q^++[q\in F]$.
The definition of $e_q^+$ proceeds by induction on the well-founded partial order $\prec_A$ associated with a bounded PA\@.
That is, when defining $e_q^+$ we assume the existence of expressions $e_{q'}^+$ for all $q'\in Q$ such that $q'\prec_A q$.

First, however, we shall define auxiliary expressions $e^{Q'}_{qq'}$ for suitable choices of $Q'\subseteq Q$ and of $q,q'\in Q$.
Intuitively, $e^{Q'}_{qq'}$ denotes the pomset language characterizing all paths from $q$ to $q'$ with all intermediate states in $Q'$; $e_q^+$ can then be defined as the summation of all $e^{\rho_A(q)}_{qq'}$ with $q'\in F \cap \rho_A(q)$.

\begin{definition}\label{definition:autexpraux}
Let $Q'$ be a finite subset of $Q$, and assume that for all $r \in Q$ such that $r \prec_A q$ for some $q \in Q'$ there exists a series-rational expression $e_r^+\in\terms$ such that $\sem{e_{r}^+}=L_A(q)-\{1\}$.
For all $Q''\subseteq Q'$ and $q,q'\in Q'$, we define a series-rational expression $e^{Q''}_{qq'}$ by induction on the size of $Q''$, as follows:
\begin{enumerate}
    \item\label{basis} If $Q'' = \emptyset$, then let $\widetilde{\Sigma} = \{ a \in \Sigma : q' = \delta(q, a) \}$, and let $\widetilde{Q} = \{ \phi \in \pi_A(q) : \gamma(q, \phi) = q' \}$.
    We define
    \[e^{Q''}_{qq'} = \sum_{a \in \widetilde{\Sigma}} a + \sum_{\mset{r, s} \in \widetilde{Q}} e_r^+\parallel e_s^+ \enskip.\]
    \item\label{induction} Otherwise, we choose a $q''\in Q''$ and define
    \[e^{Q''}_{qq'} =  e^{Q''-\{q''\}}_{qq'} +e^{Q''-\{q''\}}_{qq''} \cdot {(e^{Q''-\{q''\}}_{q''q''})}^{*} \cdot e^{Q''-\{q''\}}_{q''q'} \enskip.\]
\end{enumerate}
\end{definition}
Note that $e_r^+$ and $e_s^+$, appearing in the first clause of the definition of $e^{Q''}_{qq'}$, exist by assumption, for by fork-acyclicity we have that $r,s\prec_A q\in Q'$.

\begin{theorem}
Let $Q'$ be a finite subset of $Q$ and assume that for all $r \in Q$ such that $r \prec_A q$ for some $q \in Q'$ there exists a series-rational expression $e_r^+\in\terms$ with $\sem{e_r^+}=L_A(q)-\{1\}$.
For all $q, q'\in Q'$, for all $Q''\subseteq Q'$, and for all $U\in\pom^+$, we have that $q\atrace{U}_A q'$ according to some path that only visits states in $Q''$ if, and only if, $U\in\sem{e^{Q''}_{qq'}}$.
\end{theorem}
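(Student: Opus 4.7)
\smallskip
\noindent\textbf{Proof proposal.}
The plan is to proceed by induction on $|Q''|$, proving both directions of the iff simultaneously. To make this work cleanly, I would first fix a precise notion of the \emph{path} underlying a derivation of $q\atrace{U}_A q'$: the path is the sequence of top-level states traversed when the trace is built up via the three inference rules of Definition~\ref{definition:automaton-language}, where a parallel step contributes only its source and target (the subprocess states $r,s$ and their internal evolutions are absorbed into the single ``edge''). With this convention, ``intermediate states'' simply means the top-level states strictly between $q$ and $q'$, which matches the classical regular-expression setting while correctly delegating the internal behaviour of forks to the expressions $e_r^+,e_s^+$ obtained by $\prec_A$-induction.

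In the base case $Q''=\emptyset$, a trace with no intermediate states cannot have used the sequential composition rule (that rule always introduces some intermediate $q''$, even if equal to $q$ or $q'$, and such a $q''$ would be a forbidden intermediate state). By Lemma~\ref{lemma:pomset-unique-factorization} on $U\in\pom^+$, the derivation must thus come from either the primitive rule, giving $U=a$ with $q'=\delta(q,a)$ and hence $a\in\widetilde\Sigma$, or the parallel rule, giving $U=V\parallel W$ with $r\atrace{V}_A r'\in F$ and $s\atrace{W}_A s'\in F$ and $q'=\gamma(q,\mset{r,s})$, hence $\mset{r,s}\in\widetilde Q$. In the latter case, fork-acyclicity gives $r,s\prec_A q$, so by hypothesis $\sem{e_r^+}=L_A(r)-\{1\}$ and similarly for $s$; since $V,W\in\pom^+$, we get $V\in\sem{e_r^+}$ and $W\in\sem{e_s^+}$, so $U\in\sem{e_r^+\parallel e_s^+}\subseteq\sem{e^{\emptyset}_{qq'}}$. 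The converse is obtained by matching each summand against the corresponding base inference rule.

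For the inductive step with $Q''\neq\emptyset$ and chosen $q''\in Q''$, I would argue as in the classical state-elimination proof~\cite{kozen-1997-a}. Forward: given a trace $q\atrace{U}_A q'$ whose path lies in $Q''$, either the path never visits $q''$, so the induction hypothesis gives $U\in\sem{e^{Q''-\{q''\}}_{qq'}}$; or it does, in which case I split the path at its first and last occurrences of $q''$, obtaining $U=U_1\cdot U_2\cdot U_3$ where $U_2$ decomposes further into ``loops'' from $q''$ to $q''$ whose intermediate states avoid $q''$. The induction hypothesis applied segment by segment, together with the semantic clauses of Lemma~\ref{lemma:syntactic-pa-language-homomorphism}, places $U$ in the semantics of the second summand in clause~\ref{induction} of Definition~\ref{definition:autexpraux}. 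Backward: any pomset in that semantics decomposes as above, and the induction hypothesis yields traces along paths in $Q''-\{q''\}$, which glue by the sequential rule into a trace along a path in $Q''$.

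The main obstacle I expect is the bookkeeping around the notion of path, specifically ensuring that the parallel rule is treated as an atomic edge with respect to the top-level path while the subprocess witnesses are correctly captured by $e_r^+$ and $e_s^+$. Once this is pinned down, the induction is essentially the adaptation of McNaughton--Yamada that the definition of $e^{Q''}_{qq'}$ already mirrors, and the rest of the argument is a routine translation between trace decompositions and the denotational semantics of $+$, $\cdot$ and ${(-)}^*$.
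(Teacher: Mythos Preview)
Your proposal is correct and follows essentially the same approach as the paper: induction on $|Q''|$, with parallel transitions treated as atomic edges of the top-level path (so that the subprocess behaviour is delegated to the $e_r^+, e_s^+$ given by the $\prec_A$-hypothesis), and the classical McNaughton--Yamada state-elimination argument for the inductive step. The paper's write-up differs only cosmetically: rather than reasoning about the derivation tree, it directly asserts the atomic decomposition $q=q_0\atrace{U_1}q_1\cdots\atrace{U_n}q_n=q'$ with each $U_i$ primitive or a parallel composition, and splits at \emph{all} occurrences of $q''$ among $q_1,\dots,q_{n-1}$ rather than at the first and last --- but this is the same argument. Your invocation of Lemma~\ref{lemma:pomset-unique-factorization} in the base case is a bit superfluous (once the sequential rule is excluded by the absence of intermediate states, the remaining two rules already force the shape of $U$), but it does no harm.
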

\begin{arxivproof}
We prove the implication from left to right with induction on the size of $Q''$.

Suppose that $q \atrace{U}_A q'$ according to some path through $A$ that only visits states in $Q''$.
Then there exist $n \geq 1$, $q_0,q_{n} \in Q$, $q_1, \dots, q_{n-1} \in Q''$ and $U_1, \dots, U_{n} \in \pom^+$ such that $U = U_1 \cdots U_{n}$ and
\[q = q_0 \atrace{U_1} q_1 \atrace{U_2} \cdots \atrace{U_{n-1}} q_{n-1} \atrace{U_n} q_{n} = q'\]
Furthermore, for all $1 \leq i \leq n$, either $U_i = a$ for some $a \in \Sigma$ and $q_i = \delta(q_{i-1}, a)$, or $U_i = V_i \parallel W_i$ for some $V_i, W_i \in \pom^+$ and there are $r_i, s_i \in Q$ and $r_i', s_i' \in F$ such that $r_i \atrace{V_i} r_i'$, $s_i \atrace{W_i} s_i'$ and $q_i = \gamma(q_{i-1}, \mset{r_i,s_i})$.
We distinguish cases according to whether $n = 1$ or $n > 1$.
\begin{itemize}
    \item If $n = 1$, then we distinguish cases according to whether $U_1 = a$ or $U_1 = V_1 \parallel W_1$.
    In the first case, we have that $q' = \delta(q,a)$, so $U = U_1 = a \in \sem{e^{\emptyset}_{qq'}} \subseteq \sem{e^{Q''}_{qq'}}$.
    In the second case, we have that there exist $r,s'\in Q$ and $r', s' \in F$ such that $q' = \gamma(q, \mset{r,s})$, $r \atrace{V_1} r'$ and $s \atrace{W_1} s'$.
    Hence $V_1\in L_A(r) - \{ 1 \}$ and $W_1 \in L_A(s) - \{ 1 \}$.
    Furthermore, by fork-acyclicity we have that $r, s \prec_A q$, so there exist series-rational expressions $e_r^+, e_s^+ \in \terms$ such that $\sem{e_r^+} = L_A(r) - \{ 1 \}$ and $\sem{e_s^+} = L_A(s) - \{ 1 \}$.
    It follows that $V_1 \in \sem{e_r^+}$ and $W_1 \in \sem{e_s^+}$, and hence $V_1 \parallel W_1 \in \sem{e_r^+ \parallel e_s^+}$.
    We conclude that $U = U_1 = V_1 \parallel W_1 \in \sem{e^{\emptyset}_{qq'}} \subseteq \sem{e^{Q''}_{qq'}}$.

    \item If $n > 1$.
    Then, clearly, $Q''$ is non-empty; let $q''\in Q''$.
    Furthermore, let $I = \{ 0\leq i \leq n: q_i=q''\}$.
    If $I = \emptyset$, then by the induction hypothesis we have that $U\in\sem{e^{Q''-\{q''\}}_{qq'}}\subseteq\sem{e^{Q''}_{qq'}}$.

    Otherwise, suppose that $I \neq \emptyset$, say $I = \{i_1, \dots, i_k\}$ with $i_1 < \cdots < i_k$.
    Then, by the induction hypothesis, $U_{i_j+1}\cdots U_{i_{j+1}}\in\sem{e^{Q''-\{q''\}}_{q''q''}}$ for all $1\leq j < k$.
    Moreover, also by the induction hypothesis, $U_0\cdots U_{i_1}\in\sem{e^{Q''-\{q''\}}_{qq''}}$ and $U_{i_k+1}\cdots U_n\in\sem{e^{Q''-\{q''\}}_{q''q'}}$.
    Hence,
    \[U \in \sem{e^{Q''-\{q''\}}_{qq''} \cdot {(e^{Q''-\{q''\}}_{q''q''})}^{*} \cdot e^{Q''-\{q''\}}_{q''q'}} \subseteq\sem{e^{Q''}_{qq'}}\]
\end{itemize}

We prove the implication from right to left also with induction on the size of $Q''$.
Suppose that $U\in\sem{e^{Q''}_{qq'}}$.
If $Q''=\emptyset$, then, considering the structure of $e^{Q''}_{qq'}$, there are two cases:
\begin{itemize}
    \item If $U=a$ for some $a\in\Sigma$ such that $q'=\delta(q,a)$, then, clearly, $q\atrace{U}q'$.
    \item If there exist $r,s\in Q$ and $V,W\in\pom^+$ such that $\mset{r,s}\in\pi_A(q)$, $q'=\gamma(q,\mset{r,s})$, $U=V\parallel W$, $V\in\sem{e_r^+}$, and $W\in\sem{e_s^+}$, then (by the premise) there exist $r',s'\in F$ such that $r\atrace{V}r'$ and $s\atrace{W}s'$.
    Hence, $q \atrace{U} q'$.
\end{itemize}
Otherwise, suppose that $Q''$ is non-empty, and let $q''\in Q''$.
Then, again considering the structure of $e^{Q''}_{qq'}$, there are two cases:
\begin{itemize}
    \item If $U\in\sem{e^{Q''-\{q''\}}_{qq'}}$, then by the induction hypothesis $q\atrace{U}q'$.
    \item If $U\in\sem{e^{Q''-\{q''\}}_{qq''}\cdot {(e^{Q''-\{q''\}}_{q''q''})}^{*} \cdot e^{Q''-\{q''\}}_{q''q'}}$, then there exist $U'$, $U_1,\dots U_n$ ($n\geq 0$) and $U''$ such that $U=U'\cdot U_1\cdots U_n \cdot U''$, $U'\in\sem{e^{Q''-\{q''\}}_{qq''}}$, $U_i\in\sem{e^{Q''-\{q''\}}_{q''q''}}$ for all $1\leq i \leq n$, and $U''\in\sem{e^{Q''-\{q''\}}_{q''q'}}$.
 By the induction hypothesis, it follows that $q\atrace{U'}q''\atrace{U_1}\cdots\atrace{U_n}q''\atrace{U''}q'$, according to a series of paths that only visit states in $Q''-\{q''\}$, with the intermediate states of these paths all $q''$.
We thus conclude that $q\atrace{U}q'$ according to some path that only visits states in $Q''$. \qedhere
\end{itemize}
\end{arxivproof}

Using the auxiliary expressions $e^{Q''}_{qq'}$, we can now associate series-rational expressions $e_q, e_q^+\in\terms$ with every $q\in Q$, defining $e_q^+$ by $e_q^+=\sum_{q'\in\rho_A(q)\cap F}e^{\rho_A(q)}_{qq'}$ and $e_q=e_q^++[q\in F]$.
Note that $q\in\rho_A(q)$ and, by Lemma~\ref{lemma:reach-vs-hierarchy}, for all $q'\in Q$ such that $q'\prec_A q''$ for some $q''\in\rho_A(q)$ we have $q'\prec_A q$, and hence there exists, by induction, a series-rational expression $e_{q'}\in\terms$ such that $\sem{e_{q'}}=L_A(q')$.
So the expressions $e^{\rho_A(q)}_{qq'}$ are, indeed, defined in Definition~\ref{definition:autexpraux}.

\begin{corollary}
For every state $q\in Q$ we have $\sem{e_q^+}=L_A(q)-\{1\}$ and $\sem{e_q}=L_A(q)$.
\end{corollary}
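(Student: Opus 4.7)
The plan is to prove the corollary by well-founded $\prec_A$-induction on $q \in Q$. The inductive hypothesis supplies expressions $e_r^+$ with $\sem{e_r^+} = L_A(r) - \{1\}$ for every $r$ strictly below $q$ in the hierarchy, and hence (via Lemma~\ref{lemma:reach-vs-hierarchy}) for every $r$ that can arise in Definition~\ref{definition:autexpraux} with $Q' = \rho_A(q)$: any such $r$ satisfies $r \prec_A q'$ for some $q' \in \rho_A(q)$, whence $r \prec_A q$. This makes every expression $e^{Q''}_{q_1 q_2}$ with $Q'' \subseteq \rho_A(q)$ and $q_1, q_2 \in \rho_A(q)$ well-defined.

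With the definitional prerequisites in place, I will apply the preceding theorem with $Q'' = \rho_A(q)$ to obtain, for each $q' \in \rho_A(q)$,
\[
\sem{e^{\rho_A(q)}_{q q'}} = \bigl\{ U \in \pom^+ : q \atrace{U}_A q' \text{ via a path visiting only states in } \rho_A(q) \bigr\}.
\]
Summing over $q' \in \rho_A(q) \cap F$ then identifies $\sem{e_q^+}$ with the set of non-empty pomsets $U$ such that $q \atrace{U}_A q'$ for some $q' \in F \cap \rho_A(q)$ along a path confined to $\rho_A(q)$. To match this with $L_A(q) - \{1\}$, I need two facts: (i) every trace $q \atrace{U}_A q'$ with $q' \in F$ has $q' \in \rho_A(q) \cap F$, which follows from Lemma~\ref{lemma:reach-vs-reachable} together with $\bot \notin F$; and (ii) every such trace admits a decomposition into single $\delta$- and $\gamma$-steps all of whose intermediate states lie in $\rho_A(q)$.

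Claim (ii) is the main technical step. I will establish it by induction on the derivation of $q \atrace{U}_A q'$, linearizing the tree-shaped derivation into a sequence of atomic transitions. By induction on position along this sequence, every intermediate state is of the form $\delta(q_{i-1}, a)$ or $\gamma(q_{i-1}, \mset{r, s})$ for some $q_{i-1} \in \rho_A(q)$, hence lies in $\rho_A(q)$ --- provided no intermediate state equals $\bot$. This proviso is automatic: since $\bot$ is absorbing under both $\delta$ and $\gamma$, any trace passing through $\bot$ must end at $\bot$, contradicting $q' \in F$.

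Once these pieces are in place, $\sem{e_q^+} = L_A(q) - \{1\}$ drops out, and $\sem{e_q} = L_A(q)$ is immediate from the definition $e_q = e_q^+ + [q \in F]$, since $[q \in F]$ contributes $1$ to the language exactly when $q$ is accepting. I expect the main obstacle to be making the linearization step of (ii) fully rigorous, particularly for traces built via the parallel rule of $\tracerel_A$: such a step contributes a single $\gamma$-transition to the path, but one must carefully disentangle the forked subtraces (whose states need not appear in the path through $\rho_A(q)$) from the path being constructed.
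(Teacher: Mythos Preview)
Your proposal is correct and fills in exactly the argument the paper leaves implicit: the corollary is stated without proof, the $\prec_A$-induction and the well-definedness of $e^{\rho_A(q)}_{qq'}$ having been set up in the paragraph immediately preceding it. One small addition to your step~(ii): for a $\gamma$-step $q_i = \gamma(q_{i-1}, \mset{r,s})$ to lie in $\rho_A(q)$ via the third rule of Definition~\ref{definition:reach} you need $\mset{r,s} \in \pi_A(q_{i-1})$, which also requires $r, s \neq \bot$ (not just $q_i \neq \bot$)---but this holds because each of $r, s$ admits a trace to an accepting state and $\bot$ is absorbing.
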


\section{Discussion}%
\label{section:discussion}

Another automaton formalism for pomsets, \emph{branching automata}, was proposed by Lodaya and Weil~\cite{lodaya-weil-1998,lodaya-weil-2000}.
Branching automata define the states where parallelism can start (\emph{fork}) or end (\emph{join}) in two relations; pomset automata condense this information in a single function.
Lodaya and Weil also provided a translation of series-parallel expressions to branching automata, based on Thompson's construction~\cite{thompson-1968}, which relies on the fact that their automata encode transitions non-deterministically, i.e., as \emph{relations}.
Our Brzozowski-style~\cite{brzozowski-1964} translation, in contrast, directly constructs transition \emph{functions} from the expressions.
Lastly, their translation of branching automata to series-parallel expressions is only sound for a \emph{semantically} restricted class of automata, whereas our restriction is \emph{syntactic}.

Jipsen and Moshier~\cite{jipsen-moshier-2016} provided an alternative formulation of the automata proposed by Lodaya and Weil, also called \emph{branching automata}.
Their method to encode parallelism in these branching automata is conceptually dual to pomset automata: branching automata distinguish based on the target states of traces to determine the join state, whereas pomset automata distinguish based on the origin states of traces.
The translations of series-parallel expressions to branching automata and vice versa suffer from the same shortcomings as those by Lodaya and Weil, i.e., transition relations rather than functions and a semantic restriction on automata for the translation of automata to expressions.

Lodaya and Weil observed~\cite{lodaya-weil-2000} that the behaviour of their automata corresponds to $1$-safe Petri nets.
Since the behavior of their branching automata can be matched with our (bounded, fork-acyclic) pomset automata, we believe that $1$-safe Petri nets also correspond to our automata.
We opted to treat semantics of series-rational expressions in terms of automata instead of Petri nets to find more opportunities to extend to a coalgebraic treatment.
While the present paper does not reach this goal, we believe that our formulation in terms of states and transition functions offers some hope of getting there.

Prisacariu introduced \emph{Synchronous Kleene Algebra} (SKA)~\cite{prisacariu-2010}, extending Kleene Algebra with a \emph{synchronous composition} operator.
SKA differs from our model in that it assumes that all basic actions are performed in unit time, and that actors responsible for individual actions never idle.
In contrast, our (weak BKA-like) model makes no synchrony assumptions: expressions can be composed in parallel, and the relative timing of basic actions within those expressions is irrelevant for the semantics.
Prisacariu axiomatized SKA and extended it to \emph{Synchronous Kleene Algebra with Tests} (SKAT); others~\cite{broda-cavadas-ferreira-moreira-2015} proposed Brzozowski-style derivatives of SKA expressions and used them to test equivalence of SKA and SKAT terms.

\section{Further work}%
\label{section:further-work}

We plan to extend our results to semantics of series-parallel expressions in terms of downward-closed pomset languages, i.e., sets of pomsets that are closed under Gischer's subsumption order~\cite{gischer-1988}.
Such an extension would correspond to adding the weak exchange law (which relates sequential and parallel compositions), and thus yields an operational model for weak CKA\@.
We conjecture that no change to the automaton model is necessary to accommodate this generalization, just like Struth and Laurence suspect that the downward-closed semantics of series-parallel expressions can be captured by their non-downward closed semantics.

Our series-rational expressions do not include the parallel analogue of the Kleene star (sometimes called ``parallel star'', or ``replication'').
Future work could look into extending derivatives to include this operator, and relaxing fork-acyclicity to allow recovering expressions that include the parallel star from an automaton that satisfies this weaker restriction.

A classic result by Kozen~\cite{kozen-1994} axiomatizes language equivalence of rational expressions using Kleene's theorem~\cite{kleene-1956} and the uniqueness of minimal finite automata; consequently, the free model for KA can also be characterized in terms of rational languages.
It would be interesting to see if the same technique can be used (based on pomset automata) to show that the axioms of weak Bi-Kleene Algebra are a complete axiomatization of pomset language equivalence of series-rational expressions, and thus characterise the free weak Bi-Kleene Algebra (or even the free weak CKA) in terms of series-rational pomset languages.
Although an such a result was recently published~\cite{laurence-struth-2014}, it does not rely on an automaton model.

Brzozowski derivatives for classic rational expressions induce a coalgebra on rational expressions that corresponds to a finite automaton.
We aim to study series-rational expressions coalgebraically.
The first step would be to find the coalgebraic analogue of pomset automata such that language acceptance is characterized by the homomorphism into the final coalgebra.
Ideally, such a view of pomset automata would give rise to a decision procedure for equivalence of series-rational expressions based on coalgebraic bisimulation-up-to~\cite{rot-bonsangue-rutten-2013}.

Rational expressions can be extended with tests to reason about imperative programs equationally~\cite{kozen-1997}.
In the same vein, one can extend series-rational expressions with tests~\cite{jipsen-2014,jipsen-moshier-2016} to reason about parallel imperative programs equationally.
We are particularly interested in employing such an extension to extend the network specification language NetKAT~\cite{anderson-foster-guha-etal-2014} with primitives for concurrency so as to model and reason about concurrency within networks.

\bibliography{bibliography}

\ifarxiv%

\appendix

\section{Proofs of auxiliary lemmas}%
\label{appendix:proofs}

\congruencesound*
\begin{proof}
The proof of the first claim proceeds by induction on the construction of $\simeq$.
In the base, there are nine cases to consider.
\begin{itemize}
    \item If $e = f$, then the claim follows immediately.
    \item If $e = f + 0$, then $\sem{e} = \sem{f} \cup \sem{0} = \sem{f} \cup \emptyset = \sem{f}$.
    \item If $e = f + f$, then $\sem{e} = \sem{f} \cup \sem{f} = \sem{f}$.
    \item If $e = g_1 + g_2$ and $f = g_2 + g_1$, then $\sem{e} = \sem{g_1} \cup \sem{g_2} = \sem{g_2} \cup \sem{g_1} = \sem{f}$.
    \item If $e = g_1 + (g_2 + g_3)$ and $f = (g_1 + g_2) + g_3$, then $\sem{e} = \sem{g_1} \cup (\sem{g_2} \cup \sem{g_3}) = (\sem{g_1} \cup \sem{g_2}) \cup \sem{g_3} = \sem{f}$.
    \item If $e = 0 \cdot e_1$ and $f = 0$, then $\sem{e} = \sem{0} \cdot \sem{e_1} = \emptyset = \sem{f}$.
    \item If $e = e_1 \cdot 0$ and $f = 0$, then $\sem{e} = \sem{e_1} \cdot \sem{0} = \emptyset = \sem{f}$.
    \item If $e = 0 \parallel e_1$ and $f = 0$, then $\sem{e} = \sem{0} \parallel \sem{e_1} = \emptyset = \sem{f}$.
    \item If $e = e_1 \parallel 0$ and $f = 0$, then $\sem{e} = \sem{e_1} \parallel \sem{0} = \emptyset = \sem{f}$.
\end{itemize}

For the inductive step, there are six cases to consider.
\begin{itemize}
    \item If $e \simeq f$ because $e \simeq g$ and $g \simeq f$, then $\sem{e} = \sem{g}$ and $\sem{g} = \sem{f}$ by induction, thus $\sem{e} = \sem{f}$.
    \item If $e \simeq f$ because $f \simeq e$, then $\sem{f} = \sem{e}$ by induction, and thus $\sem{e} = \sem{f}$.
    \item If $e \simeq f$ because $e = e_1 + e_2$ and $f = f_1 + f_2$ with $e_1 \simeq f_1$ and $e_2 \simeq f_2$, then we know that $\sem{e_1} = \sem{f_1}$ and $\sem{e_2} = \sem{f_2}$ by induction.
    But then $\sem{e} = \sem{e_1} \cup \sem{e_2} = \sem{f_1} \cup \sem{f_2} = \sem{f}$.
    \item If $e \simeq f$ because $e = e_1 \cdot e_2$ and $f = f_1 \cdot f_2$ with $e_1 \simeq f_1$ and $e_2 \simeq f_2$, then we know that $\sem{e_1} = \sem{f_1}$ and $\sem{e_2} = \sem{f_2}$ by induction.
    But then $\sem{e} = \sem{e_1} \cdot \sem{e_2} = \sem{f_1} \cdot \sem{f_2} = \sem{f}$.
    \item If $e \simeq f$ because $e = e_1 \parallel e_2$ and $f = f_1 \parallel f_2$ with $e_1 \simeq f_1$ and $e_2 \simeq f_2$, then we know that $\sem{e_1} = \sem{f_1}$ and $\sem{e_2} = \sem{f_2}$ by induction.
    But then $\sem{e} = \sem{e_1} \parallel \sem{e_2} = \sem{f_1} \parallel \sem{f_2} = \sem{f}$.
    \item If $e \simeq f$ because $e = e_1^*$ and $f = f_1^*$ with $e_1 \simeq f_1$, then we know that $\sem{e_1} = \sem{f_1}$ by induction.
    But then $\sem{e} = \sem{e_1}^* = \sem{f_1}^* = \sem{f}$.
\end{itemize}

We now prove the second claim.
For the direction from left to right, note that $\sem{e} = \sem{0} = \emptyset$ by Lemma~\ref{lemma:congruence-sound}.
The direction from right to left proceeds by induction on $e$.
In the base, $e = 0$ and thus $e \simeq 0$ by reflexivity.
For the inductive step, there are three cases to consider.
\begin{itemize}
    \item If $e = e_1 + e_2$, then $\sem{e} = \sem{e_1} \cup \sem{e_2} = \emptyset$, therefore $\sem{e_1} = \emptyset$ and $\sem{e_2} = \emptyset$.
    But then, by induction, we have that $e_1 \simeq 0$ and $e_2 \simeq 0$.
    We then know that $e = e_1 + e_2 \simeq 0 + 0 \simeq 0$.
    \item If $e = e_1 \cdot e_2$, then $\sem{e} = \sem{e_1} \cdot \sem{e_2} = \emptyset$, therefore $\sem{e_1} = \emptyset$ or $\sem{e_2} = \emptyset$.
    But then, by induction, we have that $e_1 \simeq 0$ or $e_2 \simeq 0$.
    In either case, it follows that $e = e_1 \cdot e_2 \simeq 0$.
    \item If $e = e_1 \parallel e_2$, then $\sem{e} = \sem{e_1} \parallel \sem{e_2} = \emptyset$, therefore $\sem{e_1} = \emptyset$ or $\sem{e_2} = \emptyset$.
    But then, by induction, we have that $e_1 \simeq 0$ or $e_2 \simeq 0$.
    In either case, it follows that $e = e_1 \parallel e_2 \simeq 0$. \qedhere
\end{itemize}
\end{proof}

\reachvsreachable*
\begin{arxivproof}
We write $R_A(q)=\{ q' \in Q : \exists U \in \pom^+.\ q \atrace{U}_A q' \}$.

Suppose that $q' \in R_A(q) \cup \{ q, \bot \}$.
If $q' \in \{ q, \bot \}$, then $q' \in \rho_A(q) \cup \{ \bot \}$ by definition.
Otherwise, if $q' \in R_A(q)$, then $q \atrace{U} q'$ for some $U \in \pom^+$.
We show, more generally, that if $q'' \satrace{U} q'$ for some $q'' \in \rho_A(q) \cup \{ \bot \}$, then $q' \in \rho_A(q) \cup \{ \bot \}$, by induction on $U$.
In the base, $U = a \in \Sigma$ and thus $q'' = \delta(q', a)$.
But then $q' \in \rho_A(q) \cup \{ \bot \}$ immediately.
For the inductive step, there are two cases to consider.
\begin{itemize}
    \item If $U = V \cdot W$ with $V$ and $W$ smaller than $U$, then there exists an $r \in Q$ such that $q'' \atrace{V}_A r$ and $r \atrace{W}_A q'$.
    By induction, $r \in \rho_A(q) \cup \{ \bot \}$, and again by induction $q' \in \rho_A(q) \cup \{ \bot \}$.
    \item If $U = V \parallel W$ with $V$ and $W$ smaller than $U$, then there exist $r, s \in Q$ and $r', s' \in F$ such that $r \atrace{V}_A r'$ and $s \atrace{W}_A s'$, and $q' = \gamma(q'', \mset{r, s})$.
    Note that, in this case, $r', s' \neq \bot$.
    If $q' = \bot$, then $q' \in \rho_A(q) \cup \{ \bot \}$.
    Otherwise, $\mset{r, s} \in \pi_A(q'')$, and thus $q' \in \rho_A(q) \cup \{ \bot \}$.
\end{itemize}

For the second part of the claim, suppose that $\bot$ is the only state of $A$ with an empty language.
Let $q' \in \rho_A(q) \cup \{ \bot \}$.
If $q' = \bot$, then the claim follows.
For the case where $q' \in \rho_A(q)$, we prove that $q' \in R_A(q) \cup \{ q, \bot \}$, by induction on the construction of $\rho_A(q)$.
In the base, $q' = q$, and thus we find that $q' \in R_A(q) \cup \{ q, \bot \}$ immediately.

For the inductive step, there are two cases to consider.
For both cases, first note that if $q'' \in R_A(q')$ and $q' \in R_A(q)$, then $q'' \in R_A(q)$.
\begin{itemize}
    \item Suppose $q' = \delta(q'', a)$ for some $q'' \in \rho_A(q)$ and $a \in \Sigma$.
    By induction, we know that $q'' \in R_A(q) \cup \{ q, \bot \}$.
    If $q'' = \bot$, then $q' = \bot$ and the claim follows.
    Otherwise, note that $q' \in R_A(q'')$ by $q'' \atrace{a}_A q'$.
    If $q'' = q$, then $q' \in R_A(q)$ immediately.
    In the remaining case, $q'' \in R_A(q)$, and thus $q' \in R_A(q)$ by the observation above.
    \item Suppose $q' = \gamma(q'', \phi)$ for some $q'' \in \rho_A(q)$ and $\phi \in \pi_A(q'')$.
    By induction we know that $q'' \in R_A(q) \cup \{ q, \bot \}$.
    Since $\phi \in \pi_A(q'')$, we know that $q'' \neq \bot$ --- for otherwise $q' = \bot$ and thus $\phi \not\in \pi_A(q'')$.
    Furthermore, if $\phi = \mset{r, s}$ then $r, s \neq \bot$ and thus, since $\bot$ is the only state with an empty language, there exist $V, W \in \pom^+$ as well as $r', s' \in F$ such that $r \atrace{V}_A r'$ and $s \atrace{W}_A s'$; therefore, $q'' \atrace{U \parallel V}_A q'$ and thus $q' \in R_A(q'')$.
    If $q'' = q$, then $q' \in R_A(q)$ immediately.
    Otherwise, $q'' \in R_A(q)$, and thus $q' \in R_A(q)$ by the observation above. \qedhere
\end{itemize}
\end{arxivproof}

\congruentstateshalt*
\begin{proof}
The proof proceeds by induction on the construction of $\simeq$.
In the base, there are nine cases to consider.
\begin{itemize}
    \item Suppose that $e = f$, then the claim holds immediately.
    \item Suppose that $e = f + 0$.
    If $e \in \sacc$, then either $f \in \sacc$ or $0 \in \sacc$.
    Since the latter does not hold, $f \in \sacc$ follows.
    Also, if $f \in \sacc$, then $e = f + 0 \in \sacc$ immediately.
    \item Suppose that $e = f + f$.
    If $f + f \in \sacc$, then $f \in \sacc$ immediately; if $f \in \sacc$, then also $f + f \in \sacc$.
    \item Suppose that $e = g_1 + g_2$ and $f = g_2 + g_1$.
    If $g_1 + g_2 \in \sacc$, then $g_1 \in \sacc$ or $g_2 \in \sacc$; in either case, $g_2 + g_1 \in \sacc$.
    The claim in the other direction is proved similarly.
    \item Suppose that $e = g_1 + (g_2 + g_3)$ and $f = (g_1 + g_2) + g_3$.
    If $e \in \sacc$, then $g_1 \in \sacc$ or $g_2 + g_3 \in \sacc$, and thus one of $g_1, g_2, g_3$ must be in $\sacc$.
    But then $g_1 + g_2$ or $g_3$ must be in $\sacc$, and thus $f = (g_1 + g_2) + g_3 \in \sacc$.
    The proof in the other direction is similar.
    \item Suppose that $e = 0 \cdot e_1$ and $f = 0$.
    Then $e, f \not\in \sacc$ and thus the claim holds immediately.
    \item Suppose that $e = e_1 \cdot 0$ and $f = 0$.
    Then $e, f \not\in \sacc$ and thus the claim holds immediately.
    \item Suppose that $e = e_1 \parallel 0$ and $f = 0$.
    Then $e, f \not\in \sacc$ and thus the claim holds immediately.
    \item Suppose that $e = 0 \parallel e_1$ and $f = 0$.
    Then $e, f \not\in \sacc$ and thus the claim holds immediately.
\end{itemize}

For the inductive step, there are six cases to consider.
\begin{itemize}
    \item Suppose that $e \simeq f$ because $f \simeq e$; the claim (in both directions) then follows from the induction hypothesis by symmetry of mutual implication.
    \item Suppose that $e \simeq f$ because $e \simeq g$ and $g \simeq f$; the claim (in both directions) then follows from the induction hypothesis by transitivity of implication.
    \item Suppose that $e \simeq f$ because $e = e_1 + e_2$ and $f = f_1 + f_2$, with $e_1 \simeq f_1$ and $e_2 \simeq f_2$.
    If $e_1 + e_2 \in \sacc$, either $e_1 \in \sacc$ or $e_2 \in \sacc$.
    But then (by induction) either $f_1 \in \sacc$ or $f_2 \in \sacc$, thus $f_1 + f_2 \in \sacc$.
    The claim in the other direction is proved similarly.
    \item Suppose that $e \simeq f$ because $e = e_1 \cdot e_2$ and $f = f_1 \cdot f_2$, with $e_1 \simeq f_1$ and $e_2 \simeq f_2$.
    If $e \in \sacc$, then $e_1, e_2 \in \sacc$.
    But then (by induction) $f_1, f_2 \in \sacc$, thus $f_1 \cdot f_2 \in \sacc$.
    The claim in the other direction is proved similarly.
    \item Suppose that $e \simeq f$ because $e = e_1 \parallel e_2$ and $f = f_1 \parallel f_2$ with $e_1 \simeq f_1$ and $e_2 \simeq f_2$.
    If $e \in \sacc$, then $e_1, e_2 \in \sacc$.
    But then (by induction) $f_1, f_2 \in \sacc$, thus $f_1 \parallel f_2 \in \sacc$.
    The claim in the other direction is proved similarly.
    \item Suppose that $e \simeq f$ because $e = e_1^*$ and $f = f_1^*$ with $e_1 \simeq f_1$.
    Then $e, f \in \sacc$, and so the claim holds immediately in both directions. \qedhere
\end{itemize}
\end{proof}

\congruentstatesderive*
\begin{proof}
The proof of the first claim proceeds by induction on the construction of $\simeq$.
In the base, there are nine cases to consider.
\begin{itemize}
    \item If $e = f$, then $\ssderiv(e, a) = \ssderiv(f, a) \simeq \ssderiv(f, a)$.
    \item If $e = f + 0$, then $\ssderiv(e, a) = \ssderiv(f, a) + \ssderiv(0, a) = \ssderiv(f, a) + 0 \simeq \ssderiv(f, a)$.
    \item If $e = f + f$, then $\ssderiv(e, a) = \ssderiv(f, a) + \ssderiv(f, a) \simeq \ssderiv(f, a)$.
    \item If $e = g_1 + g_2$ and $f = g_2 + g_1$, then
    \[\ssderiv(e, a) = \ssderiv(g_1, a) + \ssderiv(g_2, a) \simeq \ssderiv(g_2, a) + \ssderiv(g_1, a) = \ssderiv(f, a)\]
    \item If $e = g_1 + (g_2 + g_3)$ and $f = (g_1 + g_2) + g_3$, then
    \[\ssderiv(e, a) = \ssderiv(g_1, a) + (\ssderiv(g_2, a) + \ssderiv(g_3, a)) \simeq (\ssderiv(g_1, a) + \ssderiv(g_2, a)) + \ssderiv(g_3, a) = \ssderiv(f, a)\]
    \item If $e = 0 \cdot e_1$ and $f = 0$, then $\ssderiv(e, a) = \ssderiv(0, a) \cdot e_1 + 0 = 0 \cdot e_1 + 0 \simeq 0 = \ssderiv(0, a)$.
    \item If $e = e_1 \cdot 0$ and $f = 0$, then $\ssderiv(e, a) = \ssderiv(e_1, a) \cdot 0 + e_1 \star \ssderiv(0, a) \simeq 0 = \ssderiv(0, a)$.
    \item If $e = e_1 \parallel 0$ and $f = 0$, then $\ssderiv(e, a) = e_1 \star \ssderiv(0, a) + 0 \simeq 0 = \ssderiv(0, a)$.
    \item If $e = 0 \parallel e_1$ and $f = 0$, then $\ssderiv(e, a) = 0 + e_2 \star \ssderiv(0, a) \simeq 0 = \ssderiv(0, a)$.
\end{itemize}
For the inductive step, there are six cases to consider.
\begin{itemize}
    \item If $e \simeq f$ because $f \simeq e$, then by induction we know that $\ssderiv(f, a) \simeq \ssderiv(e, a)$, thus $\ssderiv(e, a) \simeq \ssderiv(f, a)$.
    \item If $e \simeq f$ because $e \simeq g$ and $g \simeq f$, then by induction we know that $\ssderiv(e, a) \simeq \ssderiv(g, a)$ and $\ssderiv(g, a) \simeq \ssderiv(f, a)$ thus $\ssderiv(e, a) \simeq \ssderiv(f, a)$.
    \item If $e \simeq f$ because $e = e_1 + e_2$ and $f = f_1 + f_2$ such that $e_1 \simeq f_1$ and $e_2 \simeq f_2$, then by induction we know that $\ssderiv(e_1, a) \simeq \ssderiv(f_1, a)$ and $\ssderiv(e_2, a) \simeq \ssderiv(f_2, a)$, and thus
    \[\ssderiv(e, a) = \ssderiv(e_1, a) + \ssderiv(e_2, a) \simeq \ssderiv(f_1, a) + \ssderiv(f_2, a) = \ssderiv(f, a)\]
    \item If $e \simeq f$ because $e = e_1 \cdot e_2$ and $f = f_1 \cdot f_2$ such that $e_1 \simeq f_1$ and $e_2 \simeq f_2$, then by induction we know that $\ssderiv(e_1, a) \simeq \ssderiv(f_1, a)$ and $\ssderiv(e_2, a) \simeq \ssderiv(f_2, a)$, and thus
    \[\ssderiv(e, a) = \ssderiv(e_1, a) \cdot e_2 + e_1 \star \ssderiv(e_2, a) \simeq \ssderiv(f_1, a) \cdot f_2 + f_1 \star \ssderiv(f_2, a) = \ssderiv(f, a)\]
    where we also apply Lemma~\ref{lemma:congruent-states-halt}.
    \item If $e \simeq f$ because $e = e_1 \parallel e_2$ and $f = f_1 \parallel f_2$ such that $e_1 \simeq f_1$ and $e_2 \simeq f_2$, then by induction we know that $\ssderiv(e_1, a) \simeq \ssderiv(f_1, a)$ and $\ssderiv(e_2, a) \simeq \ssderiv(f_2, a)$, and thus
    \[\ssderiv(e, a) = e_2 \star \ssderiv(e_1, a) + e_1 \star \ssderiv(e_2, a) \simeq f_2 \star \ssderiv(f_1, a) + f_1 \star \ssderiv(f_2, a) = \ssderiv(f, a)\]
    where we also apply Lemma~\ref{lemma:congruent-states-halt}.
    \item If $e \simeq f$ because $e = e_1^*$ and $f = f_1^*$ such that $e_1 \simeq f_1$, then by induction we know that $\ssderiv(e_1, a) \simeq \ssderiv(f_1, a)$, and thus $\ssderiv(e, a) = \ssderiv(e_1, a) \cdot e_1^* \simeq \ssderiv(f_1, a) \cdot f_1^* = \ssderiv(f, a)$.
\end{itemize}

The proof of the second claim also proceeds by induction on the construction of $\simeq$.
In the base, there are nine cases to consider.
\begin{itemize}
    \item If $e = f$, then $\psderiv(e, \phi) = \psderiv(f, \phi) \simeq \psderiv(f, \phi)$.
    \item If $e = f + 0$, then $\psderiv(e, \phi) = \psderiv(f, \phi) + \psderiv(0, \phi) = \psderiv(f, \phi) + 0 \simeq \psderiv(f, \phi)$.
    \item If $e = f + f$, then $\psderiv(e, \phi) = \psderiv(f, \phi) + \psderiv(f, \phi) \simeq \psderiv(f, \phi)$.
    \item If $e = g_1 + g_2$ and $f = g_2 + g_1$, then
    \[\psderiv(e, \phi) = \psderiv(g_1, \phi) + \psderiv(g_2, \phi) \simeq \psderiv(g_2, \phi) + \psderiv(g_1, \phi) = \psderiv(f, \phi)\]
    \item If $e = g_1 + (g_2 + g_3)$ and $f = (g_1 + g_2) + g_3$, then
    \[\psderiv(e, \phi) = \psderiv(g_1, \phi) + (\psderiv(g_2, \phi) + \psderiv(g_3, \phi)) \simeq (\psderiv(g_1, \phi) + \psderiv(g_2, \phi)) + \psderiv(g_3, \phi) = \psderiv(f, \phi)\]
    \item If $e = 0 \cdot e_1$ and $f = 0$, then $\psderiv(e, \phi) = \psderiv(0, \phi) \cdot e_1 + 0 = 0 \cdot e_1 + 0 \simeq 0 = \psderiv(0, \phi)$.
    \item If $e = e_1 \cdot 0$ and $f = 0$, then $\psderiv(e, \phi) = \psderiv(e_1, \phi) \cdot 0 + e_1 \star \psderiv(0, \phi) \simeq 0 = \psderiv(0, \phi)$.
    \item If $e = 0 \parallel e_1$ and $f = 0$, then since $g, h \not\simeq 0$, in particular $\phi \not\simeq \mset{e_1, 0}$.
    We then find that $\psderiv(e, \phi) = 0 + e_1 \star \psderiv(0, \phi) + 0 \simeq 0 = \psderiv(0, \phi)$.
    \item If $e = e_1 \parallel 0$ and $f = 0$, then since $g, h \not\simeq 0$, in particular $\phi \not\simeq \mset{0, e_1}$.
    We then find that $\psderiv(e, \phi) = 0 + 0 + e_1 \star \psderiv(0, \phi) \simeq 0 = \psderiv(0, \phi)$.
\end{itemize}
For the inductive step, there are six cases to consider.
\begin{itemize}
    \item If $e \simeq f$ because $f \simeq e$, then by induction we know that $\psderiv(f, \phi) \simeq \psderiv(e, \phi)$, thus $\psderiv(e, \phi) \simeq \psderiv(f, \phi)$.
    \item If $e \simeq f$ because $e \simeq g$ and $g \simeq f$, then by induction we know that $\psderiv(e, \phi) \simeq \psderiv(g, \phi)$ and $\psderiv(g, \phi) \simeq \psderiv(f, \phi)$, thus $\psderiv(e, \phi) \simeq \psderiv(f, \phi)$.
    \item If $e \simeq f$ because $e = e_1 + e_2$ and $f = f_1 + f_2$ such that $e_1 \simeq f_1$ and $e_2 \simeq f_2$, then by induction we know that $\psderiv(e_1, \phi) \simeq \psderiv(f_1, \phi)$ and $\psderiv(e_2, \phi) \simeq \psderiv(f_2, \phi)$, and thus
    \[\psderiv(e, \phi) = \psderiv(e_1, \phi) + \psderiv(e_2, \phi) \simeq \psderiv(f_1, \phi) + \psderiv(f_2, \phi) = \psderiv(f, \phi)\]
    \item If $e \simeq f$ because $e = e_1 \cdot e_2$ and $f = f_1 \cdot f_2$ such that $e_1 \simeq f_1$ and $e_2 \simeq f_2$, then by induction we know that $\psderiv(e_1, \phi) \simeq \psderiv(f_1, \phi)$ and $\psderiv(e_2, \phi) \simeq \psderiv(f_2, \phi)$, and thus
    \[\psderiv(e, \phi) = \psderiv(e_1, \phi) \cdot e_2 + e_1 \star \psderiv(e_2, \phi) \simeq \psderiv(f_1, \phi) \cdot f_2 + f_1 \star \psderiv(f_2, \phi) = \psderiv(f, \phi)\]
    where we also apply Lemma~\ref{lemma:congruent-states-halt}.
    \item If $e \simeq f$ because $e = e_1 \parallel e_2$ and $f = f_1 \parallel f_2$ such that $e_1 \simeq f_1$ and $e_2 \simeq f_2$, then by induction we know that $\psderiv(e_1, \phi) \simeq \psderiv(f_1, \phi)$ and $\psderiv(e_2, \phi) \simeq \psderiv(f_2, \phi)$.
    Furthermore, $\phi \simeq \mset{e_1, e_2}$ if and only if $\phi \simeq \mset{f_1, f_2}$, and thus
    \begin{align*}
    \psderiv(e, \phi)
        &= [\mset{e_1, e_2} \simeq \phi] + e_2 \star \psderiv(e_1, \phi) + e_1 \star \psderiv(e_2, \phi)  \\
        &\simeq [\mset{f_1, f_2} \simeq \phi] + f_2 \star \psderiv(f_1, \phi) + f_1 \star \psderiv(f_2, \phi) \\
        &= \psderiv(f, \phi)
    \end{align*}
    where we also apply Lemma~\ref{lemma:congruent-states-halt}.
    \item If $e \simeq f$ because $e = e_1^*$ and $f = f_1^*$ such that $e_1 \simeq f_1$, then by induction we know that $\psderiv(e_1, \phi) \simeq \psderiv(f_1, \phi)$, and thus $\psderiv(e, \phi) = \psderiv(e_1, \phi) \cdot e_1^* \simeq \psderiv(f_1, \phi) \cdot f_1^* = \psderiv(f, \phi)$.
\end{itemize}

The last claim is shown by induction on $e$.
In the base, $e = 0$, $e = 1$ or $e = a$, and thus $\psderiv(e, \phi) = 0 = \psderiv(e, \psi)$.
For the inductive step, there are four cases to consider.
\begin{itemize}
    \item If $e = e_1 + e_2$, then $\psderiv(e, \phi) = \psderiv(e_1, \phi) + \psderiv(e_2, \phi) = \psderiv(e_1, \psi) + \psderiv(e_2, \psi) = \psderiv(e, \psi)$.
    \item If $e = e_1 \cdot e_2$, then $\psderiv(e, \phi) = \psderiv(e_1, \phi) \cdot e_2 + e_1 \star \psderiv(e_2, \phi) = \psderiv(e_1, \psi) \cdot e_2 + e_1 \star \psderiv(e_2, \psi) = \psderiv(e, \psi)$.
    \item If $e = e_1 \parallel e_2$, then $\psderiv(e, \phi) = [\mset{e_1, e_2} \simeq \phi] + e_2 \star \psderiv(e_1, \phi) + e_1 \star \psderiv(e_2, \phi) = [\mset{e_1, e_2} \simeq \psi] + e_2 \star \psderiv(e_1, \psi) + e_1 \star \psderiv(e_2, \psi)$.
    \item If $e = e_1^*$, then $\psderiv(e, \phi) = \psderiv(e_1, \phi) \cdot e_1^* = \psderiv(e_1, \psi) \cdot e_1^* = \psderiv(e, \psi)$. \qedhere
\end{itemize}
\end{proof}

\syntracevsquotsyntrace*
\begin{proof}
The proof for the first claim proceeds by induction on $U$.
In the base, $U = a \in \Sigma$.
If $e \satrace{U} f$, then $f = \ssderiv(e, a)$.
But then, since $\qssderiv([e], a) = [f]$ by definition of $\qssderiv$, we find that $[e] \qsatrace{U} [f]$.

For the inductive step, there are two cases to consider.
\begin{itemize}
    \item If $U = V \cdot W$, with $V$ and $W$ smaller than $U$, then there exists a $g \in \terms$ such that $e \satrace{V} g$ and $g \satrace{W} f$.
    By induction we find that $[e] \qsatrace{V} [g]$ and $[g] \qsatrace{W} [f]$, and thus $[e] \qsatrace{U} [f]$.
    \item If $U = V \parallel W$, with $V$ and $W$ smaller than $U$, then there exist $g, h \in \terms$ and $g', h' \in \sacc$ such that $g \satrace{V} g'$ and $h \satrace{W} h'$, and $f = \psderiv(e, \mset{g, h})$.
    Note that, by Lemma~\ref{lemma:trace-deconstruct-base}, $g, h \not\simeq 0$.
    By induction, we find $[g] \qsatrace{V} [g']$ and $[h] \qsatrace{W} [h']$, with $[g'], [h'] \in \qsacc$.
    Since $\qpsderiv([e], \mset{[g], [h]}) = [\psderiv(e, \mset{g, h})] = [f]$, we find that $[e] \qsatrace{U} [f]$.
\end{itemize}

The proof of the second claim also proceeds by induction on $U$.
In the base, $U = a \in \Sigma$.
If $[e] \satrace{U} [f]$, then $[f] = \qssderiv([e], a)$.
We choose $f' = \ssderiv(e, a)$ to find that $e \satrace{U} f$ and $f \simeq f'$ by Lemma~\ref{lemma:congruent-states-derive}.

For the inductive step, there are two cases to consider.
\begin{itemize}
    \item If $U = V \cdot W$, with $V$ and $W$ smaller than $U$, then there exists a $g \in \terms$ such that $[e] \qsatrace{V} [g]$ and $[g] \qsatrace{W} [f]$.
    By induction, we find $g', f'' \in \terms$ such that $e \satrace{V} g'$ and $g \simeq g'$ as well as $g \satrace{V} f''$ and $f \simeq f''$.
    By Lemma~\ref{lemma:congruent-states-progress}, we find $f' \in \terms$ such that $g' \satrace{V} f'$ and $f'' \simeq f$.
    In total, we have that $e \satrace{U} f'$ with $f \simeq f'$.
    \item If $U = V \parallel W$, with $V$ and $W$ smaller than $U$, then there exist $[g], [h] \in \qterms$ and $[g'], [h'] \in \qsacc$ such that $[g] \qsatrace{V} [g']$ and $[h] \qsatrace{W} [h']$, and $[f] = \qpsderiv([e], \mset{[g], [h]})$.
    By induction, we find that $h'', g'' \in \terms$ such that $g \satrace{V} g''$ and $h \satrace{W} h''$, with $g' \simeq g''$ and $h' \simeq h''$, with $g'', h'' \in \sacc$ by Lemma~\ref{lemma:congruent-states-halt}.
    But then $g, h \not\simeq 0$ by Lemma~\ref{lemma:trace-deconstruct-base}.
    We can choose $f' = \psderiv(e, \mset{g, h})$ to find that that $e \satrace{U} f'$ and $f \simeq f'$ by Lemma~\ref{lemma:congruent-states-derive}. \qedhere
\end{itemize}
\end{proof}

\depthvscongruence*
\begin{arxivproof}
First, if $e \simeq 0 \simeq e'$, then $\depth{e} = 0 = \depth{e'}$ by definition.
The proof for the remaining cases proceeds by induction on the construction of $\simeq$.
In the base, there are nine cases to consider.
\begin{itemize}
    \item If $e = e'$, then the claim holds immediately.
    \item If $e = e' + 0$, then $\depth{e} = \max(\depth{e'}, \depth{0}) = \depth{e'}$.
    \item If $e = e' + e'$, then $\depth{e} = \max(\depth{e'}, \depth{e'}) = \depth{e'}$.
    \item If $e = e_1 + e_2$ and $e' = e_2 + e_1$, then $\depth{e} = \max(\depth{e_1}, \depth{e_2}) = \max(\depth{e_2}, \depth{e_1}) = \depth{e'}$.
    \item If $e = e_1 + (e_2 + e_3)$ and $e' = (e_1 + e_2) + e_3$, then $\depth{e} = \max(\depth{e_1}, \depth{e_2}, \depth{e_3}) = \depth{e'}$.
\end{itemize}
For the inductive step, there are six cases to consider.
\begin{itemize}
    \item If $e \simeq e'$ because $e' \simeq e$, then $\depth{e'} = \depth{e}$ by induction and so the claim follows.
    \item If $e \simeq e'$ because $e \simeq e'' \simeq e'$, then $\depth{e} = \depth{e''} = \depth{e'}$ by induction and so $\depth{e} = \depth{e'}$.
    \item If $e \simeq e'$ because $e = e_1 + e_2$ and $e' = e_1' + e_2'$ with $e_1 \simeq e_1'$ and $e_2 \simeq e_2'$, then $\depth{e_1} = \depth{e_1'}$ and $\depth{e_2} = \depth{e_2'}$ by induction, thus $\depth{e} = \max(\depth{e_1}, \depth{e_2}) = \max(\depth{e_1'}, \depth{e_2'}) = \depth{e'}$.
    \item If $e \simeq e'$ because $e = e_1 \cdot e_2$ and $e' = e_1' \cdot e_2'$ with $e_1 \simeq e_1'$ and $e_2 \simeq e_2'$, then $\depth{e_1} = \depth{e_1'}$ and $\depth{e_2} = \depth{e_2'}$ by induction, thus $\depth{e} = \max(\depth{e_1}, \depth{e_2}) = \max(\depth{e_1'}, \depth{e_2'}) = \depth{e'}$.
    \item If $e \simeq e'$ because $e = e_1 \parallel e_2$ and $e' = e_1' \parallel e_2'$ with $e_1 \simeq e_1'$ and $e_2 \simeq e_2'$, then $\depth{e_1} = \depth{e_1'}$ and $\depth{e_2} = \depth{e_2'}$ by induction, thus $\depth{e} = \max(\depth{e_1}, \depth{e_2}) + 1 = \max(\depth{e_1'}, \depth{e_2'}) + 1 = \depth{f}$.
    \item If $e \simeq e'$ because $e = e_1^*$ and $e' = e_1'^*$, then by induction $\depth{e} = \depth{e_1} = \depth{e_1'} = \depth{e'}$. \qedhere
\end{itemize}
\end{arxivproof}

\fi%

\end{document}

%%% Local Variables:
%%% mode: latex
%%% TeX-master: t
%%% End: